\documentclass{article}
\usepackage{amsmath}  
\usepackage{amssymb}  
\usepackage{amsfonts} 
\usepackage{amsthm}   
\usepackage{bm}       
\usepackage{mathtools} 
\usepackage{cancel}
\usepackage{comment}
\usepackage{hyperref}
\usepackage[dvipsnames]{xcolor}
\usepackage{enumitem}
\usepackage{booktabs}
\usepackage{multirow}
\usepackage{adjustbox}
\usepackage{tikz}
\usepackage{amsmath}
\usetikzlibrary{patterns, decorations.pathreplacing}
\usepackage{array}
\usepackage{makecell}
\usepackage{natbib}
\usepackage{bbm}
\usepackage{dsfont}
\usepackage{authblk} 
\usepackage{abstract}

\usepackage[a4paper,margin=1in]{geometry}

\newcommand{\E}{\mathbb{E}}
\newcommand{\V}{\mathbb{V}}
\newcommand{\C}{\mathbb{C}\mathrm{ov}}
\newcommand{\black}{\textcolor{black}}

\newcommand{\tas}{\xrightarrow{a.s.}}
\newcommand{\Prob}{\mathbb{P}}
\newcommand{\R}{\mathbb{R}}
\newcommand{\tr}{\operatorname{tr}}

\newtheorem{theorem}{Theorem}
\newtheorem{lemma}{Lemma}

\newtheorem{proposition}{Proposition}
\newtheorem{assumption}{Assumption}

\theoremstyle{definition}

\newtheorem{remark}{Remark}

\makeatletter
\def\@biblabel#1{}
\makeatother

\newcommand{\tl}{\xrightarrow{\hspace{0.1cm}\mathcal{L}\hspace{0.1cm}}}
\newcommand{\tp}{\xrightarrow{\hspace{0.1cm}\Prob\hspace{0.1cm}}}

\date{}

\title{Testing the equality of estimable parameters}
\begin{document}
\author[]{M. Romero-Madroñal\thanks{Corresponding author. Email address: \texttt{mrmadronal@us.es}}}
\author[]{M. Remedios Sillero-Denamiel}
\author[]{M. Dolores Jiménez-Gamero}

\affil[]{Department of Statistics and Operations Research, Universidad de Sevilla, Spain \\ Instituto de Matem\'aticas de la Universidad de Sevilla, Spain}

\maketitle

\maketitle
\begin{abstract}
\normalsize
This paper proposes a general and unified framework for testing the equality of a broad class of parameters, defined via $U$-statistics, across multiple independent populations. This approach encompasses various common statistical problems, such as comparing variances, correlation coefficients, or Gini indices, among many others. We consider two test statistics, a Wald-type statistic and an ANOVA-type statistic. The asymptotic distribution of the first one is derived under a fixed-dimension regime, whereas the second one is studied under both fixed and increasing-dimension regimes, where the parameter dimension diverges with the sample size. Based on these limiting distributions, we construct test procedures enabling asymptotically exact inference without parametric assumptions. Additionally, an alternative null distribution estimator based on a weighted bootstrap approximation 
is studied, \black{which is applicable to the ANOVA-type statistic  under a  fixed-dimension regime}. The finite-sample performance and computational efficiency of the proposed procedures are evaluated through an extensive simulation study. Finally, an application to a real dataset illustrates the usefulness of the proposed methodology.
\end{abstract}

\normalsize
\noindent\textbf{Keywords:} Non-parametric 
testing $\cdot$ Multivariate inference $\cdot$ $U$-statistics $\cdot$ Increasing dimension

\section{Introduction}

Comparing a parameter across multiple populations is a long-standing problem in statistics. Classical methods such as one-way ANOVA \citep{Fisher1925} and Bartlett’s test \citep{Bartlett1937} were developed to test for the equality of means and variances, respectively, assuming normality. While foundational, these procedures are designed for specific parameters.

In modern statistical practice, researchers are increasingly concerned with comparing a wide range of parameters across multiple populations. These parameters often arise in applied contexts: the Gini index is used in economics to assess income inequality \citep{Davidson2009} and correlation coefficients are central in psychology \citep{OlkinFinn1995, Paul1989}. In addition, other functionals such as quantiles \citep{Ditzhaus2021}, correlation matrices \citep{Jennrich1970}, and both univariate and multivariate coefficients of variation \citep{Bhoj1993, Ditzhaus2025} have also been subject to formal inference procedures.

Existing testing procedures for these parameters tend to be ad hoc and tailored to specific cases, lacking a general framework. Many of them can be expressed as functions of $U$-statistics, a broad class of statistics with nice asymptotic properties and wide applicability \citep{Hoeffding1948}. This observation opens the door to a unified treatment of the problem of testing equality of parameters across populations.

In this paper, assuming that independent samples are available from each population, the problem of testing the null hypothesis \( H_0: \theta_1 = \dots = \theta_k \) is considered, where each \( \theta_i = f(\eta_i) \in \mathbb{R}^d \) (\( d \geq 1 \)),  \( \eta_i \in \mathbb{R}^q \) (\( q \geq 1 \)) is a vector of expectations of symmetric kernels of possibly different degrees, \black{and $f$ is differentiable in a neighborhood of $ \eta_i $}. Based on the asymptotic normality of $U$-statistics and the Delta Method, the asymptotic normality of each \( \theta_i \) is established. Although the exact asymptotic covariance matrices are typically unknown, consistent estimators for these matrices can be obtained for each \( \theta_i \) using the jackknife method, as originally proposed by \cite{Arvesen1969} \( 1 \leq i \leq k \). 

Leveraging these covariance estimators, test statistics in the form of quadratic forms are constructed to assess the equality of the transformed parameters $\theta_i$ across populations. In particular, a Wald-type statistic (WTS) and an ANOVA-type statistic (ATS) are considered. While the WTS has an asymptotic chi-squared distribution under the null hypothesis, its performance deteriorates when $d$ is large. For the ATS, under fixed $d$ and standard regularity conditions, the null distribution converges to a non-pivotal limit given by a weighted sum of chi-squared random variables, whose weights can be consistently estimated. 
A weighted bootstrap approach to approximate the null distribution of the ATS is also considered. 

The methodology developed in this paper encompasses many of the aforementioned procedures with some exceptions as those based on quantiles, as they
cannot be expressed as smooth functions of expectations of  kernels. Furthermore, while many of the existing approaches are restricted to the comparison of two populations, the methodology developed here allows for testing equality across an arbitrary number \( k \geq 2 \) of populations, offering a general and unified approach for a broad class of parameters.  

Beyond the classical setting where $k$ and $d$ are fixed while the sample sizes increase, the case where the number of populations $k$ tends to infinity has been studied in the literature for univariate parameters estimated by $U$-statistics \citep{Romero-Madronal2025, JimenezGamero2025}. Here, the comparison of the parameter vectors $\theta_1,\dots,\theta_k$ is  studied when $d$ is allowed to increase with the sample sizes. Related questions have been investigated in the statistical literature for specific scenarios: several works address the equality of high-dimensional mean vectors for $k=2$ \citep{Bai1996, Chen2010, Huang2022}, MANOVA-type procedures for $k\geq 2$ have been developed as the dimension increases \citep{Schott2007, Srivastava2013, Hu2017}, and numerous contributions consider equality of covariance matrices in increasing-dimension settings (see, e.g., \citealp{Li2012, Zheng2020}). A common feature of these approaches is that, allowing $d$ to grow at rates comparable to or faster than the sample size $n$, typically requires imposing strong structural assumptions such as factorial designs in covariance testing. In this work, the regime $d/n \to 0$ is adopted, which, while implying slower growth of the dimension, makes it possible to avoid these assumptions. In this setting, and under mild regularity conditions, the ATS converges to a normal distribution under $H_0$, providing a valid approximation for testing the equality of parameter vectors whose dimension increases with the sample size.

The \black{remainder} of the paper is organized as follows. Section \ref{section:framework} introduces the statistical framework, defining the estimators based on $U$-statistics, describing their asymptotic properties, detailing their asymptotic variance estimation via the jackknife method and formulating the hypothesis of parameter equality. Section \ref{section:test_stat} presents the proposed test statistics: a WTS and an ATS. The asymptotic distribution of the WTS is derived under a fixed-dimension regime, whereas the ATS is studied under both fixed and increasing-dimension regimes. Additionally, a test based on a weighted bootstrap is proposed. Section \ref{numerical:experiments} describes some numerical experiments. Subsection \ref{sec:simulations} summarizes the outputs of an extensive simulation study designed to evaluate the finite-sample performance of the proposed procedures. A detailed description of the simulation study and the results obtained can be found in the Supplementary Material (SM).  
Subsection \ref{section:application} illustrates the methodology through an application to real data, and further information is collected in the SM. \black{In the light of numerical results in the previous subsections,  Subsection \ref{practical:guidelines} provides some practical
guidelines.} Finally, Section \ref{section:conclusions} summarizes the conclusions and outlines potential lines for future work. Technical proofs are presented in Section \ref{section:proofs}. \black{Some   previous results on $U$-statistics are  provided in the SM.}

\section{Statistical framework}
\label{section:framework}

\subsection{Statistical formulation}
Let \( X_{1}, \dots, X_{k} \) be independent random elements defined on a common probability space \( (\Omega, \mathcal{F}, \Prob) \), taking values in a measure space \( (E, \mathcal{E}) \), where $E$ is an arbitrary space and $\mathcal{E}$ is a $\sigma$-algebra on $E$. Let \( k \in \mathbb{N}, k \geq 2 \) and consider 
\begin{equation}
\begin{aligned}
    &\bm{X}_i = \{X_{i1}, \dots, X_{in_i} \} \text{ is a random sample of size } n_i \text{ from the distribution of } X_i, 1\leq i \leq k, \\
    &\text{and the samples } \bm{X}_1, \ldots, \bm{X}_k \text{ are independent.}
\end{aligned}
\label{eq:independent_sample}
\end{equation}
Let \( \eta_i \in \mathbb{R}^q \), with some fixed $q\geq 1$, be defined as  
\[
\eta_i=\E\{h(X_{i1},\ldots,X_{im})\}= \E\left[h^{(1)}(X_{i1},\ldots, X_{im_1}),\ldots, h^{(q)}(X_{i1},\ldots, X_{im_q})\right]^{\top}:=\left[\eta_i^{(1)},\ldots,\eta_i^{(q)}\right]^{\top},\]  
where \( m_1, \ldots, m_q \geq 1 \) and  $m=\max\{m_1,\ldots,m_q\}$,
$1\leq i \leq k$, with \( h^{(1)},\ldots,h^{(q)}: E^m \to \mathbb{R} \) being symmetric kernels of degree $m_1,\ldots,m_q$ respectively.  
Given the definition of the parameter \( \eta_i \),  
we estimate it using the vector of \( U \)-statistics  $\widehat{\eta}_i = \left[\widehat{\eta_i}^{(1)},\ldots,\widehat{\eta_i}^{(q)}\right]^{\top}$, where 
\[\widehat{\eta_i}^{(r)}=\frac{1}{n_i(m_r)}\sum_{j_1 \neq \cdots \neq j_{m_r}} h^{(r)}(X_{ij_1}, \ldots, X_{ij_{m_r}}), \quad r=1,\ldots, q,\; i = 1, \dots, k,\]
with \( n_i(s) = n_i(n_i-1) \cdots (n_i-s+1) \). By construction, $\widehat{\eta}_i$ is an unbiased estimator of $\eta_i$ for each $i = 1, \dots, k$. 
We then define the transformed parameters and their estimators as
$\theta_i = f(\eta_i)$, 
$\widehat{\theta}_i = f(\widehat{\eta}_i)$, 
$i = 1, \dots, k,$
where $f:\mathbb{R}^q \to \mathbb{R}^d$ is a given function. For convenience, we also denote
$\theta=\left[\theta_1^{\top},\ldots,\theta_k^{\top}\right]^{\top}$ and $\widehat{\theta}=\left[\widehat{\theta}_1^{\top},\ldots,\widehat{\theta}_k^{\top}\right]^{\top}$. 

Throughout this work, all asymptotic results are understood to hold as $n_i \to \infty$, $1\leq i \leq k$.

\subsection{Basic asymptotics}
We now summarize the basic asymptotic properties of the $U$-statistics defined above, 
which will allow us to derive approximate distributions for inference in the next sections.

First, denote the variance components of the kernels $h^{(r)}$ as  
\begin{equation*}  
    \zeta^{(r)}_{ic} = \V\{h^{(r)}_c(X_{i1},\ldots,X_{ic})\}, \quad  1\leq c\leq m_r,\; 1 \leq r \leq q, \; 1\leq i \leq k,  
\end{equation*}  
where $h_c^{(r)}(x_1,\ldots,x_c)$ is defined in \black{(S1) in the SM}  and $h_c^{(r)}(x_1,\ldots,x_c)=0$ if $c>m$. Likewise, define
\[
h_c(x_1,\ldots,x_c)=\left[h_c^{(1)}(x_1,\ldots,x_c),\ldots, h_c^{(q)}(x_1,\ldots,x_c)\right]^{\black \top}, \quad 1 \leq c \leq m.
\]

We start by stating the Strong Law of Large Numbers (SLLN) for the $U$-statistics defined above. 
Assume that, for every $1\le i\le k$, the kernel satisfies the integrability condition  $\E\{\left|h(X_{i1},\ldots,X_{im})\right|\} < \infty,$
 then the SLLN for $U$-statistics applies (see e.g. Theorem A in Section 5.4 of \cite{serfling1980approximation}), i.e. $\widehat{\eta}_i \tas \eta_i$, 
 $1 \leq i \leq k$. Similarly, if in addition $f:\mathbb{R}^q \to \mathbb{R}^d$ is continuous at $\eta_i$, $1\leq i \leq k$, it follows from the Continuous Mapping Theorem that 
 \begin{equation}
     \widehat{\theta}_i = f(\widehat{\eta}_i) \tas f(\eta_i) = \theta_i, \quad 1 \leq i \leq k.
     \label{eq:SLLNtheta}
 \end{equation}

We now turn to the asymptotic distribution of the estimators. They admit a non-singular limiting normal distribution, as established in Theorem 7.1 of \cite{Hoeffding1948}, provided that the following conditions hold,
\begin{equation}
    \E\left\{|h(X_{i1}, \ldots, X_{im})|^2\right\} < \infty, \quad  1\leq i \leq k,
    \label{ass:finitenessh2}
\end{equation}
and
\begin{equation}
    \zeta_{i1}^{(r)} > 0, \quad 1\leq r\leq q,\; 1\leq i\leq k.
    \label{ass:bounzeta}
\end{equation}
Specifically,
\begin{equation}
    \sqrt{n_i}(\widehat{\eta}_i - \eta_i) \tl Z_i, \quad Z_i \sim \mathcal{N}_q(0, V_i), \quad 1\leq i \leq k,
    \label{eq:limitthetai}
\end{equation}
where 
\begin{equation}
    V_i=\V\{M_h h_1(X_{i1})\}=\E\left\{M_h \widetilde{h}_1(X_{i1})\widetilde{h}_1^{\top}(X_{i1})M_h^{\top}\right\},
    \label{eq:definitionvi}
\end{equation}
 $\widetilde{h}_1$ is defined in \black{(S2) in the SM} and
\begin{equation}
    M_h=\operatorname{diag}(m_1,\ldots,m_q).
    \label{eq:Mh}
\end{equation}

\noindent A similar result can be obtained for the joint estimator $\widehat{\eta}=\left[\widehat{\eta}_1^{\top},\ldots,\widehat{\eta}_k^{\top}\right]^{\top}$ of $\eta=\left[\eta_1^{\top},\ldots,\eta_k^{\top}\right]^{\top}$. It will be assumed
that the sample sizes are comparable in the sense that
\begin{equation}
    n_i/n \to \kappa_i \in (0,1), \quad 1\leq i \leq k,
    \label{eq:compsamp}
\end{equation}
 where $n=\sum_{i=1}^k n_i $, \black{ implying that $\sum_{i=1}^k \kappa_i=1$}. Since samples from different populations are independent, as assumed in \eqref{eq:independent_sample}, \eqref{eq:limitthetai} implies 
\begin{equation}
    \sqrt{n}(\widehat{\eta} - \eta) \tl Z, \quad Z  \sim \mathcal{N}_{kq}(0, V),
    \label{eq:assymptheta}
\end{equation}
with $V=\operatorname{diag} \left(V_1/\kappa_1,\ldots, V_k/\kappa_k\right)$, where $\operatorname{diag}(\cdot)$ denotes a block diagonal matrix. The previous result can be extended to each $\theta_i=f(\eta_i)$. If $f=\left[f^{(1)},\ldots,f^{(d)}\right]$ has nonzero differential at $\eta_i$, by applying the multivariate Delta Method to \eqref{eq:assymptheta} (see e.g. Theorem~3.3.A of \cite{serfling1980approximation}), we get that $\sqrt{n_i} \left( \widehat{\theta}_i - \theta_i \right) \tl Z_i,$ $ Z_i\sim \mathcal{N}_d\left( 0, \Sigma_i \right)$, where 
\begin{equation}
    \Sigma_i=D_i \, V_i \, D_i^\top=\E\left\{D_i \,M_h \widetilde{h}_1(X_{i1})\widetilde{h}_1^{\top}(X_{i1})M_h^{\top}D_i^\top\right\},
    \label{eq:definitionsigmai}
\end{equation}

and $D_i=\left(D_i^{(a,b)}\right)$,
with $D_i^{(a,b)}=f^{(a)}_b (\eta_i):=\frac{\partial f^{(a)}(x)}{\partial x_b}\Bigg|_{x=\eta_i}, \quad 1\leq a \leq d, \; 1 \leq b \leq q.$
Similarly, if \eqref{eq:compsamp} holds, we have a normal limiting distribution for the joint vector $\widehat{\theta}$ as follows
\begin{equation}
    \sqrt{n} \left( \widehat{\theta} - \theta \right) \tl Z, \quad Z\sim \mathcal{N}_{kd}\left( 0, \Sigma \right),
    \label{eq:assympftheta}
\end{equation}
 where \begin{equation}
 \Sigma=\operatorname{diag}\left(\Sigma_1/\kappa_1, \ldots, \Sigma_k/\kappa_k\right).
    \label{definition:sigma}
\end{equation}

\subsection{Variance estimation}
\label{subsec:varest}

In order to make inferences, we  need to estimate the covariance matrix \( \Sigma \) in   \eqref{definition:sigma}. In some particular cases, explicit expressions for \( \Sigma \) are available. Then, the variance can be estimated by plugging consistent estimators of the parameters involved. However, in common cases \( \Sigma \) does not have an easily computable expression.

As a general procedure for estimating \( \Sigma \), we use the jackknife method. Let \( T_{i,j}\) denote the jackknife pseudovalue associated with the $j$-th observation  of $\widehat{\theta}_i$, defined as
$ 
    T_{i,j} := n_i \widehat{\theta}_i - (n_i-1)\widehat{\theta}_{i,(-j)},
$ 
where \(\widehat{\theta}_{i,(-j)}\) denotes the estimator $\widehat{\theta}_i$ computed without the \(j\)-th observation. 
The following lemma is an extension of Theorem 9 in \cite{Arvesen1969}, which was originally stated for scalar parameters $\theta \in \mathbb{R}$ expressible as $\theta = f(\eta)$, with $\eta$ denoting a parameter that admits a $U$-statistic estimator. 
\begin{lemma}
Suppose that \eqref{eq:independent_sample}, \eqref{ass:finitenessh2}, \eqref{ass:bounzeta} and  \black{\eqref{eq:compsamp}} hold, and that each component of $f$ has continuous first partial derivatives in a neighborhood of $\eta_i$. 
Let \begin{equation}
    \widehat{\Sigma}_i=\frac{1}{n_i-1}\sum_{j=1}^{n_i}\left(T_{i,j}-\overline{T_i}\right)\left(T_{i,j}-\overline{T_i}\right)^{\top},
    \label{eq:WidehatSigma}
\end{equation}
where $\overline{T_i}=(1/n_i)\sum_{j=1}^{n_i} T_{i,j}$, $1 \leq i \leq k$. Let
$\widehat{\Sigma}=\operatorname{diag}\left(\widehat{\Sigma}_1/\widehat{\kappa}_1, \ldots, \widehat{\Sigma}_k/\widehat{\kappa}_k\right)$, with $\widehat{\kappa}_i=n_i/n$, then $\widehat{\Sigma} \tp \Sigma$.

\label{Lemma:varconsist}
\end{lemma}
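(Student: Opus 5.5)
Since $\widehat{\Sigma}$ and $\Sigma$ are block diagonal with finitely many blocks ($k$ fixed) and $\widehat{\kappa}_i = n_i/n \to \kappa_i>0$ by \eqref{eq:compsamp}, it suffices to show $\widehat{\Sigma}_i \tp \Sigma_i$ for each fixed $i$. The factor $1/\widehat{\kappa}_i$ is then handled by Slutsky's theorem. For fixed $i$, everything reduces to a single sample, so from now on I drop the index $i$ and write $n$ for $n_i$.

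The first step is a Cramér--Wold-type reduction that lets me reuse the scalar result of \cite{Arvesen1969}. A $d\times d$ symmetric matrix estimator converges in probability if and only if $a^\top \widehat{\Sigma} a \tp a^\top \Sigma a$ for every $a\in\mathbb{R}^d$. This follows by polarization, since the off-diagonal entries are recovered from quadratic forms with $a=e_r$, $e_s$ and $e_r+e_s$. Fix $a$ and set $g_a = a^\top f:\mathbb{R}^q\to\mathbb{R}$. This scalar function has continuous first partials near $\eta$, with gradient $a^\top D$.

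The jackknife pseudovalues are linear in the estimator. Hence the pseudovalues of $g_a(\widehat{\eta})$ are exactly $a^\top T_j$, and the jackknife variance of $g_a(\widehat\eta)$ equals $a^\top \widehat{\Sigma} a$. On the other side, $a^\top\Sigma a = (a^\top D) V (a^\top D)^\top$, which is precisely the Delta-method asymptotic variance of $g_a(\widehat\eta)$. Theorem 9 of \cite{Arvesen1969} is stated for a scalar smooth function of a vector of $U$-statistics under second moments and these differentiability conditions. Applying it to $g_a$ gives $a^\top\widehat\Sigma a \tp a^\top \Sigma a$, and polarization yields the matrix statement.

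If one prefers not to rely on Arvesen's vector-$U$-statistic version, the direct route is as follows.
\begin{enumerate}[label=(\roman*)]
\item Show that the jackknife covariance of the vector $\widehat\eta$ itself converges to $V$. The pseudovalues of a $U$-statistic are, up to $O_P(n^{-1})$ corrections, $M_h\widetilde h_1(X_j)$ plus an average of higher-order Hoeffding components. Their empirical covariance converges by the SLLN together with the $L^2$ bounds coming from \eqref{ass:finitenessh2}.
\item Apply a mean-value expansion
\[
f(\widehat\eta_{(-j)})-f(\widehat\eta)=D(\xi_j)\,(\widehat\eta_{(-j)}-\widehat\eta),
\]
with $\xi_j$ between $\widehat\eta_{(-j)}$ and $\widehat\eta$.
\item Control $\max_j|\xi_j-\eta|$.
\end{enumerate}

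The main obstacle in this direct route is step (iii), namely showing $\max_{j}\|\widehat\eta_{(-j)}-\eta\|\to 0$ almost surely. Only then does $\max_j\|D(\xi_j)-D\|\to0$ follow from continuity of the partials, and that uniform bound is what lets us replace $D(\xi_j)$ by $D$ inside the sum $\sum_j(T_j-\overline T)(T_j-\overline T)^\top$. I would obtain it from $\widehat\eta_{(-j)}-\widehat\eta = -(n-1)^{-1}(\text{pseudovalue}_j-\widehat\eta)$ together with $\max_j |\text{pseudovalue}_j|/n\to0$ a.s. The latter follows from the finite second moments, via $\max_j|Y_j|/\sqrt n\to0$ for i.i.d.\ square-integrable $Y_j$, applied to the projection terms. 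The remaining error terms are then $o_P(1)$ by Cauchy--Schwarz.
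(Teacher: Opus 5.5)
Your proof is correct, and its main route is genuinely different from the paper's. The paper never passes to scalar functions. It fixes an entry $(a,b)$, applies the mean value theorem to $f^{(a)}$ and $f^{(b)}$, and splits $\widehat{\sigma}_{ab}$ into the five terms \eqref{eq:type1}--\eqref{eq:type5}. It identifies the limit of the leading term through Arvesen's representation \eqref{eq:type1dev} of the jackknife cross-products $Y_{l,t}$ as combinations of $U$-statistics $U_c^{(l,t)}$, together with the SLLN. It then disposes of the other four terms with the uniform bound \eqref{eq:maxtau-eta} coming from Lemma \ref{lemma:etaj-eta}. In effect it re-runs the scalar argument entrywise, including the off-diagonal entries.

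Your observation that the pseudovalues of $a^\top f(\widehat{\eta})$ are exactly $a^\top T_{j}$, combined with polarization over the finitely many vectors $e_r$, $e_r+e_s$, makes the vector-valued extension an immediate corollary of the scalar theorem. This requires that Theorem 9 of Arvesen treats a real function of several $U$-statistics of possibly different degrees. The paper's use of his equation (36) for cross-products of kernels of degrees $m_l\neq m_t$ indicates that it does.

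Your route is shorter and cleaner. What the paper's longer route buys is reusability: the explicit decomposition and \eqref{eq:maxtau-eta} are invoked again in Lemma \ref{lemma:widehatsigma-sigma}. There they sharpen consistency to the entrywise rate $O_{\Prob}(n^{-1/2})$ needed in Proposition \ref{prop:RatioConsistentHD}, which a black-box appeal to Arvesen cannot deliver.

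Your fallback (i)--(iii) has essentially the paper's structure, with the Hoeffding decomposition playing the role of equation (36). Two small repairs are needed there. First, for vector-valued $f$ the mean value step requires a separate intermediate point for each component (or apply it to the scalar $a^\top f$). Second, the almost sure claim in (iii) is not justified as written, because the pseudovalues are not i.i.d.\ and you only control their projection parts. Convergence in probability suffices, however, and it follows from $\widehat{\eta}_{(-j)}-\widehat{\eta}=m(\widehat{\eta}-Y_j)/(n-m)$, with $Y_j$ the conditional $U$-statistic of Lemma \ref{lemma:etaj-eta}, together with a union bound and Chebyshev's inequality.
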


\subsection{Hypothesis}
\label{sub:hypothesis}
We are interested in testing the equality of the parameters $\theta_1,\ldots,\theta_k$ across all populations, that is
\begin{equation*}
H_0: \theta_1 = \cdots = \theta_k \; \text{ vs } \; H_1: \exists \; i \neq j \, \text{ such that } \, \theta_i \neq \theta_j.
\end{equation*}

Let \(I_d\) be the \(d \times d\) identity matrix, \(\mathds{1}_k\) the \(k\)-dimensional vector of ones, and $P_k = I_k - (1/k) \mathds{1}_k \mathds{1}_k^\top$. With this notation, the null hypothesis \(H_0\) can be equivalently written as $H_0:\; C\theta = 0,$
where \(C \in \mathbb{R}^{kd \times kd}\)  \black{(see the SM for a detailed derivation)}. We choose \(C=P_k \otimes I_d\), where $\otimes$ denotes the Kronecker product. This matrix \(C\) satisfies  \(\operatorname{rank}(C) = (k-1)d\). Associated with this matrix, we consider the unique projection matrix $H \in \mathbb{R}^{kd \times kd}$ given by 
\begin{equation}
    H = C^{\top}\left(C C^{\top}\right)^{-}C,
    \label{eq:H}
\end{equation}
where $A^{-}$ denotes the Moore-Penrose inverse of matrix $A$ (see e.g. \citet[p.~24]{graybill1976linear}). The projection matrix is unique and satisfies \( H = H^2 \), \( H = H^{\top} \), $\operatorname{rank}(H)=\operatorname{rank}(C)=(k-1)d$ and \( H \theta = 0 \) under the null. Note that with this particular choice of \(C\), we have that \(H = C\).

\section{Test statistics}
\label{section:test_stat}
In this section, we introduce test statistics for the hypothesis considered in Section \ref{sub:hypothesis}. 
These statistics are based on the estimators $\widehat{\theta}_i$ and their estimated joint covariance matrix \black{$\widehat{\Sigma}/n$}. In what follows, we describe how they can be used for inference, either via their asymptotic distributions or other resampling approaches such as the bootstrap, under suitable regularity conditions.

\subsection{Wald-type statistic}
We consider the following WTS defined as $T_n = n \widehat{\theta}^{\top} C^{\top} \left(C \widehat{\Sigma} C^{\top}\right)^{-} C \widehat{\theta}.$ This statistic is widely used in multivariate inference, as it provides a classical approach to testing general linear hypotheses about the underlying parameters. In particular, it is asymptotically distribution-free under suitable regularity conditions, as shown in the next proposition.
\begin{proposition}
Suppose that $C\Sigma C^{\top}$ has rank $(k-1)d$ and that the assumptions in Lemma \ref{Lemma:varconsist} hold. Then, under $H_0$, it follows that $T_n \tl U_{(k-1)d},$ where $U_{(k-1)d}$ has a $\chi^2_{(k-1)d}$ distribution.
\label{Proposition:WTS}
\end{proposition}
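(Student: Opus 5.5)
Under $H_0$ we have $C\theta=0$, so $C\widehat{\theta}=C(\widehat{\theta}-\theta)$. Hence
\[
T_n=\left\{\sqrt{n}\,C(\widehat{\theta}-\theta)\right\}^{\top}\left(C\widehat{\Sigma}C^{\top}\right)^{-}\left\{\sqrt{n}\,C(\widehat{\theta}-\theta)\right\}.
\]
The plan is to combine the joint asymptotic normality \eqref{eq:assympftheta} with the consistency result of Lemma~\ref{Lemma:varconsist}, handling the Moore--Penrose inverse with care. The assumptions of Lemma~\ref{Lemma:varconsist} include \eqref{eq:finitenessh2}, \eqref{ass:bounzeta}, \eqref{eq:compsamp} and continuous differentiability of $f$. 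Provided the differential of $f$ is nonzero at each $\eta_i$, as in the Delta Method statement above, these give \eqref{eq:assympftheta}. By continuous mapping, $\sqrt{n}\,C(\widehat{\theta}-\theta)\tl CZ\sim\mathcal{N}_{kd}(0,C\Sigma C^{\top})$.

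The main obstacle is the generalized inverse, which is not continuous in general: $A_n\to A$ does not imply $A_n^{-}\to A^{-}$. It is continuous, however, along sequences whose rank stays equal to the rank of the limit. Here $C\Sigma C^{\top}$ has rank $(k-1)d$, which is the maximal possible rank since $\operatorname{rank}(C)=(k-1)d$. By lower semicontinuity of rank and $C\widehat{\Sigma}C^{\top}\tp C\Sigma C^{\top}$, with probability tending to one $\operatorname{rank}(C\widehat{\Sigma}C^{\top})\ge (k-1)d$; since it can never exceed $(k-1)d$, it equals $(k-1)d$ with probability tending to one.

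On that event, all matrices involved share the same column space, namely the range of $C$, because $C$ is a projection and the ranks coincide. Writing $C=QQ^{\top}$ with $Q\in\mathbb{R}^{kd\times(k-1)d}$ having orthonormal columns, we get $(C\widehat{\Sigma}C^{\top})^{-}=Q(Q^{\top}\widehat{\Sigma}Q)^{-1}Q^{\top}$, and similarly for $\Sigma$. Here $Q^{\top}\Sigma Q$ is invertible, and matrix inversion is continuous at invertible matrices. It follows that $(C\widehat{\Sigma}C^{\top})^{-}\tp(C\Sigma C^{\top})^{-}$. Equivalently, one can cite the standard fact that the Moore--Penrose inverse is continuous on sets of constant rank.

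Slutsky's lemma and the continuous mapping theorem then give $T_n\tl (CZ)^{\top}(C\Sigma C^{\top})^{-}CZ$. To identify this limit, let $W=CZ\sim\mathcal{N}(0,\Lambda)$ with $\Lambda=C\Sigma C^{\top}$. The quadratic form $W^{\top}\Lambda^{-}W$ is $\chi^2_{\operatorname{rank}(\Lambda)}$. To see this, write $\Lambda=\Gamma\,\mathrm{diag}(\lambda_1,\ldots,\lambda_r,0,\ldots)\Gamma^{\top}$ with $r=(k-1)d$. Then $W=\Gamma\,\mathrm{diag}(\lambda_j^{1/2})N$ with $N$ standard normal, since $W$ lies almost surely in the range of $\Lambda$. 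Hence $W^{\top}\Lambda^{-}W=\sum_{j=1}^{r}N_j^2\sim\chi^2_{(k-1)d}$. This proves the statement.
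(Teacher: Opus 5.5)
Your proposal is correct and follows essentially the same route as the paper: write $T_n$ as a quadratic form in $\sqrt{n}\,C(\widehat{\theta}-\theta)$, combine \eqref{eq:assympftheta} with $(C\widehat{\Sigma}C^{\top})^{-}\tp(C\Sigma C^{\top})^{-}$ via Slutsky and the Continuous Mapping Theorem, and identify the limit as $\chi^2_{(k-1)d}$. The difference is that you prove two steps the paper only sketches or cites: the convergence of the Moore--Penrose inverse, via rank stabilization and the factorization $C=QQ^{\top}$, where the paper appeals briefly to continuity of eigenvalues, and the chi-square law of $W^{\top}\Lambda^{-}W$, which the paper takes from Graybill's Theorem 4.4.3.
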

We write $\Sigma > 0$ if $\Sigma$ is positive definite and $\Sigma \ge 0$ if it is positive semidefinite. Notice that if $\Sigma > 0$, then $\operatorname{rank}\left(C\Sigma C^{\top}\right) = \operatorname{rank}(C) = (k-1)d$.
\begin{lemma}
Suppose that the assumptions of Proposition \ref{Proposition:WTS} hold and that $\Sigma>0$. Then, $ T_n/n \tp \Delta_T,$ where $\Delta_T\geq 0$ and $\Delta_T=0$ if and only if $H_0$ is true.
\label{Lemma:consistencyWTS}
\end{lemma}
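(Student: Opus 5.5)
We will show that $\Delta_T=\theta^{\top}C^{\top}(C\Sigma C^{\top})^{-}C\theta$. The proof has a consistency step and an identification step.

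\emph{Consistency.} Write $T_n/n=\widehat{\theta}^{\top}C^{\top}(C\widehat{\Sigma}C^{\top})^{-}C\widehat{\theta}$. By \eqref{eq:SLLNtheta}, $\widehat{\theta}\tas\theta$, so $C\widehat{\theta}\tp C\theta$. Lemma \ref{Lemma:varconsist} gives $\widehat{\Sigma}\tp\Sigma$, hence $C\widehat{\Sigma}C^{\top}\tp C\Sigma C^{\top}$.

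The Moore--Penrose inverse is not continuous in general, so we need a rank argument. It is continuous at $A$ along sequences $A_n\to A$ with $\operatorname{rank}(A_n)=\operatorname{rank}(A)$. Because $C$ is symmetric, $C\Sigma C^{\top}$ has range contained in the range of $C$, so its rank is at most $(k-1)d$; by hypothesis it equals $(k-1)d$. The same containment gives $\operatorname{rank}(C\widehat{\Sigma}C^{\top})\le(k-1)d$. By lower semicontinuity of rank, $\operatorname{rank}(C\widehat{\Sigma}C^{\top})\ge(k-1)d$ with probability tending to one. So with probability tending to one the ranks coincide, and $(C\widehat{\Sigma}C^{\top})^{-}\tp(C\Sigma C^{\top})^{-}$. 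The continuous mapping theorem then gives $T_n/n\tp\Delta_T$. This rank bookkeeping is the only delicate step. The same argument is presumably already needed for Proposition \ref{Proposition:WTS}, so it can be cited from there.

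\emph{Identification.} Since $C\Sigma C^{\top}\ge 0$, its Moore--Penrose inverse is also positive semidefinite, so $\Delta_T\ge0$. If $H_0$ holds, then $C\theta=0$ and $\Delta_T=0$.

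Conversely, assume $\Delta_T=0$ and write $A=C\Sigma C^{\top}\ge0$. Then $A^{-}\ge0$, so $x^{\top}A^{-}x=0$ forces $(A^{-})^{1/2}x=0$, i.e.\ $x\in\ker(A^{-})=\ker(A)$. Take $x=C\theta$. The vector $C\theta$ lies in the range of $C$. Since $\Sigma>0$, $\ker(A)=\ker(C)$: indeed $y^{\top}C\Sigma Cy=0$ forces $Cy=0$. Because $C$ is a symmetric projection, $\ker(C)$ is orthogonal to its range. Hence $C\theta\in\operatorname{range}(C)\cap\ker(C)=\{0\}$, so $C\theta=0$, which is $H_0$ by the reformulation in Section \ref{sub:hypothesis}.
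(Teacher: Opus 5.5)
Your proof is correct and follows the paper's argument. Consistency comes from \eqref{eq:SLLNtheta}, the convergence $(C\widehat{\Sigma}C^{\top})^{-}\tp(C\Sigma C^{\top})^{-}$ established for Proposition~\ref{Proposition:WTS}, and the continuous mapping theorem; identification comes from noting that, since $\Sigma>0$, $C\theta$ lies in the column space of $C\Sigma C^{\top}$, which is the same fact as your $\ker(C\Sigma C^{\top})=\ker(C)$. Your rank argument for the Moore--Penrose inverse, and your treatment of $C\Sigma C^{\top}$ as only positive semidefinite (the paper writes $M>0$ for a matrix that is singular), are more careful than the paper's own wording.
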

From Lemma \ref{Lemma:consistencyWTS} it is reasonable to reject $H_0$ for large values of $T_n$. From Proposition \ref{Proposition:WTS}, the test that rejects $H_0$ when $T_n > \chi^2_{(k-1)d, 1-\alpha},$ where $\chi^2_{(k-1)d, 1-\alpha}$ denotes the $(1-\alpha)$-quantile of the $\chi^2_{(k-1)d}$ distribution has asymptotic level $\alpha$. 

In order to characterize the behavior of the test under departures from the null, we introduce a sequence of local alternatives.
The following assumption specifies this framework, and the resulting asymptotic power is given in Proposition~\ref{Proposition:PowerWTS}.

\begin{assumption}[Local alternatives]
$H_0$ does not hold and 
$\theta_n = \theta_0 + n^{-\gamma}b,$
for some $\gamma \ge 0$ and some vector 
$b = [b_i]_{i=1}^{kd}$ satisfying 
$C \theta_0 = 0$ and 
$C b \neq 0$.
\label{assumption}
\end{assumption}

\begin{proposition}
Let 
$\mathcal{P} = \lim_{n\to\infty} 
\mathbb{P}(T_n > \chi^2_{(k-1)d,1-\alpha}),$
suppose that Assumption \ref{assumption} and the assumptions of Lemma \ref{Lemma:varconsist} hold, and that $\Sigma>0$. Then,
\[
\mathcal{P} =
\begin{cases}
1,  & \text{if } \; 0 \leq \gamma < 1/2, \\
p_{W}(\alpha), & \text{if } \;\gamma = 1/2, \\
\alpha, & \text{if } \; \gamma > 1/2,
\end{cases}
\]
where $\alpha < p_W(\alpha) < 1$.
\label{Proposition:PowerWTS}
\end{proposition}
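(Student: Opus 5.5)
The plan is to write $\widehat{\theta}-\theta_n$ in terms of the centred estimator and use $C\theta_0=0$, so that
\[
\sqrt{n}\,C\widehat{\theta} = \sqrt{n}\,C(\widehat{\theta}-\theta_n) + n^{1/2-\gamma}Cb .
\]
This holds because $C\theta_n = n^{-\gamma}Cb$. I will treat the local alternatives as a triangular array in which the data distribution may depend on $n$, and assume, as the statement implicitly does, that the conclusions of \eqref{eq:assympftheta} and Lemma \ref{Lemma:varconsist} remain valid along this sequence. In the simplest reading the population distributions are fixed and $\theta_n$ is only a reparametrisation; in that case the results apply directly. Thus $\sqrt{n}(\widehat{\theta}-\theta_n)\tl \mathcal{N}_{kd}(0,\Sigma)$ and $\widehat{\Sigma}\tp\Sigma$.

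Since $\Sigma>0$ and $\operatorname{rank}(C)=(k-1)d$, the matrix $C\Sigma C^\top$ has rank $(k-1)d$. Moreover $C\widehat{\Sigma}C^\top$ has the same rank with probability tending to one, and therefore $(C\widehat{\Sigma}C^\top)^- \tp (C\Sigma C^\top)^-$ by continuity of the Moore–Penrose inverse on matrices of constant rank. This is the same step used in Proposition \ref{Proposition:WTS}. Write $A=(C\Sigma C^\top)^-$.

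The cases are then handled as follows.

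\textbf{Case $\gamma>1/2$.} The shift $n^{1/2-\gamma}Cb\to 0$. Slutsky's lemma and the argument of Proposition \ref{Proposition:WTS} give $T_n\tl\chi^2_{(k-1)d}$, so $\mathcal{P}=\alpha$.

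\textbf{Case $\gamma=1/2$.} Here $\sqrt{n}C\widehat{\theta}\tl Y+Cb$ with $Y\sim\mathcal{N}_{kd}(0,C\Sigma C^\top)$. Hence $T_n\tl (Y+Cb)^\top A (Y+Cb)$. Because $C\Sigma C^\top A$ is idempotent of rank $(k-1)d$, and $Cb$ lies in the column space of $C = C\Sigma C^\top$-range (ranges coincide since $\Sigma>0$), this limit is a noncentral $\chi^2_{(k-1)d}(\delta)$ with $\delta=b^\top C^\top A C b$. We have $\delta>0$, since $Cb\neq0$ lies in the range of $C\Sigma C^\top$, where $A$ is positive definite. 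The limit distribution is continuous, so $p_W(\alpha)=\Prob(\chi^2_{(k-1)d}(\delta)>\chi^2_{(k-1)d,1-\alpha})$. The strict inequalities $\alpha<p_W(\alpha)<1$ follow from two facts. First, the noncentral chi-square is stochastically strictly increasing in $\delta$ (the Poisson-mixture representation gives strict monotonicity). Second, its density is positive on $(0,\infty)$, so it puts positive mass below the critical value.

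\textbf{Case $0\le\gamma<1/2$.} Expand
\[
T_n = n^{1-2\gamma}\,b^\top C^\top\widehat{A}Cb + 2n^{1/2-\gamma}\,b^\top C^\top\widehat{A}\sqrt{n}C(\widehat{\theta}-\theta_n) + O_P(1),
\]
with $\widehat{A}=(C\widehat{\Sigma}C^\top)^-$. The leading term equals $n^{1-2\gamma}(\delta+o_P(1))\to\infty$, while the cross term is $O_P(n^{1/2-\gamma})$, which is of smaller order. Hence $T_n\tp\infty$ and $\mathcal{P}=1$. When $\gamma=0$ this is simply Lemma \ref{Lemma:consistencyWTS}.

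The main obstacle is justifying the uniform (triangular-array) versions of the CLT and of the jackknife consistency under the moving alternatives. If the paper's setting keeps the distributions fixed, this step disappears. Otherwise I would invoke a Lindeberg-type CLT for the Hoeffding projections with uniformly bounded second moments of $h$, and uniform continuity of the derivative of $f$ near $\eta_i$.
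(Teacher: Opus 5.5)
Your proposal is correct and follows essentially the same route as the paper. Both use the decomposition $\sqrt{n}\,C\widehat{\theta}=C\sqrt{n}(\widehat{\theta}-\theta_n)+n^{1/2-\gamma}Cb$, convergence of $(C\widehat{\Sigma}C^\top)^-$, and a case split on $\gamma$ with a noncentral $\chi^2_{(k-1)d}$ limit at $\gamma=1/2$. Your handling of the singular $kd\times kd$ covariance via range arguments, and your justification of $\alpha<p_W(\alpha)<1$, are more careful than the paper's, which writes $(C\Sigma C^\top)^{-1}$ and $\mathcal{N}_{(k-1)d}$.

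One correction to your side remark: with fixed population distributions, $\theta$ cannot satisfy $C\theta=n^{-\gamma}Cb\neq 0$ for every $n$ when $\gamma>0$. So the triangular-array reading is genuinely needed, a point the paper's proof also passes over by applying \eqref{eq:assympftheta} and Lemma~\ref{Lemma:varconsist} directly.
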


As shown in Proposition \ref{Proposition:PowerWTS}, the proposed test exhibits desirable power properties.  However, in practice, computing $T_n$ requires inverting the matrix $C \widehat{\Sigma} C^\top$, which can be computationally intensive for large values of the parameter dimension $d$ and may be inaccurate if the sample sizes $n_i$ are not large enough or $d$ increases. For large $d$, as illustrated in the simulation study in Subsection~\ref{sec:simulations}, the convergence of the empirical level to the nominal value requires large sample sizes.

\begin{remark}
    The WTS unifies the literature by encompassing the two-sample tests in \cite{Davidson2009, OlkinFinn1995, Bhoj1993}, the multi-sample homogeneity statistic of \cite{Paul1989}, and the multivariate formulations of \cite{Jennrich1970} and \cite{Ditzhaus2025}.
\end{remark}

\subsection{ANOVA-type statistic}
\label{subsection:ATS}

We consider the following ATS defined as $Q_n = n \widehat{\theta}^{\top} H \widehat{\theta}$, 
where $H$ is defined in \eqref{eq:H}. Under suitable regularity conditions, the ATS converges in distribution to a weighted linear combination of $\chi^2_1$ variables as shown in the next proposition.

\begin{proposition}
Suppose that \eqref{eq:independent_sample}, \eqref{ass:finitenessh2}, \eqref{ass:bounzeta}  and \black{\eqref{eq:compsamp}} hold. Then under $H_0$, it follows that 
$Q_n\tl Q:= \sum_{l=1}^{kd} \lambda_l Z^2_{l},$
where the $\{Z^2_{l}\}_{l=1}^{kd}$ are independent random variables having a $\chi^2_1$ distribution and $\lambda_1=\lambda_1(H\Sigma H)\ge \cdots\ge \lambda_{kd}(H\Sigma H)=\lambda_{kd}$ denote the eigenvalues of $H \Sigma H$ arranged in nonincreasing order.  
\label{Proposition:ATS}
\end{proposition}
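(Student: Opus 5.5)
Under $H_0$ we have $C\theta=0$, and since $H=C$ with $H=H^\top=H^2$, also $H\theta=0$. This gives
\[
Q_n = n\widehat{\theta}^{\top}H\widehat{\theta} = n(\widehat{\theta}-\theta)^{\top}H(\widehat{\theta}-\theta) = \left\|H\sqrt{n}(\widehat{\theta}-\theta)\right\|^2 .
\]
So $Q_n$ is a continuous function, namely $g(x)=x^\top H x$, of the centered and scaled estimator $\sqrt{n}(\widehat{\theta}-\theta)$. The first step is to obtain the joint limit \eqref{eq:assympftheta}, $\sqrt{n}(\widehat{\theta}-\theta)\tl Z\sim\mathcal{N}_{kd}(0,\Sigma)$, from the stated assumptions. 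Conditions \eqref{ass:finitenessh2} and \eqref{ass:bounzeta} give \eqref{eq:limitthetai} for each $i$. Independence \eqref{eq:independent_sample} and \eqref{eq:compsamp} then give \eqref{eq:assymptheta}. Finally, the Delta Method applied blockwise, using the differentiability of $f$ at each $\eta_i$ assumed throughout, yields \eqref{eq:assympftheta} with $\Sigma$ as in \eqref{definition:sigma}.

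The second step applies the Continuous Mapping Theorem to get $Q_n\tl Z^\top H Z = (HZ)^\top(HZ)$, where $HZ\sim\mathcal{N}_{kd}(0,H\Sigma H)$. The third step identifies this law. Write the spectral decomposition $H\Sigma H=\Gamma\Lambda\Gamma^\top$, with $\Gamma$ orthogonal and $\Lambda=\operatorname{diag}(\lambda_1,\ldots,\lambda_{kd})$. Since $H\Sigma H$ is symmetric and positive semidefinite, all $\lambda_l\geq 0$. Then $HZ \stackrel{d}{=} \Gamma\Lambda^{1/2}W$ with $W\sim\mathcal{N}_{kd}(0,I_{kd})$, so
\[
Z^\top H Z=\|HZ\|^2 \stackrel{d}{=} W^\top\Lambda^{1/2}\Gamma^\top\Gamma\Lambda^{1/2}W=\sum_{l=1}^{kd}\lambda_l W_l^2 ,
\]
where the $W_l^2$ are i.i.d.\ $\chi^2_1$. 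This is exactly $Q$. Some $\lambda_l$ may be zero, since $\operatorname{rank}(H)=(k-1)d$ forces at least $d$ of them to vanish, but these terms simply drop out and cause no difficulty. An equivalent argument uses $Z^\top HZ=Z^\top H^\top H Z$ and the fact that $\Sigma^{1/2}H\Sigma^{1/2}$ has the same nonzero eigenvalues as $H\Sigma H$.

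The only delicate point is the first step: the statement of the proposition does not list the smoothness of $f$ explicitly. I would rely on the standing assumption from the introduction that $f$ is differentiable in a neighborhood of each $\eta_i$. This suffices for the Delta Method, and it gives a possibly degenerate Gaussian limit even where the differential vanishes, so no nonsingularity of $\Sigma$ is needed. Unlike the WTS, no covariance estimator is involved, so Lemma~\ref{Lemma:varconsist} is not required here.
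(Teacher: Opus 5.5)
Your proposal is correct and follows essentially the same route as the paper. Under $H_0$ you write $Q_n$ as the squared norm of $H\sqrt{n}(\widehat{\theta}-\theta)$, which equals $\sqrt{n}C\widehat{\theta}$ since $H=C$; you then invoke \eqref{eq:assympftheta} and the Continuous Mapping Theorem, and identify the law of the limiting Gaussian quadratic form. The only difference is that the paper cites Theorem 4.4.4 of Graybill for that last identification, whereas you derive it directly from the spectral decomposition of $H\Sigma H$; your remark that differentiability of $f$ enters implicitly through \eqref{eq:assympftheta} is also accurate.
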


\begin{lemma}
Suppose that $\E\{\left|h(X_{i1},\ldots,X_{im})\right|\} < \infty$ and that $f$ is continuous at $\eta_i$, $1\leq i \leq k$. Then, $Q_n/n \tas \Delta_Q$,
with $\Delta_Q \geq 0$ and $\Delta_Q= 0$  if and only if $H_0$ is true.
\label{Lemma:consistencyATS}
\end{lemma}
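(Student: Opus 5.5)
The plan is to write $Q_n/n = \widehat{\theta}^{\top} H \widehat{\theta}$ and pass to the almost sure limit term by term. From \eqref{eq:SLLNtheta}, which needs only the integrability condition $\E\{|h(X_{i1},\ldots,X_{im})|\}<\infty$ (giving the SLLN for $U$-statistics) and continuity of $f$ at each $\eta_i$, we have $\widehat{\theta}_i \tas \theta_i$ for $1\le i\le k$. Since $k$ is fixed, the intersection of finitely many probability-one events still has probability one, so $\widehat{\theta}\tas\theta$ jointly in $\mathbb{R}^{kd}$. The map $x\mapsto x^{\top}Hx$ is a continuous quadratic form, and $H$ is a fixed deterministic matrix. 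The continuous mapping theorem therefore gives
\[
Q_n/n=\widehat{\theta}^{\top}H\widehat{\theta}\tas \theta^{\top}H\theta=:\Delta_Q .
\]
No sample-size comparability condition is needed, because the factor $n$ cancels exactly.

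Next we identify $\Delta_Q$. Since $H$ is symmetric and idempotent, $\theta^{\top}H\theta=\theta^{\top}H^{\top}H\theta=\|H\theta\|^2\ge 0$. Moreover $\Delta_Q=0$ if and only if $H\theta=0$. With the choice $C=P_k\otimes I_d$, Subsection \ref{sub:hypothesis} notes that $H=C$. Hence
\[
H\theta=(P_k\otimes I_d)\theta=\bigl[\theta_1-\bar\theta,\ldots,\theta_k-\bar\theta\bigr]^{\top}, \qquad \bar\theta=k^{-1}\textstyle\sum_i\theta_i .
\]
This gives the explicit form
\[
\Delta_Q=\sum_{i=1}^k\|\theta_i-\bar\theta\|^2 .
\]
This vanishes exactly when every $\theta_i$ equals $\bar\theta$, that is, exactly when $\theta_1=\cdots=\theta_k$, which is $H_0$.

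The only point needing care is the equivalence $H\theta=0 \iff C\theta=0$ for a general projection $H=C^{\top}(CC^{\top})^{-}C$, should one not want to rely on $H=C$. For this I would use the standard fact that $H$ is the orthogonal projection onto the row space of $C$. Consequently $H\theta=0$ if and only if $\theta\perp\operatorname{row}(C)$, which holds if and only if $C\theta=0$. The remaining steps are routine, so I expect no real obstacle beyond stating this joint almost sure convergence cleanly.
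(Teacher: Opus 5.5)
Your proposal is correct and follows essentially the same route as the paper: almost sure convergence from \eqref{eq:SLLNtheta} plus the continuous mapping theorem gives $Q_n/n\tas\theta^{\top}H\theta=\|H\theta\|_2^2$. Since $H\theta=0$ is equivalent to $C\theta=0$, i.e.\ to $H_0$, the characterization of $\Delta_Q$ follows. Your explicit formula $\Delta_Q=\sum_{i=1}^k\|\theta_i-\bar\theta\|_2^2$ and your row-space justification of $H\theta=0\iff C\theta=0$ are useful details that the paper leaves implicit.
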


As a consequence of Lemma \ref{Lemma:consistencyATS} it is reasonable to reject $H_0$ for large values of $Q_n$. From Proposition
\ref{Proposition:ATS}, if $\lambda_1,\ldots,\lambda_{kd}$ were known, we could construct an asymptotic $\alpha$-level test. 
Let $c_{1-\alpha}$ denote the $(1-\alpha)$-quantile of  
$Q$. 
Then, the test that rejects $H_0$ when $Q_n > c_{1-\alpha}$, 
has asymptotic level $\alpha$. This test has desirable power properties, which are characterized in the next proposition.
\begin{proposition}
Let
$\mathcal{P} = \lim \mathbb{P}\left(Q_n > c_{1-\alpha}\right)$. Suppose that Assumption~\ref{assumption} and the assumptions of Lemma~\ref{Lemma:varconsist} hold. Then,
\[
\mathcal{P} =
\begin{cases}
1,  & \text{if } \; 0 \leq \gamma < 1/2, \\
p_A(\alpha), & \text{if } \;\gamma = 1/2, \\
\alpha, & \text{if } \; \gamma > 1/2,
\end{cases}
\]
where $\alpha < p_A(\alpha) < 1$.
\label{Proposition:PowerATS}
\end{proposition}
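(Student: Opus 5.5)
Under Assumption~\ref{assumption} the true parameter depends on $n$, so I would work with a triangular-array version of \eqref{eq:assympftheta}, namely $\sqrt{n}(\widehat{\theta}-\theta_n)\tl Z\sim\mathcal{N}_{kd}(0,\Sigma)$, where $\Sigma$ is the limiting covariance. I take this as the implicit framework of the assumption, as in Proposition~\ref{Proposition:PowerWTS}. Since $H\theta_0=C\theta_0=0$ and $H=C$, we have $H\theta_n=n^{-\gamma}Hb$, with $Hb=Cb\neq 0$. The basic decomposition is
\[
\sqrt{n}\,H\widehat{\theta}=\sqrt{n}\,H(\widehat{\theta}-\theta_n)+n^{1/2-\gamma}Hb ,
\]
and, since $H$ is idempotent, $Q_n=\|\sqrt{n}H\widehat{\theta}\|^2$. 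The first term converges in law to $HZ\sim\mathcal{N}_{kd}(0,H\Sigma H)$. Note that the critical value $c_{1-\alpha}$ is a fixed number, the quantile of $Q=\sum_l\lambda_l Z_l^2$ with $\lambda_l$ the eigenvalues of $H\Sigma H$.

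\emph{Case $\gamma>1/2$.} Here $n^{1/2-\gamma}Hb\to0$. By Slutsky and the continuous mapping theorem, $Q_n\tl\|HZ\|^2$. Diagonalising $H\Sigma H$ shows that $\|HZ\|^2$ has the same law as $Q$, exactly as in Proposition~\ref{Proposition:ATS}. The law of $Q$ is continuous because $\lambda_1>0$; this holds because $\Sigma$ is positive semidefinite with $H\Sigma H\neq 0$, which follows from \eqref{ass:bounzeta} and the nonzero differential of $f$ (if needed, I would assume this as in Proposition~\ref{Proposition:PowerWTS}). Continuity gives $\mathbb{P}(Q_n>c_{1-\alpha})\to\mathbb{P}(Q>c_{1-\alpha})=\alpha$.

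\emph{Case $\gamma=1/2$.} Now $Q_n\tl\|HZ+Hb\|^2$. Writing $H\Sigma H=\Gamma\Lambda\Gamma^\top$, this limit equals a noncentral quadratic form $\sum_l\lambda_l(Z_l+\delta_l)^2$ plus a deterministic part coming from the component of $Hb$ outside the range of $H\Sigma H$, which is $\ge 0$. Set $p_A(\alpha)=\mathbb{P}(\|HZ+Hb\|^2>c_{1-\alpha})$. Then $p_A(\alpha)<1$ because the Gaussian limit has positive density near the origin of its support, so the event $\|HZ+Hb\|^2\le c_{1-\alpha}$ has positive probability. For $p_A(\alpha)>\alpha$ I would use Anderson's inequality: for a centred Gaussian vector and the symmetric convex set $\{x:\|x\|^2\le c\}$, $\mathbb{P}(HZ+Hb\in K)\le\mathbb{P}(HZ\in K)$, with strict inequality when the shift is nonzero. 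I expect the strictness of this inequality to be the main obstacle.
\begin{itemize}
\item If $Hb$ has a nonzero component in the range of $H\Sigma H$, I would use monotonicity of each noncentral one-dimensional term $\mathbb{P}((Z_l+\delta)^2\le t)$ in $|\delta|$, which is strictly decreasing, after conditioning on the other coordinates.
\item If $Hb$ has a component orthogonal to that range, the deterministic shift added to the limit is strictly positive, which strictly increases the rejection probability.
\end{itemize}

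\emph{Case $0\le\gamma<1/2$.} Here $\|\sqrt{n}H(\widehat{\theta}-\theta_n)\|=O_P(1)$ while $\|n^{1/2-\gamma}Hb\|\to\infty$. By the reverse triangle inequality, $Q_n\tp\infty$, so the rejection probability tends to $1$. For $\gamma=0$ one can alternatively invoke Lemma~\ref{Lemma:consistencyATS}: $Q_n/n\to\Delta_Q>0$.
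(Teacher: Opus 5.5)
Your decomposition and three-case structure match the paper's proof. The paper writes $Q_n=(Z_n+n^{1/2-\gamma}b)^\top H(Z_n+n^{1/2-\gamma}b)$ using $H\theta_0=0$, and then defers to the proof of Proposition~\ref{Proposition:PowerWTS}. That proof handles $\gamma=1/2$ through a noncentral $\chi^2$ limit and stochastic dominance. This does not literally cover $\|HZ+Hb\|^2$, which is a weighted sum of noncentral $\chi^2_1$ terms plus possibly a constant. Your Anderson-inequality and coordinatewise-monotonicity argument for $p_A(\alpha)>\alpha$ is therefore a correct completion of what the paper leaves implicit.

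The step that fails is $p_A(\alpha)<1$ in the situation of your second bullet. Write $Hb=u+w$ with $u\in\operatorname{range}(H\Sigma H)$ and $0\neq w\perp\operatorname{range}(H\Sigma H)$.
\begin{itemize}
\item Then $HZ+Hb$ lives on the affine subspace $w+\operatorname{range}(H\Sigma H)$, which does not pass through the origin.
\item Moreover $\|HZ+Hb\|^2=\|HZ+u\|^2+\|w\|^2\ge\|w\|^2$, so as soon as $\|w\|^2>c_{1-\alpha}$ you get $p_A(\alpha)=1$.
\end{itemize}
So ``positive density near the origin of its support'' proves nothing in that case, and no better argument can, because the conclusion itself fails there.

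Nothing in the hypotheses of Lemma~\ref{Lemma:varconsist} or in Assumption~\ref{assumption} excludes this, since $\Sigma$ may be singular. Condition \eqref{ass:bounzeta} makes the diagonal of $V_i$ positive but does not make $V_i$ invertible, and in any case $\operatorname{rank}\Sigma_i\le q$. For the same reason, your claim that $H\Sigma H\neq 0$ follows from \eqref{ass:bounzeta} and $D_i\neq 0$ is not valid.

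The missing ingredient is the nondegeneracy condition $\Sigma>0$ of Proposition~\ref{Proposition:PowerWTS}, which the paper's ``similar reasoning'' tacitly imports. With it:
\begin{itemize}
\item $\ker(H\Sigma H)=\ker H$, hence $\operatorname{range}(H\Sigma H)=\operatorname{range}H\ni Hb$. So $w=0$ and your second bullet is vacuous.
\item $H\Sigma H\neq 0$, so $c_{1-\alpha}>0$. The support of $HZ+Hb$ is then a linear subspace containing the origin, and $p_A(\alpha)<1$ follows as you say.
\item Your first bullet yields the strict inequality $p_A(\alpha)>\alpha$. The conditional threshold $\{c_{1-\alpha}-\sum_{l\neq l_0}\lambda_l(Z_l+\delta_l)^2\}/\lambda_{l_0}$ is positive with positive probability because $c_{1-\alpha}>0$.
\end{itemize}
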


The result in Proposition \ref{Proposition:PowerATS} remains true if $c_{1-\alpha}$ is replaced with a consistent estimator. 

As $\Sigma$ is unknown, and hence the eigenvalues of $H\Sigma H$, the distribution of $Q$ is also unknown. Next, we discuss three approximations of the null distribution of \black{$Q_n$}.

\subsubsection{Imhof's approximation}
\label{Imhof}
This method consists in approximating the null distribution of $Q_n$ through an estimation of the distribution of $Q$.   Since the eigenvalues of $H \Sigma H$ are unknown, we approximate them using those of $H \widehat{\Sigma} H$. 
Let $\widehat{\lambda}_{l} = \lambda_l\left(H \widehat{\Sigma} H\right)$. From Lemma~\ref{Lemma:varconsist} and the Continuous Mapping Theorem, we have that $H\widehat{\Sigma} H \tp H\Sigma H$. Applying Weyl’s Perturbation Theorem (see Corollary III.2.6 in \cite{bhatia2013matrix}), it follows that
$
\widehat{\lambda}_{l} \tp \lambda_{l}$, for $1 \le l \le kd,
$
and hence from Slutsky's Theorem we have that
$
\sum_{l=1}^{kd} \widehat{\lambda}_l Z_l^2 \;\tl\; \sum_{l=1}^{kd} \lambda_l Z_l^2.
$
With this, the $p$-value can be estimates as follows,
\[
\widehat{p}_{\text{Imhof}} 
= \mathbb{P}_{*}\!\left(
   \sum_{l=1}^{kd} \widehat{\lambda}_l Z_l^2 \;\ge\; Q_{n,obs}\right),
\]
where $Q_{n,\text{obs}}$ is the value of $Q_n$ evaluated at the observed data, and $\mathbb{P}_*$ is the conditional probability given the data, which, from a practical point of view, means that $\widehat{\lambda}_1,\ldots, \widehat{\lambda}_{kd}$ are treated as fixed quantities (instead of random variables).

In practice, $\widehat{p}_{\text{Imhof}}$ is evaluated numerically using Imhof’s algorithm for quadratic forms in normal variables \citep{Imhof1961}, as implemented in the R package \texttt{CompQuadForm} \citep{Compquadform} through the function \texttt{imhof()}. 

\subsubsection{Weighted bootstrap approximation}
\label{sub:bootstrap}

As an alternative to the asymptotic approximation described above, we employ a weighted bootstrap (or multiplier bootstrap) procedure. This approach approximates the sampling distribution of $\widehat{\theta}$ by re-weighting the empirical influence functions, which are represented here by the centered jackknife pseudovalues.
\begin{equation}
\label{eq:random_weights}
\begin{minipage}{0.9\linewidth}
    Let $W_{i1}, \ldots W_{i n_i}$, $1 \leq i \leq k$, be a sequence of independent and identically distributed (i.i.d.) random weights independent of the data $\mathcal{X}=\{\bm{X}_1, \ldots, \bm{X}_k\}$, with  $\mathbb{E}(W_{ij}) = 0$  and $\V(W_{ij}) = 1.$
\end{minipage}
\end{equation}

We define the bootstrap estimator for the $i$-th population as
\begin{equation*}
    \widehat{\theta}_i^* = \widehat{\theta}_i + \frac{1}{n_i} 
    \sum_{j=1}^{n_i} W_{ij} \left(T_{i,j} - \overline{T}_i\right),
\end{equation*}
where $T_{i,j}$ are the jackknife pseudovalues defined in Subsection \ref{subsec:varest} and $\overline{T}_i = n_i^{-1}\sum_{j=1}^{n_i} T_{i,j}$. 

Let $\mathbb{E}_*$ and $\V_*$ denote the expectation and variance with respect to $\mathbb{P}_*$, which from a practical point of view, means that $\widehat{\theta}_i, T_{i,1},\ldots,T_{i,n_i}$, $1\leq i \leq k$, are treated as fixed quantities. Consequently, it follows from \eqref{eq:random_weights} that $\mathbb{E}_*(\widehat{\theta}_i^*) = \widehat{\theta}_i$ and  $\V_*(\sqrt{n_i}\widehat{\theta}_i^*)$ coincides  with the jackknife variance estimator.

The null distribution of the ATS can be estimated through the conditional distribution of $Q_n^* = n (\widehat{\theta}^* - \widehat{\theta})^{\top} H (\widehat{\theta}^* - \widehat{\theta}),$ given the data.  The validity of this approximation is established in the following result.
\begin{theorem}
Suppose that the assumptions of Lemma \ref{Lemma:varconsist} hold, that $f$ has continuous second partial derivatives in a neighborhood of $\eta_i$, that $\Sigma_i>0$, $1\leq i \leq k$, and that the weights satisfy \eqref{eq:random_weights}. Then, 
    $\sup_{x \in \mathbb{R}} \left| \mathbb{P}_*\left( Q_n^* \le x \right) - \mathbb{P}\left( Q \le x \right) \right| \tp 0,$
where $Q$ is defined as in Proposition \ref{Proposition:ATS}.
\label{theorem:bootstrap_consistency}
\end{theorem}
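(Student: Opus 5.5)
The strategy is to prove a conditional central limit theorem for $\sqrt{n}(\widehat{\theta}^*-\widehat{\theta})$ given the data, and then pass to $Q_n^*$ via the continuous mapping theorem. Pólya's theorem then upgrades convergence in distribution to uniform convergence of distribution functions. Write
\[
\sqrt{n}(\widehat{\theta}_i^*-\widehat{\theta}_i)=\widehat{\kappa}_i^{-1/2}\,\frac{1}{\sqrt{n_i}}\sum_{j=1}^{n_i}W_{ij}\left(T_{i,j}-\overline{T}_i\right),
\]
a sum of independent, conditionally mean-zero vectors. Given $\mathcal{X}$, its conditional covariance is $\widehat{\Sigma}_i(n_i-1)/(n_i\widehat{\kappa}_i)$. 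By Lemma \ref{Lemma:varconsist} this covariance converges in probability to $\Sigma_i/\kappa_i$. Independence of the weights across populations makes the joint conditional covariance block diagonal, with limit $\Sigma$.

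The conditional CLT will be obtained through the Lindeberg--Feller theorem for triangular arrays, applied along subsequences. Fix $a\in\mathbb{R}^d$. By the Cramér--Wold device it suffices to check that, for every $\varepsilon>0$,
\[
L_{n_i}(\varepsilon)=\frac{1}{n_i}\sum_{j=1}^{n_i}\E_*\Bigl[W_{ij}^2\{a^\top(T_{i,j}-\overline{T}_i)\}^2\mathbb{1}\bigl\{|W_{ij}a^\top(T_{i,j}-\overline{T}_i)|>\varepsilon\sqrt{n_i}\bigr\}\Bigr]\tp 0 .
\]
Since the $W_{ij}$ are i.i.d. with finite variance, it is enough to show that
\[
\max_{1\le j\le n_i}|T_{i,j}-\overline{T}_i|/\sqrt{n_i}\tp 0 ,
\]
because the average of $\{a^\top(T_{i,j}-\overline{T}_i)\}^2$ is bounded in probability. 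Indeed, $L_{n_i}(\varepsilon)$ is bounded by that average times $\E\bigl[W^2\mathbb{1}\{|W|>\varepsilon\sqrt{n_i}/M_n\}\bigr]$, where $M_n$ is the maximum above. Once this holds, a subsequence argument yields almost sure convergence along sub-subsequences. That gives conditional weak convergence in probability: $\sqrt{n}(\widehat{\theta}^*-\widehat{\theta})$ converges to $\mathcal{N}_{kd}(0,\Sigma)$ conditionally on $\mathcal{X}$. The map $z\mapsto z^\top Hz$ then gives $Q_n^*\to z^\top Hz$ with $z\sim\mathcal{N}(0,\Sigma)$. This limit has the same law as $Q=\sum_l\lambda_l Z_l^2$, since $z^\top Hz=(Hz)^\top(Hz)$ and $Hz\sim\mathcal{N}(0,H\Sigma H)$. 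The hypothesis $\Sigma_i>0$ ensures $H\Sigma H\neq 0$, so $Q$ has a continuous distribution function and Pólya's theorem applies. Without this we would only get convergence at continuity points.

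The main obstacle is the maximal bound $\max_j|T_{i,j}-\overline{T}_i|=o_P(\sqrt{n_i})$. I plan to use the second-order smoothness of $f$ here. Apply a Taylor expansion of $f$ around $\widehat{\eta}_i$, with the Hessian bounded on a neighbourhood that $\widehat{\eta}_i$ and all leave-one-out $\widehat{\eta}_{i,(-j)}$ eventually enter almost surely. This gives
\[
T_{i,j}=D(\widehat{\eta}_i)\,T^{\eta}_{i,j}+R_{i,j},\qquad |R_{i,j}|\lesssim n_i\,|\widehat{\eta}_{i,(-j)}-\widehat{\eta}_i|^2 ,
\]
where $T^{\eta}_{i,j}$ are the pseudovalues of the $U$-statistics. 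For $U$-statistics, $\widehat{\eta}_{i,(-j)}-\widehat{\eta}_i=(\widehat{\eta}_i-T^\eta_{i,j})/(n_i-1)$. Also $T^\eta_{i,j}$ equals, up to a smaller-order degenerate term, $M_h\widetilde{h}_1(X_{ij})$ plus a constant; this is the Hoeffding decomposition underlying Arvesen's proof of Lemma \ref{Lemma:varconsist}. Hence the problem reduces to showing $\max_j|T^\eta_{i,j}|=o_P(\sqrt{n_i})$. For the leading term this follows from finite second moments, since the maximum of $n$ i.i.d. square-integrable variables is $o_P(\sqrt{n})$ by the bound $\sum_j \Prob(|Y_j|>\varepsilon\sqrt{n})\le \varepsilon^{-2}\E\{Y^2\mathbb{1}(|Y|>\varepsilon\sqrt n)\}$. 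The degenerate remainder is handled by bounding $\sum_j|\cdot|^2$ via the variance computations in Arvesen's jackknife argument. That sum is $o_P(n_i)$, and this already controls the maximum. The remainders $R_{i,j}$ are then of order $\max_j|T^\eta_{i,j}|^2/n_i=o_P(1)$, which completes the argument.
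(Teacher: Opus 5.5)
Your proposal is correct and follows essentially the same route as the paper: Cram\'er--Wold plus the Lindeberg--Feller theorem for the multiplier sums $\sum_j W_{ij}c_{ij}$. The key step is $\max_j|T_{i,j}-\overline{T}_i|=o_{\Prob}(\sqrt{n_i})$, obtained from a second-order Taylor expansion of $f$, the Hoeffding projection and the $o_{\Prob}(\sqrt{n})$ bound for the maximum of i.i.d.\ square-integrable variables, after which the continuous mapping theorem and P\'olya's theorem finish the proof. Your version is somewhat more careful than the paper's, since you control the degenerate Hoeffding remainder uniformly in $j$ through a sum-of-squares bound and make the ``in probability'' conditional CLT rigorous via sub-subsequences, whereas the paper plugs the pointwise expansion of Lemma~\ref{lemma:etaj-eta_nomax} directly into a maximum.
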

As an immediate consequence of Theorem \ref{theorem:bootstrap_consistency} and Proposition \ref{Proposition:ATS}, we have that under $H_0$, 
\[
\sup_{x \in \mathbb{R}} \left| \mathbb{P}_*\left( Q_n^* \le x \right) - \mathbb{P}\left( Q_n \le x \right) \right| \tp 0,
\]
which means that the weighted bootstrap consistently approximates the null distribution of $Q_n$. With this, the $p$-value can estimates as follows
\[
\widehat{p}_{\text{boot}}=\Prob_{*}\left(Q_{n}^* \geq Q_{n, \text{obs}}\right). \]
As usual, $\widehat{p}_{\text{boot}}$ is approximated by simulation.

\subsubsection{Increasing dimension regime} 
In this section, we approximate the distribution of the ATS in the regime where the dimension of $\theta$, $d = d(n) \to \infty$, 
while $q$ (the dimension of $\eta$)  remains fixed. 
The following result provides the corresponding limiting distribution.

  \begin{theorem}
Suppose that \eqref{eq:independent_sample}, \eqref{ass:finitenessh2}, \eqref{ass:bounzeta} and \eqref{eq:compsamp} hold,  that 
$f$ has continuous second partial derivatives in a neighborhood of $\eta_i$ and that
\begin{equation}
    \max_{1\le r\le d}\;
\mathbb{E}\!\left\{\left|e_r^\top D_i h_1(X_{i1})\right|^4\right\}\leq M,
\label{eq:boundh4}
\end{equation}
 for some constant $M>0$, where $e_r$ denotes de $r$-th canonical vector of $\R^d$, $1\leq i \leq k$. Suppose also that 
\begin{equation}
M_1 \;\leq\; \lambda_{\min}^{+}\!\big(H \Sigma H\big)
\leq \lambda_{1}\!\big(H\Sigma H\big)
\leq M_2,
\label{eigencontrol}
\end{equation}
where $\lambda_{\min}^{+}\!\big(H \Sigma H\big)$ and $\lambda_{1}\!\big(H \Sigma H\big)$ are respectively
the smallest nonzero and the largest eigenvalue of $H \Sigma H$. Then,
$({Q_n-\mu_n})/{\sigma_n} \tl Z,$
as $d\to \infty$, $d/n \to 0$, where $Z \sim \mathcal{N}(0,1)$, $\mu_n=\operatorname{tr}\!\big(\Sigma_n H\big)$ and $\sigma_n^2=2\,\operatorname{tr}\!\big\{(\Sigma_n H)^2\big\}$. The matrix $\Sigma_n$ is given by $\Sigma_n=\operatorname{diag}\left\{(n/n_1) \Sigma_1, \ldots, (n/n_k) \Sigma_k\right\}$, with $\Sigma_i$ defined in \eqref{eq:definitionsigmai}.

\label{theorem:assymptoticdinfty}
\end{theorem}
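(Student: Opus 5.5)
Under $H_0$ we have $H\theta=0$, so $Q_n = n(\widehat\theta-\theta)^\top H(\widehat\theta-\theta)$. The plan is to linearize $\widehat\theta-\theta$ through the Hoeffding decomposition and the delta method, and then show that the resulting quadratic form in independent summands is asymptotically normal after centering and scaling.

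\textbf{Step 1: Taylor expansion and Hoeffding decomposition.} Since $q$ is fixed, Taylor's theorem with the continuous second derivatives of $f$ gives
\[
\widehat\theta_i-\theta_i=D_i(\widehat\eta_i-\eta_i)+R_i .
\]
Next, $\widehat\eta_i-\eta_i=\frac{1}{n_i}\sum_{j}M_h\widetilde h_1(X_{ij})+\widetilde R_i$, where $\widetilde R_i$ is the degenerate Hoeffding remainder with $\E\|\widetilde R_i\|^2=O(n_i^{-2})$. Writing $Y_{ij}=D_iM_h\widetilde h_1(X_{ij})\in\R^d$, we get
\[
\sqrt n(\widehat\theta-\theta)=L_n+\mathcal R_n,
\qquad L_n=\Big[\tfrac{\sqrt n}{n_i}\textstyle\sum_j Y_{ij}\Big]_{i=1}^k .
\]
Then $Q_n=L_n^\top HL_n+2L_n^\top H\mathcal R_n+\mathcal R_n^\top H\mathcal R_n$.

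\textbf{Step 2: The remainder is negligible.} We must show that the remainder terms are $o_P(\sigma_n)$. Condition \eqref{eigencontrol} gives $\sigma_n^2\asymp (k-1)d$, so $\sigma_n\asymp\sqrt d$.
\begin{itemize}
\item The Taylor remainder is bounded componentwise by $\|\widehat\eta_i-\eta_i\|^2$ times the second derivatives. Its $d$-vector norm needs uniform control of $\sum_a\|\nabla^2 f^{(a)}\|^2$, which is only available if it is $O(d)$; I would argue this (or assume it is implicit in the smoothness hypothesis). This yields $\|\mathcal R_n\|=O_P(\sqrt{d/n})$.
\item The Hoeffding part of $\mathcal R_n$ is $D_i\widetilde R_i$. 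Its norm satisfies $\|D_i\|\,O_P(n^{-1/2})$, with $\|D_i\|^2\le \operatorname{tr}(D_iD_i^\top)=O(d)$, again yielding $O_P(\sqrt{d/n})$.
\item $E\|L_n\|^2=\operatorname{tr}(H\Sigma_n H)=O(d)$, so the cross term is $O_P(\sqrt d\cdot\sqrt{d/n})=o_P(\sqrt d)$ because $d/n\to0$.
\end{itemize}
This is the step I expect to be the most delicate, since the growth of $\|D_i\|$ and of the Hessians of $f$ with $d$ must be controlled using \eqref{eq:boundh4} together with \eqref{eigencontrol}.

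\textbf{Step 3: CLT for the leading quadratic form.} Let $A=H$ and define $\xi_{ij}=\frac{\sqrt n}{n_i}Y_{ij}$ placed in block $i$. Then
\[
L_n^\top HL_n=\sum_{(i,j)}\xi_{ij}^\top H\xi_{ij}+\sum_{(i,j)\ne(i',j')}\xi_{ij}^\top H\xi_{i'j'} .
\]
\begin{itemize}
\item \textbf{Diagonal sum.} Its mean is $\operatorname{tr}(\Sigma_nH)=\mu_n$. Its variance is $\sum O(n^{-2})\E(Y^\top H_{ii}Y)^2$, and by \eqref{eq:boundh4} and Cauchy--Schwarz, $\E\|Y_{ij}\|^4\le d^2M$. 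The variance is therefore $O(d^2/n)=o(d)$, so the diagonal sum equals $\mu_n+o_P(\sigma_n)$.
\item \textbf{Off-diagonal sum.} This is a degenerate, mean-zero quadratic form in independent vectors with variance $2\operatorname{tr}\{(\Sigma_nH)^2\}(1+o(1))=\sigma_n^2(1+o(1))$. I would apply the martingale CLT (equivalently, de Jong's theorem for generalized quadratic forms), ordering the $N=n$ observations and setting $M_t=2\sum_{s<t}\xi_s^\top H\xi_t$. The conditional-variance and Lindeberg (fourth moment) conditions reduce to the following requirements:
\[
\operatorname{tr}\{(\Sigma_nH)^4\}=o(\sigma_n^4),
\]
which holds because this trace is at most $\lambda_1^2\operatorname{tr}\{(\Sigma_nH)^2\}$ and \eqref{eigencontrol} makes it $O(d)=o(d^2)$; and
\[
\sum_t \E(\xi_t^\top H\Sigma_n H\xi_t)^2=O(d^2/n)=o(d^2),
\]
which again follows from \eqref{eq:boundh4}.
\end{itemize}
Combining with Slutsky's theorem gives $(Q_n-\mu_n)/\sigma_n\tl\mathcal N(0,1)$.
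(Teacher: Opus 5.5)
Your outline is the paper's own proof: Hoeffding plus delta-method linearization, negligible remainders, a diagonal/off-diagonal split of the leading quadratic form, and a martingale CLT with fourth-moment Lindeberg control. The problem is the step that sets the scale for everything, ``\eqref{eigencontrol} gives $\sigma_n^2\asymp(k-1)d$''. This step is false.

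Condition \eqref{eigencontrol} controls the size of the nonzero eigenvalues of $H\Sigma H$, not how many there are. Since $q$ is fixed, $\Sigma_i=D_iV_iD_i^\top$ with $D_i\in\R^{d\times q}$ has rank at most $q$. So $H\Sigma_nH$ has at most $kq$ nonzero eigenvalues. Hence $\sigma_n^2\le 2kq\,\lambda_1^2(H\Sigma_nH)=O(1)$, and Cauchy--Schwarz over these eigenvalues gives $\tr\{(\Sigma_nH)^4\}/\sigma_n^4\ge 1/(4kq)$. Your key condition $\tr\{(\Sigma_nH)^4\}=o(\sigma_n^4)$ therefore fails. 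If you instead read \eqref{eigencontrol} as a lower bound on all $(k-1)d$ eigenvalues on the range of $H$, the hypothesis becomes unsatisfiable once $(k-1)d>kq$.

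The conclusion itself fails under the stated hypotheses, so no argument can close this step. Take $q=1$, $h(x)=x$ with $\E(X^4)<\infty$ and $\sigma^2=\V(X_{11})>0$, identically distributed populations, and $f(x)=d^{-1/2}x\,\mathds{1}_d$.
\begin{itemize}
\item $D_i=d^{-1/2}\mathds{1}_d$ satisfies \eqref{eq:boundh4}.
\item The nonzero eigenvalues of $H\Sigma H$ are the $k-1$ nonzero eigenvalues of $P_k\operatorname{diag}(\sigma^2/\kappa_1,\ldots,\sigma^2/\kappa_k)P_k$. They do not depend on $d$, so \eqref{eigencontrol} holds.
\item Yet $Q_n=n\sum_{i=1}^k(\widehat\eta_i-k^{-1}\sum_{l=1}^k\widehat\eta_l)^2$ does not depend on $d$. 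It converges in law to a weighted sum of $k-1$ independent $\chi^2_1$ variables, so $(Q_n-\mu_n)/\sigma_n$ has a non-normal limit.
\end{itemize}

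The paper's proof contains exactly the same hole. Part 2 of Lemma~\ref{lemma:eigenconsistency} lower-bounds $\tr\{(H\Sigma_nH)^2\}=\sum_j\lambda_j^2(H\Sigma_nH)$ by $kd\,(M_1+o(1))^2$. That is, it applies the bound on $\lambda_{\min}^{+}$ to all $kd$ eigenvalues, including the at least $kd-kq$ zero ones. The resulting \eqref{eq:ordersigma} is then used in Steps 2 and 4. So your proposal reproduces the paper's argument together with its flaw.

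What is missing is a hypothesis that makes the effective rank diverge, for instance $\operatorname{rank}(H\Sigma H)\ge c\,d$ (equivalently, under \eqref{eigencontrol}, $\sigma_n^2\ge c\,d$). That forces $\operatorname{rank}(\Sigma_i)$, and hence $q$, to grow with $d$. Your Step 2 bounds on the Hoeffding and Taylor remainders would then have to be redone uniformly in $q$, and your worry about the Hessians of $f$ becomes a genuine additional assumption there.

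Granting $\sigma_n^2\asymp d$, the rest of your plan matches the paper's Steps 3--4 and goes through. This covers the $O(d^2/n)$ variance of the diagonal sum, and the conditional-variance and Lindeberg checks via $\tr\{(\Sigma_nH)^4\}\le\lambda_1^2(H\Sigma_nH)\tr\{(\Sigma_nH)^2\}$ and the fourth-moment bounds from \eqref{eq:boundh4}.
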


The next result shows that the asymptotic properties derived in Theorem~\ref{theorem:assymptoticdinfty} remain valid when $\Sigma_n$ is replaced by $\widehat{\Sigma}$, under suitable regularity conditions.

\begin{proposition}
    Suppose that the assumptions in Theorem \ref{theorem:assymptoticdinfty} hold, and that
    \begin{equation}
        \E\left\{\left|h^{(r)}(X_1,\ldots,X_m)\right|^4\right\}< \infty, \quad 1 \leq r \leq d.
    \label{eq:finitinessh4}
    \end{equation}
Then, under $H_0$,
$
({Q_n-\widehat{\mu}_n})/{\widehat{\sigma}_n} \tl Z,
$
 as $d\to \infty$, $d/n \to 0$, where $Z \sim \mathcal{N}(0,1)$, 
$\widehat{\mu}_n=\operatorname{tr}\left(\widehat{\Sigma} H\right)$, $\widehat{\sigma}^2=2\,\operatorname{tr}\!\left\{\left(\widehat{\Sigma} H\right)^2\right\}$. The matrix $\widehat{\Sigma}$ is given by $\widehat{\Sigma}=\operatorname{diag}\left\{(n/n_1) \widehat{\Sigma}_1, \ldots, (n/n_k) \widehat{\Sigma}_k\right\}$, with $\widehat{\Sigma}_i$ defined in  \eqref{eq:WidehatSigma}.
\label{prop:RatioConsistentHD}
\end{proposition}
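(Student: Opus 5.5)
The reduction is through Slutsky's theorem. Write
\[
\frac{Q_n-\widehat{\mu}_n}{\widehat{\sigma}_n}
=\frac{\sigma_n}{\widehat{\sigma}_n}\left(\frac{Q_n-\mu_n}{\sigma_n}+\frac{\mu_n-\widehat{\mu}_n}{\sigma_n}\right).
\]
By Theorem \ref{theorem:assymptoticdinfty}, $(Q_n-\mu_n)/\sigma_n \tl Z$. It therefore suffices to prove two facts: (a) $(\widehat{\mu}_n-\mu_n)/\sigma_n \tp 0$, and (b) $\widehat{\sigma}_n^2/\sigma_n^2 \tp 1$.

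Under \eqref{eigencontrol}, $H\Sigma_n H$ has rank $(k-1)d$, so $\sigma_n^2 = 2\tr\{(H\Sigma_n H)^2\}\asymp d$. The eigenvalues of $H\Sigma_n H$ are those of $H\Sigma H$ perturbed by $O(\max_i|n/n_i-1/\kappa_i|)$, so the bounds in \eqref{eigencontrol} transfer. Since $H=P_k\otimes I_d$ and $\widehat{\Sigma}-\Sigma_n$ is block diagonal,
\[
\widehat{\mu}_n-\mu_n=\sum_{i=1}^k (1-1/k)(n/n_i)\tr(\widehat{\Sigma}_i-\Sigma_i).
\]
So (a) reduces to showing $\tr(\widehat{\Sigma}_i-\Sigma_i)=o_P(\sqrt d)$ for each $i$. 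For (b), the bound
\[
|\tr\{(\widehat{\Sigma}H)^2\}-\tr\{(\Sigma_nH)^2\}| \le \|H(\widehat{\Sigma}-\Sigma_n)H\|_F\,\bigl(\|H\widehat{\Sigma}H\|_F+\|H\Sigma_nH\|_F\bigr)
\]
reduces it to showing $\|\widehat{\Sigma}_i-\Sigma_i\|_F=o_P(1)$. Indeed, this gives an error of $o_P(1)\cdot O(\sqrt d)=o_P(d)$, which is negligible relative to $\sigma_n^2\asymp d$.

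The core step is therefore a Frobenius-norm (and trace) rate for the jackknife estimator. I would linearize each pseudovalue, first for $\widehat{\eta}_i$ and then for $\widehat{\theta}_i$:
\[
T_{i,j}=\theta_i+D_iM_h\widetilde{h}_1(X_{ij})+R_{ij}.
\]
Here $R_{ij}$ collects two pieces: the jackknife remainder of the $U$-statistic, with $q$ fixed, and the second-order Taylor terms of $f$ around $\eta_i$, using the continuous second derivatives. This gives $\widehat{\Sigma}_i=S_i+E_i$, where $S_i$ is the sample covariance of the i.i.d. vectors $Y_{ij}=D_iM_h\widetilde{h}_1(X_{ij})\in\R^d$.

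For the main term I would bound second moments. We have
\[
\E\|S_i-\Sigma_i\|_F^2\le n_i^{-1}\sum_{r,s}\E(Y_{r}^2Y_{s}^2)+O(n_i^{-1})\le d^2M/n_i
\]
by Cauchy--Schwarz and \eqref{eq:boundh4}. This yields $\|S_i-\Sigma_i\|_F=O_P(d/\sqrt n)$. Similarly, $\tr(S_i-\Sigma_i)=n_i^{-1}\sum_j(\|Y_{ij}\|^2-\tr\Sigma_i)+O_P(d/n)$ has standard deviation $O(d/\sqrt n)$.

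The main obstacle is that these rates are only $o(1)$ and $o(\sqrt d)$ when $d^2/n\to0$, not merely $d/n\to0$. To get the sharper bounds I would exploit structure: $Y_{ij}=D_i M_h\widetilde{h}_1(X_{ij})$ lies in the column space of $D_i$, which has dimension at most $q$, and $q$ is fixed. Hence $\widehat{\Sigma}_i-\Sigma_i = D_i(\widehat V_i-V_i)D_i^\top+(\text{terms from }\widehat D_i-D_i)$, a matrix of rank $\le 2q$. For a rank-$\le 2q$ matrix the Frobenius norm and the trace are both bounded by a constant times the operator norm. Moreover, $\widehat V_i-V_i$ is a fixed $q\times q$ matrix that is $O_P(n^{-1/2})$, using \eqref{eq:finitinessh4} and Lemma~\ref{Lemma:varconsist}-type arguments.

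It then remains to check that $\|D_i\|$ stays bounded. This follows from $\lambda_1(\Sigma_i)\le M_2$-type control via \eqref{eigencontrol} together with $V_i$ having eigenvalues bounded away from zero, by \eqref{ass:bounzeta} and nondegeneracy. The derivative-perturbation term $\widehat D_i-D_i$ has to be handled the same way, via the second-derivative bound and $\|\widehat\eta_i-\eta_i\|=O_P(n^{-1/2})$; verifying uniform-in-$d$ control of these second derivatives is the delicate part, and I would lean on \eqref{eq:boundh4} to keep it bounded. Once these pieces are in place, both errors are $O_P(n^{-1/2})=o_P(1)$, giving (a) and (b), and Slutsky's theorem concludes.
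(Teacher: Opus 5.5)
Your reduction (Slutsky, then (a) $(\widehat\mu_n-\mu_n)/\sigma_n\tp0$ and (b) $\widehat\sigma_n^2/\sigma_n^2\tp1$) is exactly the paper's. However, the ``main obstacle'' that sends you into the rank argument comes from a slip.

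Granting $\sigma_n^2\asymp d$, step (a) needs $\tr(\widehat\Sigma_i-\Sigma_i)=o_{\Prob}(\sqrt d)$. A bound $O_{\Prob}(d/\sqrt n)$ is $o_{\Prob}(\sqrt d)$ precisely when $d/n\to0$, not $d^2/n\to0$. Your own inequality for (b) only needs $\|\widehat\Sigma-\Sigma_n\|_F=o_{\Prob}(\sqrt d)$, not $o_{\Prob}(1)$: $\|\widehat\Sigma-\Sigma_n\|_F=O_{\Prob}(d/\sqrt n)$ already gives relative error $O_{\Prob}(\sqrt{d/n}+d/n)$.

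That is the paper's route. Lemma~\ref{lemma:widehatsigma-sigma} gives entrywise $O_{\Prob}(n^{-1/2})$ errors. Summing over the $O(d)$ diagonal and $O(d^2)$ total entries yields $(\widehat\mu_n-\mu_n)/\sigma_n=O_{\Prob}(\sqrt{d/n})$ and $\tr(\Delta_n^2)/\sigma_n^2=O_{\Prob}(d/n)$, with $\Delta_n=\widehat\Sigma-\Sigma_n$. Your second-moment bound on $\E\|S_i-\Sigma_i\|_F^2$ is a cleaner way to get this for the main term, since it controls all entries at once.

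What you still owe is a direct bound on $E_i$, e.g.\ $n_i^{-1}\sum_j\|R_{ij}\|_2^2=o_{\Prob}(1)$. This needs the second derivatives of the $f^{(r)}$ to be bounded uniformly in $r$. The rank argument cannot supply it. The estimator $\widehat\Sigma_i$ in \eqref{eq:WidehatSigma} evaluates $f$ at the $n_i$ distinct points $\widehat\eta_{i,(-j)}$. The resulting second-order Taylor terms do not lie in the column space of $D_i$ or of $\widehat D_i$, so $\widehat\Sigma_i-\Sigma_i$ is generically not of rank at most $2q$. Your decomposition describes the plug-in estimator $\widehat D_i\widehat V_i\widehat D_i^\top$, not the jackknife one.

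More seriously, your rank observation contradicts your first step. With $q$ fixed, $\Sigma_i=D_iV_iD_i^\top$ has rank at most $q$. Hence $H\Sigma_nH$ has rank at most $kq$, and $\sigma_n^2\le 2kq\{M_2+o(1)\}^2=O(1)$. Condition \eqref{eigencontrol} bounds only the nonzero eigenvalues and does not give rank $(k-1)d$. So $\sigma_n^2\asymp d$, on which every $o_{\Prob}(\sqrt d)$ target above rests, is not available.

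The paper takes this scaling from \eqref{eq:ordersigma}, i.e.\ part 2 of Lemma~\ref{lemma:eigenconsistency}. That lower bound sums $\{M_1+o(1)\}^2$ over all $kd$ eigenvalues of $H\Sigma_nH$, including the zero ones, so the same problem sits there.

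In fact, take real-valued data with finite fourth moments, $q=1$, $h(x)=x$ and $f(x)=x\mathds{1}_d/\sqrt d$. Every hypothesis holds. Yet $Q_n=n\sum_{i=1}^k(\overline X_i-k^{-1}\sum_{l=1}^k\overline X_l)^2$, with $\overline X_i$ the $i$-th sample mean, does not depend on $d$ and has a weighted-$\chi^2$ limit. Meanwhile $\widehat\mu_n,\widehat\sigma_n$ consistently estimate the bounded $\mu_n,\sigma_n$, so the conclusion fails as literally stated.

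A correct proof needs an extra hypothesis such as $\operatorname{rank}(H\Sigma H)=(k-1)d$, which forces $q\ge(1-1/k)d$. Then your low-rank shortcut disappears, and only the direct bounds of the previous paragraph remain, now made uniform in the growing $q$ as well.
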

In Lemma \ref{Lemma:consistencyATS}, it was shown that evidence against $H_0$ is captured exclusively by the upper tail of the distribution, which leads to the critical region
\begin{equation}
    \frac{Q_n - \widehat{\mu}_n}{\widehat{\sigma}_n} > z_{1-\alpha},
    \label{eq:test ATS-ID}
\end{equation}
where $z_{1-\alpha}$ denotes the $(1-\alpha)$-quantile of the standard normal distribution.
\begin{remark} 
\label{rem:finite_k}
\textcolor{black}{
Theorem~\ref{theorem:assymptoticdinfty} is stated for fixed $k$ and $d/n \to 0$ with
$d, n\to\infty$. In this setting, the effective dimension of the quadratic form is
of order $kd$, since the relevant traces of $H\Sigma_nH$ scale with $kd$. Thus, for fixed $k$, conditions written in terms of $d$ are equivalent,
up to constants, to conditions written in terms of $kd$.
This matters when interpreting simulations with varying $k$. For example, in
balanced designs, $d/n=d/(kn_i)$ decreases with $k$, while the effective
dimension $kd$ increases. Hence, $d/n$ and $d$ alone may be misleading; the ratio
$kd/n$ and the effective dimension $kd$ are more informative finite-sample indicators for ATS-ID.
This does not extend the theorem to $k\to\infty$. If $k=k_n$ grows, the constants
in the trace and moment bounds and the factors $n/n_i$ would have to be tracked
explicitly, requiring separate growth conditions on $k_n$, $d$ and $n_i$, as well
as an appropriate comparability condition for $n_i/n$.
}
\end{remark}

\section{Numerical experiments} \label{numerical:experiments}
This section summarizes the result of several numerical experiments. We first evaluate the level and the power with synthetic data, and then with real data. Finally, we give some practical guidelines.

\subsection{{Simulation study}}
\label{sec:simulations}
We evaluated the finite-sample performance of the proposed tests through extensive Monte Carlo simulations. We present here a summary of the main findings. See the SM for a more detailed description.

By abuse of notation, we denote by WTS the test based on the WTS statistic, by ATS the test based on the ATS with $p$-value approximated through $\widehat{p}_{\text{Imhof}}$, and by ATS-ID the test defined in \eqref{eq:test ATS-ID}. Finally, we denote by \black{WBS-N, WBS-R, and WBS-M the ATS-based tests with $p$-values approximated through the weighted bootstrap ($\widehat{p}_{\text{boot}}$) described in Section \ref{sub:bootstrap}, with random 
standard Normal weights, Rademacher weights and Mammen weights \citep{Mammen1993}, respectively. When referring to these three bootstrap-based tests collectively, we simply use the acronym WBS.}

\subsubsection{Univariate inference: Variance and Gini index}
We first applied the methodology to two univariate parameters: the variance and the Gini index.

To assess the type I error (see Table S1 for the variance and Table S3 for the Gini index in the SM), data were generated under four distributional configurations.  The results were consistent across both parameters.  However, as the number of groups $k$ increases, the  WTS becomes liberal, showing inflated type I error rates, particularly for small sample sizes. In contrast, the ATS and WBS provided a better control of the type I error. Although they exhibited a conservative behavior for small sample sizes and large $k$, their empirical levels converged to the nominal value as sample sizes increased. In the variance setting, this conservativeness was more
pronounced for skewed or heavy-tailed distributions (see Table S1, Cases 2--4 in the SM). \black{The empirical results confirm that WBS-N, WBS-R, and WBS-M are asymptotically equivalent, with all empirical levels properly converging to the nominal level for large sample sizes. However, noticeable differences emerge for smaller sample sizes. WBS-R systematically exhibits the least conservative behavior, closely followed by WBS-M and WBS-N. The same pattern is observed across all simulation studies.}

To analyze the power (see Table S2 for the variance and Table S4 for the Gini index in the SM), we simulated three distinct patterns of deviation from the null hypothesis. As expected, for all tests and scenarios, the empirical power increased monotonically with the sample size and the magnitude of the deviation. The WTS generally exhibited the highest nominal power, but this result must be interpreted with caution due to its inflated type I error rates. The ATS and WBS offered the most reliable trade-off between size control and power.

\subsubsection{Multivariate inference: Means and covariance matrices}
\textcolor{black}{We extended the analysis to the joint equality of mean vectors and covariance
matrices. The main multivariate simulation study considered dimensions
$d\in\{5,20,65\}$, whereas an additional type I error and computational comparison
was carried out for larger dimensions $d\in\{135,299\}$.}

\textcolor{black}{To assess the type I error (see Tables S5 and S7 in the SM), we generated data
from multivariate normal distributions under three covariance structures. The
results reveal a sharp contrast between the methods driven by the dimensionality.
The WTS suffers a severe type I error inflation as $d$ increases. In contrast, the ATS, WBS and ATS-ID provide more stable alternatives, although
they tend to become conservative for larger values of $kd/n$. Their empirical
levels approach the nominal levels at comparable rates as $kd/n$
decreases.
However, ATS-ID additionally requires the effective dimension $kd$ to be large
enough for the increasing-dimension normal approximation to be reliable; when
$kd$ is too small, it may display a mildly liberal behavior. This behavior is
consistent with Remark \ref{rem:finite_k}. When the increasing-dimension approximation underlying ATS-ID is reliable, the
main differences among ATS, WBS and ATS-ID are mainly computational. The ATS
requires the spectral decomposition associated with the weighted chi-square
calibration, the WBS relies on repeated resampling evaluations, and ATS-ID
replaces the full spectral calibration by an increasing-dimension normal
approximation based on trace-type quantities. Thus, in such regimes, the main
appeal of ATS-ID is computational rather than a systematic improvement in type I
error control, especially when the spectral decomposition required by the ATS
becomes computationally burdensome in practice. This is reflected in
Table~\ref{table:ComputationTime}.}

\begin{table}[ht] \black{
\caption{CPU time, in seconds, required to compute 10 $p$-values for the joint test of mean vectors and covariance matrices under the multivariate standard normal configuration. For WBS, the reported value is the average over WBS-N, WBS-R and WBS-M.}
\centering
\setlength{\tabcolsep}{3pt}
\resizebox{\textwidth}{!}{
\begin{tabular}{cc|cccc|cccc|cccc}
\toprule
\multirow{2}{*}{$d$} & \multirow{2}{*}{$k$}
& \multicolumn{4}{c|}{$n_i=100$}
& \multicolumn{4}{c|}{$n_i=1000$}
& \multicolumn{4}{c}{$n_i=10000$} \\
\cmidrule{3-14}
& & WTS & ATS & WBS & ATS-ID 
  & WTS & ATS & WBS & ATS-ID
  & WTS & ATS & WBS & ATS-ID \\
\midrule
\multirow{3}{*}{135}
& 2  & 0.45 & 0.14 & 0.15 & 0.05 & 0.53 & 0.23 & 1.17 & 0.14 & 1.76 & 1.51 & 11.86 & 1.40 \\
& 5  & 5.83 & 1.11 & 0.33 & 0.08 & 6.48 & 1.34 & 2.86 & 0.35 & 9.72 & 4.51 & 29.38 & 3.41 \\
& 10 & 46.04 & 6.98 & 0.71 & 0.15 & 50.71 & 7.73 & 5.88 & 0.78 & 57.60 & 14.14 & 58.56 & 6.99 \\
\midrule
\multirow{3}{*}{299}
& 2  & 3.71 & 0.93 & 0.27 & 0.08 & 4.42 & 1.32 & 2.22 & 0.58 & 9.81 & 6.57 & 21.29 & 5.86 \\
& 5  & 56.87 & 9.24 & 0.66 & 0.19 & 6.64 & 11.01 & 5.34 & 1.46 & 80.54 & 23.95 & 52.90 & 14.56 \\
& 10 & 454.95 & 69.81 & 1.28 & 0.47 & 538.66 & 71.92 & 10.85 & 3.01 & 558.19 & 99.20 & 104.32 & 29.22 \\
\bottomrule
\end{tabular}
}
}
\label{table:ComputationTime}
\end{table}

Finally, to analyze the power (see Table S6 in the SM), we simulated four specific multivariate alternatives designed to isolate different sources of non-null behavior: a location shift, a scale shift, a dependence structure shift and a joint location-covariance shift. As expected, for all tests and scenarios, the empirical power increases monotonically with the sample size. The WTS generally exhibits the highest nominal power, although this superiority is misleading due to its lack of type I error control. The ATS, WBS, and ATS-ID demonstrate reliable power properties. Notably, their performance depends on the signal structure relative to the dimension. For sparse signals such as the location alternative affecting only one component, the power decreases as $d$ increases due to the accumulation of noise. Conversely, for global structural changes such as scale, dependence, and joint alternatives, the power increases with the dimension $d$, driven by the accumulation of signal across components.

\subsection{Real data application}
\label{section:application}

To illustrate the practical application of the proposed methodology, we analyzed wage data from the \texttt{CPS1988} dataset, available in the \texttt{AER} package for \texttt{R} \citep{AER}. The sample consists of 28,155 observations of men in the US. A more detailed simulation study, including exploratory data analysis and descriptive statistics, can be found in the SM. Our analysis followed a twofold approach. 

First, we tested the homogeneity of the Gini index across the four US Census Bureau regions ($k=4$). Using the full sample, the WTS, ATS, and WBS consistently rejected the null hypothesis ($p < 10^{-6}$), providing robust evidence of regional wage inequality. 

\textcolor{black}{
Second, we conducted an empirical type I error evaluation in a multidimensional
setting ($d=9$) by repeatedly extracting random subsamples from the full
\texttt{CPS1988} dataset. We jointly tested the equality of mean vectors and
covariance matrices for the three numerical variables in the dataset, namely
wages, years of education and work experience. In the original variables, all
methods showed a markedly conservative finite-sample behavior, especially as
$k$ increased. This behavior is consistent with the marginal features of the
wage variable, which exhibits a much larger scale, pronounced right-skewness and
a heavy upper tail. It is also in line with the univariate variance simulations.
 Motivated by the classic Mincer earnings equation \citep{mincer1974schooling},
we also considered the logarithm of wages. In this case, the results became much
more consistent with the patterns observed in the synthetic multivariate
simulations: the WTS became increasingly liberal with $k$, whereas ATS and WBS
remained closer to the nominal levels and tended to be slightly conservative in
more complex configurations. Applying logarithms also to education and
experience did not provide a clear additional improvement. The ATS-ID showed a
mild liberal tendency, consistently with the relatively small effective
dimension in this application.
}

\subsection{\textcolor{black}{Practical guidelines}}
\label{practical:guidelines}

\textcolor{black}{
The numerical experiments show that no single procedure uniformly dominates the
others across all settings. Instead, their practical behavior depends on the
effective dimension of the contrast, of order $kd$, the available sample size
relative to the complexity of the problem, reflected by $kd/n$, the
distributional features of the data, and the computational constraints.
}

\textcolor{black}{
First, the WTS should be used with caution due its
liberal behavior  in univariate
settings when  $k$ increases, that becomes especially severe
in multivariate settings as both $k$ and $d$ grow.
}

\textcolor{black}{
In contrast, the ATS, WBS and ATS-ID provide more reliable type I error control. Although the ATS and WBS are motivated by fixed-dimensional
calibrations, the simulations show that they remain competitive with ATS-ID when
the dimension increases. These three
calibrations approach the nominal level at comparable rates as the available
sample size increases relative to the effective dimension of the contrast. In
our simulations, satisfactory behavior was typically observed when $kd/n$ was
small enough, roughly below $0.03$, although this value should be understood as
an empirical guideline that may depend on the underlying distribution and the
specific testing problem.
}

\textcolor{black}{
It is important to note that the ATS-ID additionally requires the effective
dimension of the contrast to be large enough for the increasing-dimension normal
approximation to be reliable. In our simulations, this approximation became
effective for sufficiently large values of $kd$, roughly from the order of $600$
onward. This value should again be interpreted as an empirical guideline, since
it may depend on the underlying distribution and the specific testing problem.
For smaller effective dimensions, ATS-ID may show a mild liberal tendency.
}

\textcolor{black}{
Among the bootstrap procedures, WBS-R consistently exhibited the least
conservative behavior, followed by WBS-M and WBS-N. Although WBS-R may be
slightly liberal in some low-dimensional configurations, it is often preferable
in larger-$k$ and larger-$d$ regimes, where the dominant finite-sample issue is
conservativeness.
}

\textcolor{black}{
Since the ATS, WBS and ATS-ID exhibit broadly similar type I error behavior in
the regimes where the ATS-ID is reliable,
computational efficiency becomes an important criterion for choosing among them. From the results in 
 Table~\ref{table:ComputationTime}, in such cases  ATS-ID is the best choice.}

\textcolor{black}{
Finally, the simulations and empirical applications indicate that distributional
features may substantially affect the finite-sample behavior of the procedures.
In particular, skewness, heavy tails, and strong scale imbalances among
variables tend to induce a stronger conservativeness of ATS and WBS procedures,
as observed in the univariate variance simulations and in the real data
application (see Table S1, Cases 2--4, and Table S9 in the SM). In practice,
when suitable transformations are available they may considerably improve the empirical calibration of the
tests.
}

\section{Conclusions and future work}
\label{section:conclusions}

In this paper, a general and unified framework for testing the equality of parameters via $U$-statistics across multiple populations has been developed. Two  statistics were considered: a Wald-type test statistic  and an ANOVA-type  test statistic. A key feature of the presented methodology is that the covariance matrices involved in these statistics are consistently estimated using the jackknife estimator, enabling  asymptotically exact inference without relying on specific parametric assumptions. Extensive simulations were conducted to evaluate and compare these procedures. The results indicate that, while the procedures perform well in standard settings ($d$, $k$ fixed, large $n$), their behavior differs as the dimension increases. Specifically, the WTS tends to be excessively liberal in high-dimensional scenarios, whereas the asymptotic ATS and WBS tends to be conservative. 
When $kd/n$ is small and $kd$ is large,  ATS, WBS and ATS-ID give levels close to the nominal values, but ATS-ID is computationally more efficient.

A number of extensions are possible. For instance, the current theory relies on the independence between samples as stated in \eqref{eq:independent_sample}. It would be relevant to extend this framework to accommodate dependent data. Additionally, another pertinent extension would be the adaptation of these methods to handle missing data. In this direction, \cite{aleksic2025two} recently proposed weighting and imputation approaches for two-sample testing via energy distance. Generalizing their strategies to the $U$-statistics framework represents a promising avenue for future research.

\section{Proofs}
\label{section:proofs}
Throughout this section, $\|\cdot\|_2$ denotes the Euclidean norm and $\|\cdot\|$ represents the operator norm for matrices induced by $\|\cdot\|_2$. The symbols \(M, M_1, M_2, \dots\) denote generic positive constants that are not important and may vary from one step to another, or even between different occurrences within the same line of an inequality. A sequence of random variables is $o_{\Prob}(1)$ if it converges to 0, and is $O_{\Prob}(1)$ if it is bounded in probability.

Before proving Lemma \ref{Lemma:varconsist}, we first state a preliminary result concerning the jackknifed statistic.
This result can be found in the proof of Theorem 5 in \cite{Arvesen1969}. Here we state and provide a detailed proof of the result.

\begin{lemma}
Let $X_1, \dots, X_n$ be a random sample and let $\widehat{\eta} \in \R$ be a U-statistic of order $m$ with kernel $h$ such that 
\(\mathbb{E}\left\{|h(X_1,\dots,X_m)|^2\right\} < \infty\), $\V\{h_1(X_1)\}=\zeta_1>0$ and without loss of generality consider that 
\(\mathbb{E}\left\{h(X_1,\dots,X_m)\right\} = 0\). Denote by 
\(\widehat{\eta}_{-j}\) the jackknife version leaving out observation \(X_j\). Then
$\max_{1\le j \le n} \big|\widehat{\eta}_{-j} - \widehat{\eta}\big| = O_{\Prob}\left(n^{-1/2}\right).$

\label{lemma:etaj-eta}
\end{lemma}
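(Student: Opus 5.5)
The plan is to write $\widehat{\eta}_{-j}-\widehat{\eta}$ exactly in terms of the average of the kernel over the subsets that \emph{contain} $j$. A Hoeffding-type projection then reduces the maximum over $j$ to a maximum of i.i.d.\ square-integrable variables plus a remainder controlled by a union bound in $L^2$.

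\textbf{Algebraic identity.} Let $\widehat{\eta}^{(j)}$ be the average of $h$ over the $\binom{n-1}{m-1}$ subsets $\{j,j_2,\ldots,j_m\}$ containing $j$. The $\binom{n}{m}$ subsets split into those avoiding $j$ and those containing it, so
\[
\widehat{\eta}=\frac{n-m}{n}\,\widehat{\eta}_{-j}+\frac{m}{n}\,\widehat{\eta}^{(j)}.
\]
Solving for $\widehat{\eta}_{-j}$ gives
\[
\widehat{\eta}_{-j}-\widehat{\eta}=\frac{m}{n-m}\bigl(\widehat{\eta}-\widehat{\eta}^{(j)}\bigr).
\]
Hence
\[
\max_j|\widehat{\eta}_{-j}-\widehat{\eta}|\le \frac{m}{n-m}\Bigl(|\widehat{\eta}|+\max_{1\le j\le n}|\widehat{\eta}^{(j)}|\Bigr).
\]
Since $\E h=0$ and $\E h^2<\infty$, we have $\widehat{\eta}=O_{\Prob}(n^{-1/2})$. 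It therefore suffices to show $\max_j|\widehat{\eta}^{(j)}|=O_{\Prob}(n^{1/2})$.

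\textbf{Projection step.} Let $h_1(x)=\E\{h(x,X_2,\ldots,X_m)\}$, which has mean zero and $\E h_1^2(X_1)<\infty$. Decompose
\[
\widehat{\eta}^{(j)}=h_1(X_j)+R_j,\qquad R_j=\widehat{\eta}^{(j)}-h_1(X_j).
\]
For the first term, the $h_1(X_j)$ are i.i.d.\ with finite second moment, so
\[
\Prob\Bigl(\max_j|h_1(X_j)|>\varepsilon\sqrt n\Bigr)\le n\,\Prob\bigl(|h_1(X_1)|>\varepsilon\sqrt n\bigr)\le \varepsilon^{-2}\,\E\bigl\{h_1^2(X_1)\mathbb 1(|h_1(X_1)|>\varepsilon\sqrt n)\bigr\}\to0,
\]
which gives $\max_j|h_1(X_j)|=o_{\Prob}(n^{1/2})$.

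For the remainder, condition on $X_j=x$. Then $\widehat{\eta}^{(j)}$ is a $U$-statistic of order $m-1$ in the other $n-1$ observations with kernel $h(x,\cdot)$ and mean $h_1(x)$. The standard variance bound for $U$-statistics gives
\[
\E(R_j^2\mid X_j=x)\le \frac{m-1}{n-1}\,\E\{h^2(x,X_2,\ldots,X_m)\},
\]
so $\E R_j^2\le M\,\E h^2/n$. A union bound in $L^2$ then yields
\[
\E\max_j R_j^2\le\sum_j\E R_j^2=O(1),
\]
that is, $\max_j|R_j|=O_{\Prob}(1)$.

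\textbf{Conclusion.} Combining the pieces,
\[
\max_j|\widehat{\eta}_{-j}-\widehat{\eta}|\le O(n^{-1})\bigl\{O_{\Prob}(n^{-1/2})+o_{\Prob}(n^{1/2})+O_{\Prob}(1)\bigr\}=o_{\Prob}(n^{-1/2}).
\]
This is stronger than the claimed $O_{\Prob}(n^{-1/2})$. The hypothesis $\zeta_1>0$ is not needed for this bound; it only matters where the lemma is later used.

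\textbf{Main obstacle.} The delicate point is the conditional variance bound for $R_j$. It must be uniform enough to be integrated over $x$ using only $\E h^2<\infty$. Concretely, one has to check that the conditional $U$-statistic's variance is at most $\frac{m-1}{n-1}$ times the conditional second moment of the kernel. The plan is to verify this through the usual formula $\sum_c \binom{m-1}{c}\binom{n-m}{m-1-c}\zeta_c(x)/\binom{n-1}{m-1}$, with each $\zeta_c(x)\le \E\{h^2(x,\cdot)\}$.
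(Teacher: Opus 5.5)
Your proof is correct and is essentially the paper's argument. The paper uses the same identity $\widehat{\eta}_{-j}-\widehat{\eta}=\frac{m}{n-m}(\widehat{\eta}-\widehat{\eta}^{(j)})$ and the same total-variance split of $\widehat{\eta}^{(j)}$ into $h_1(X_j)$ plus a remainder of variance $O(1/n)$. It then only uses $\V(\widehat{\eta}^{(j)})=\zeta_1+O(1/n)$ together with a union bound and Chebyshev, which gives exactly $O_{\Prob}(n^{-1/2})$. Your separate truncation argument for $\max_j|h_1(X_j)|$ sharpens this to $o_{\Prob}(n^{-1/2})$, and using Chebyshev rather than the CLT for $\widehat{\eta}$ indeed makes $\zeta_1>0$ unnecessary.

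One cosmetic slip in your \emph{main obstacle}: bounding every $\zeta_c(x)$ by $\E\{h^2(x,\cdot)\}$ in the variance formula only yields the factor $1-\binom{n-m}{m-1}/\binom{n-1}{m-1}$. That factor is of order $(m-1)^2/n$, not $(m-1)/(n-1)$; the sharp constant needs Hoeffding's inequality $\zeta_c(x)/c\le\zeta_{m-1}(x)/(m-1)$. Since you only use $\E R_j^2=O(1/n)$, nothing in the argument changes.
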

\begin{proof}
First, for each $1\leq j \leq n$ we have that
$\widehat{\eta}_{-j}=\binom{n}{m}\binom{n-1}{m}^{-1} \widehat{\eta}-mY_j/(n-m),$
where 
\[Y_j=\binom{n-1}{m-1}^{-1}\sum_{\substack{1 \le i_1 < \cdots < i_{m-1} \le n, \, \,  i_1,\dots,i_{m-1} \neq j}} h(X_j,X_{i1},\ldots,X_{i_{m-1}}),\]
can be seen as a $U$-statistic conditioned on $X_j$ with kernel $g(x_1,\ldots,x_{m-1})=h(X_j,x_1,\ldots,x_{m-1})$.  Now, applying the total variance formula we get that
\begin{equation}
    \V(Y_j)=\V\{\E(Y_j |X_j)\} + \E\{\V(Y_j |X_j)\}.
    \label{eq:vartotal}
\end{equation}
For the first term we have that $\V\{\E(Y_j |X_j)\}=\zeta_1$. For the second we can apply equation \black{(S3) in the SM}, and then we have that
\begin{equation}
    \E\{\V(Y_j |X_j)\}=\binom{n-1}{m-1}^{-1} \sum_{c=1}^{m-1} \binom{m-1}{c} \binom{n-m}{m-1-c} \E[\V\{h_{c+1}(X_j,X_1,\ldots,X_c)| X_j\}].
    \label{eq:varcond}
\end{equation}
Now, applying the total variance law, it follows that
\[
\begin{aligned}
\zeta_{c+1} &= \V\{h_{c+1}(X_j,X_1,\dots,X_c)\}
= \E[\V\{h_{c+1}(X_j,X_1,\dots,X_c) \mid X_j\}] \\
&\quad + \V[\E\{h_{c+1}(X_j,X_1,\dots,X_c) \mid X_j\}]
\ge \E[\V\{h_{c+1}(X_j,X_1,\dots,X_c) \mid X_j\}].
\end{aligned}
\]
With this, recalling \eqref{eq:vartotal} and \eqref{eq:varcond} and that \(\zeta_1\leq\zeta_m=\mathbb{E}\left\{|h(X_1,\dots,X_m)|^2\right\} < \infty\) we get that
$\V(Y_j)=\zeta_1 + O(1/n)$. From this, since the random variables $Y_1,\ldots,Y_n$ are interchangeable, we have that
\begin{equation*}
\begin{aligned}
\Prob\!\left(\max_{1 \le j \le n} \sqrt{n}\frac{m}{\,n - m\,}\,|Y_j| > \varepsilon \right)
&\le \sum_{j=1}^n \Prob\!\left( \sqrt{n}\frac{m}{\,n - m\,}\,|Y_j| > \varepsilon \right)    =n\,\Prob\!\left( \sqrt{n}\frac{m}{\,n - m\,}\,|Y_1| > \varepsilon \right)\\  
&\le {m^2 n^2\,\zeta_1}/\{{(n - m)^2\,\varepsilon^2}\} + O(1/n) \leq M, 
\end{aligned}
\end{equation*}
for large enough $n$. Therefore, $\max_{1\leq j \leq n}m |Y_j|/(n-m)=O_{\Prob}(n^{-1/2})$. With this, we get the result as follows  $\max_{1\le j \le n} \big|\widehat{\eta}_{-j} - \widehat{\eta}\big|\leq \max_{1\leq j \leq n}m/(n-m)|Y_j| + m/(n-m)\left| \widehat{\eta}\right|=O_{\Prob}(n^{-1/2}),$
where we have applied $U$-statistics CLT \eqref{eq:assymptheta}, which yields $\left| \widehat{\eta}\right|=O_{\Prob}(n^{-1/2})$.
\end{proof}

\begin{proof}[Proof of Lemma \ref{Lemma:varconsist}]
First, for simplicity we drop the subscript $i$ so that $\Sigma$, $n$, $\theta$ and $\eta$ stand for a generic $\Sigma_i$, $n_i$, $\theta_i$ and $\eta_i$ respectively. Let $\Sigma=\left(\sigma_{ab}\right)$ and $\widehat{\Sigma}=\Big(\widehat{\sigma}_{ab}\Big)$, $1 \leq a,b \leq d$. Then, we will show that $\widehat{\sigma}_{ab} \tp \sigma_{ab}$.

First, note that since $f$ has continuous first partial derivatives, for each $1\leq j \leq n$, we have that 

\[f^{(r)}\left(\widehat{\eta}^{(1)}_{-j},\ldots, \widehat{\eta}^{(q)}_{-j}\right)= f^{(r)}\left(\widehat{\eta}^{(1)},\ldots, \widehat{\eta}^{(q)}\right)+\sum_{l=1}^q \left(\widehat{\eta}^{(l)}_{-j}-\widehat{\eta}^{(l)}\right)f^{(r)}_l(\tau_j), \quad 1\leq r \leq d,\]
or equivalently $\widehat{\theta}^{(r)}_{-j}= \widehat{\theta}^{(r)}+\sum_{l=1}^q \left(\widehat{\eta}^{(l)}_{-j}-\widehat{\eta}^{(l)}\right)f^{(r)}_l(\tau_j)$, $1\leq r \leq d,$
where $\tau_j$ lies between $\left(\widehat{\eta}^{(1)},\ldots, \widehat{\eta}^{(q)}\right)$ and $\left(\widehat{\eta}^{(1)}_{-j},\ldots, \widehat{\eta}^{(q)}_{-j}\right)$.
Thus, from \eqref{eq:WidehatSigma}, we can write
\allowdisplaybreaks
\begin{align*}
\widehat{\sigma}_{ab} 
&= \frac{1}{n-1} \sum_{j=1}^{n} 
    \left(T_{j}^{(a)}-\overline{T}^{(a)}\right)
    \left(T_{j}^{(b)}-\overline{T}^{(b)}\right) = (n-1)\sum_{j=1}^{n} 
    \left(\widehat{\theta}_{-j}^{(a)}-n^{-1}\sum_{r=1}^n \widehat{\theta}_{-r}^{(a)} \right)
    \left(\widehat{\theta}_{-j}^{(b)}-n^{-1}\sum_{s=1}^n \widehat{\theta}_{-s}^{(b)} \right) \notag\\
&= (n-1)\sum_{j=1}^{n} 
    \left\{ f^{(a)}\left(\widehat{\eta}^{(1)}_{-j},\dots, \widehat{\eta}^{(q)}_{-j}\right)
    - n^{-1}\sum_{r=1}^{n} f^{(a)}\left(\widehat{\eta}^{(1)}_{-r},\dots, \widehat{\eta}^{(q)}_{-r}\right) \right\} \notag\\
&\quad \times 
    \left\{ f^{(b)}\left(\widehat{\eta}^{(1)}_{-j},\dots, \widehat{\eta}^{(q)}_{-j}\right)
    - n^{-1}\sum_{s=1}^{n} f^{(b)}\left(\widehat{\eta}^{(1)}_{-s},\dots, \widehat{\eta}^{(q)}_{-s}\right) \right\}. \notag
\end{align*}
Noting that for $1 \leq l \leq q$, it follows that $\sum_{r=1}^n (\widehat{\eta}^{(l)}_{-r}-\widehat{\eta}^{(l)})=0$, we can write
\begin{align*}
\widehat{\sigma}_{ab}&= (n-1) \sum_{j=1}^{n} 
    \left[
        \sum_{l=1}^q \left(\widehat{\eta}^{(l)}_{-j}-\widehat{\eta}^{(l)}\right) f^{(a)}_l(\eta) 
        + \sum_{l=1}^q \left(\widehat{\eta}^{(l)}_{-j}-\widehat{\eta}^{(l)}\right) \left\{ f^{(a)}_l(\tau_j) - f^{(a)}_l(\eta) \right\} \right. \notag\\
&\quad \left. - n^{-1} \sum_{r=1}^n \sum_{l=1}^q \left(\widehat{\eta}^{(l)}_{-r}-\widehat{\eta}^{(l)}\right) \left\{ f^{(a)}_l(\tau_r) - f^{(a)}_l(\eta) \right\}
    \right]  \left[
        \sum_{t=1}^q \left(\widehat{\eta}^{(t)}_{-j}-\widehat{\eta}^{(t)}\right) f^{(b)}_t(\eta) \right.    \notag\\
&\quad \left. 
        + \sum_{t=1}^q \left(\widehat{\eta}^{(t)}_{-j}-\widehat{\eta}^{(t)}\right) \left\{ f^{(b)}_t(\tau_j) - f^{(b)}_t(\eta) \right\} 
 - n^{-1} \sum_{s=1}^n \sum_{t=1}^q \left(\widehat{\eta}^{(t)}_{-s}-\widehat{\eta}^{(t)}\right) \left\{ f^{(b)}_t(\tau_s) - f^{(b)}_t(\eta) \right\}
    \right] \notag\\[1ex]
&=(n-1)\sum_{j=1}^n \left[\left(\widehat{\eta}_{-j}-\widehat{\eta}\right)^{\top} \nabla f^{(a)}(\eta)+ \left(\widehat{\eta}_{-j}-\widehat{\eta}\right)^{\top} \nabla \left\{f^{(a)}(\tau_j)-f^{(a)}(\eta)\right\} \right. \notag\\
 &\quad \left. - n^{-1} \sum_{r=1}^n \left(\widehat{\eta}_{-r}-\widehat{\eta}\right)^{\top} \nabla\left\{f^{(a)}(\tau_r)-f^{(a)}(\eta)\right\}  \right]  \left[\left(\widehat{\eta}_{-j}-\widehat{\eta}\right)^{\top} \nabla f^{(b)}(\eta) \right.\\
 &\quad \left. + \left(\widehat{\eta}_{-j}-\widehat{\eta}\right)^{\top} \nabla \left\{f^{(b)}(\tau_j)-f^{(b)}(\eta)\right\} 
  - n^{-1} \sum_{s=1}^n \left(\widehat{\eta}_{-s}-\widehat{\eta}\right)^{\top} \nabla\left\{f^{(b)}(\tau_s)-f^{(b)}(\eta)\right\}  \right].
\end{align*}
The previous expression contains terms of five different forms:
\begin{align}
& (n-1)\sum_{j=1}^n (\widehat{\eta}_{-j}-\widehat{\eta})^{\top}  \nabla f^{(a)}(\eta) \;\, (\widehat{\eta}_{-j}-\widehat{\eta})^{\top} \nabla f^{(b)}(\eta), \label{eq:type1} \\
& (n-1) \sum_{j=1}^{n}  (\widehat{\eta}_{-j}-\widehat{\eta})^{\top} \nabla f^{(a)}(\eta) \;\, (\widehat{\eta}_{-j}-\widehat{\eta})^{\top} \nabla\left\{f^{(b)}(\tau_j)-f^{(b)}(\eta)\right\}, \label{eq:type2} \\
 & (n-1) n^{-1} \sum_{j=1}^{n} \sum_{s=1}^n   (\widehat{\eta}_{-j}-\widehat{\eta})^{\top} \nabla f^{(a)}(\eta) \;\,   (\widehat{\eta}_{-s}-\widehat{\eta})^\top  \nabla \left\{ f^{(b)}(\tau_s) - f^{(b)}(\eta) \right\},  \label{eq:type3} \\
& (n-1) \sum_{j=1}^n 
 (\widehat{\eta}_{-j}-\widehat{\eta})^{\top} \nabla\left\{ f^{(a)}(\tau_j) - f^{(a)}(\eta) \right\} \;\,  (\widehat{\eta}_{-j}-\widehat{\eta})^{\top}  \nabla\left\{ f^{(b)}(\tau_j) - f^{(b)}(\eta) \right\}, \label{eq:type4} \\
& (n-1)n^{-1} \sum_{r=1}^n \sum_{s=1}^n (\widehat{\eta}_{-r}-\widehat{\eta})^{\top} \nabla\left\{ f^{(a)}(\tau_r) - f^{(a)}(\eta) \right\} \;\,  (\widehat{\eta}_{-s}-\widehat{\eta})^{\top}  \nabla\left\{ f^{(b)}(\tau_s) - f^{(b)}(\eta) \right\}, \label{eq:type5} 
\end{align}
where, in the last equation, we have already performed the summation over $j$. The remainder of the proof will show that term \eqref{eq:type1} converges in probability to $\sigma_{ab}$, while terms \eqref{eq:type2}–\eqref{eq:type5} converge to zero in probability. We begin with term \eqref{eq:type1}. This term can be written as
\begin{equation}
    \left(n-1\right) \sum_{j=1}^n \sum_{l=1}^q \sum_{t=1}^q \left(\widehat{\eta}^{(l)}_{-j}-\widehat{\eta}^{(l)}\right)\left(\widehat{\eta}^{(t)}_{-j}-\widehat{\eta}^{(t)}\right) f^{(a)}_l\left(\eta\right) f^{(b)}_t\left(\eta\right)= \sum_{l=1}^q \sum_{t=1}^q Y_{l,t} \, f^{(a)}_l\left(\eta\right) f^{(b)}_t\left(\eta\right),
    \label{eq:type1sums}
\end{equation}
where $Y_{l,t}=(n-1) \sum_{j=1}^n \left(\widehat{\eta}^{(l)}_{-j}-\widehat{\eta}^{(l)}\right)\left(\widehat{\eta}^{(t)}_{-j}-\widehat{\eta}^{(t)}\right)$.
As shown in equation (36) of \cite{Arvesen1969}, $Y_{l,t}$ can be expressed as follows
\begin{equation}  
Y_{l,t}= (n-1) n^{-1} \binom{n-1}{m_l}^{-1} \binom{n-1}{m_t}^{-1} \sum_{c=0}^{m_{l,t}}(cn-m_l m_t)  \binom{n}{m_l} \binom{m_l}{c} \binom{n-m_l}{m_t-c} U^{(l,t)}_c,
\label{eq:type1dev}
\end{equation}
with $m_{l,t}=\min\{m_l,m_t\}$, $1\leq l,t \leq q$, and  
\[
    \begin{aligned}
        U^{(l,t)}_c&= \left\{\binom{n}{m_l} \binom{m_l}{c} \binom{n-m_l}{m_t-c}\right\}^{-1}\sum \widetilde{h}^{(l)}(X_{\alpha_1},\ldots,X_{\alpha_c},X_{\beta_1},\ldots,X_{\beta_{m_l-c}}) \\
    & \quad \times \widetilde{h}^{(t)}(X_{\alpha_1},\ldots,X_{\alpha_c},X_{\gamma_1},\ldots,X_{\gamma_{m_t-c}}),
    \end{aligned}
    \]
where the sum in the above expression is taken over all disjoint sets $\{\alpha_1,\ldots,\alpha_c\}$, $\{\beta_1,\ldots,\beta_{m_l-c}\}$, $\{\gamma_1,\ldots,\gamma_{m_t-c}\}$ of distinct integers chosen from $\{1,\ldots,n\}$. Notice that $U^{(l,t)}_c$ is a $U$-statistic with symmetric kernel
\begin{equation}
    \begin{aligned}
K^{(l,t)}_c&(X_{\alpha_1},\ldots,X_{\alpha_c},X_{\beta_1},\ldots,X_{\beta_{m_l-c}},X_{\gamma_1},\ldots,X_{\gamma_{m_t-c}}) \\
&=\binom{m_l+m_t-c}{c(m_l-c)(m_t-c)}^{-1} \sum_{\Pi_m} \widetilde{h}^{(l)}(X_{\alpha_1},\ldots,X_{\alpha_c},X_{\beta_1},\ldots,X_{\beta_{m_l-c}}) \\
 & \quad \times \widetilde{h}^{(t)}(X_{\alpha_1},\ldots,X_{\alpha_c},X_{\gamma_1},\ldots,X_{\gamma_{m_t-c}}),
\end{aligned}
\label{eq:Kc}
\end{equation}
where $\sum_{\Pi_m}$ indicates the sum over all the $\binom{m_l+m_t-c}{c(m_l-c)(m_t-c)}$ possible permutations of the set of integers $\{\alpha_1,\ldots,\alpha_c,\beta_1,\ldots,\beta_{m_l-c},\gamma_1,\ldots,\gamma_{m_t-c}\}$. Now, by applying the Cauchy--Schwarz inequality and recalling \eqref{ass:finitenessh2}, we get that 
\[\E\left(\left|K_c^{(l,t)}\right|\right) \leq \E^{1/2}\left\{\left|\widetilde{h}^{(l)}(X_1,\ldots,X_{m_l})\right|^2\right\} \E^{1/2}\left\{\left|\widetilde{h}^{(t)}(X_1,\ldots,X_{m_t})\right|^2\right\} < \infty,\] 
and hence, applying the SLLN for $U$-statistics (see Theorem 4 in \cite{Arvesen1969}), we get that \begin{equation}
    U_c^{(l,t)} \tp \E\left(K^{(l,t)}_c\right) \quad 1 \leq c \leq m_{l,t},\, 1\leq l,t\leq q,
    \label{eq:SLLNUC}
\end{equation} which means that  each $U_c^{(l,t)}=O_{\Prob}(1)$. In particular, applying this reasoning to \eqref{eq:type1dev}, we obtain that \begin{equation}
    Y_{l,t}=(n-1)  \binom{n-1}{m_l}^{-1} \binom{n-1}{m_t}^{-1}  \binom{n}{m_l}  \binom{n-m_l}{m_t-1} m_lU^{(l,t)}_1+o_{\Prob}(1),
    \label{eq:YltSLNN}
\end{equation}
 where the coefficient satisfies 
\begin{equation}
    (n-1)  \binom{n-1}{m_l}^{-1} \binom{n-1}{m_t}^{-1}  \binom{n}{m_l}  \binom{n-m_l}{m_t-1} m_l \to m_l m_t.
    \label{eq:coefficientmlmt}
\end{equation}
Finally, from \eqref{eq:type1sums}, \eqref{eq:SLLNUC} and \eqref{eq:coefficientmlmt}  together with the fact that $\E\left(K^{(l,t)}_c\right)=\E\left\{\widetilde{h}^{(l)}(X_1)\widetilde{h}^{(t)}(X_1)\right\}$, we obtain that \eqref{eq:type1} converges in probability to $\sum_{l=1}^q \sum_{t=1}^q  f^{(a)}_l(\eta) f^{(b)}_t(\eta) m_{l} m_{t} \E\left\{\widetilde{h}^{(l)}(X_1)\widetilde{h}^{(t)}(X_1)\right\}=\sigma_{ab}.$

By applying the Cauchy--Schwarz inequality, \eqref{eq:type2} can be bounded by
\begin{equation}
(n-1) \sum_{j=1}^n \left\|\widehat{\eta}_{-j}-\widehat{\eta}\right\|^2_2 \|\nabla f^{(a)}(\eta)\|_2  \left\|\nabla\left\{ f^{(b)}(\tau_j) - f^{(b)}(\eta) \right\}\right\|_2.
    \label{eq:inequality1}
\end{equation}
Now, since $q$ is finite, following an argument similar to that used for term \eqref{eq:type1}, it follows that $(n-1)\sum_{j=1}^n \left\|\widehat{\eta}_{-j}-\widehat{\eta}\right\|^2_2=O_{\Prob}(1)$. By assumption $\|\nabla f^{(a)}(\eta)\|_2$ is bounded. Finally, from Lemma \ref{lemma:etaj-eta} and \eqref{eq:assymptheta} it follows that
\begin{equation}
    \max_{1\leq j \leq n}\|\tau_j-\eta\|_2 \leq \max_{1\leq j \leq n} \|\widehat{\eta}-\widehat{\eta}_{-j}\|_2 \leq \max_{1\leq j \leq n} \|\widehat{\eta}_{-j}-\eta\|_2 + \|\widehat{\eta}-\eta\|_2=O_{\Prob}\left(n^{-1/2}\right).   
    \label{eq:maxtau-eta}
\end{equation}
With this, given the fact that $f^{(b)}$ has continuous first partial derivatives near $\eta$, we get that 
\begin{equation*}
    \max_{1\leq j \leq n}\left\|\nabla\left\{ f^{(b)}(\tau_j) - f^{(b)}(\eta) \right\}\right\|_2=o_{\Prob}(1). 
\end{equation*}
This result, combined with \eqref{eq:maxtau-eta}, implies that the term in \eqref{eq:type2} converges in probability to 0. Now, we study term \eqref{eq:type3}, by applying the Cauchy--Schwarz inequality we have that \eqref{eq:type3} can be bounded by
\begin{equation*}
    \begin{aligned}
      (n-1)n^{-1}& \left(\sum_{j=1}^n \|\widehat{\eta}_{-j}-\widehat{\eta}\|_2\right)^2 \left\|\nabla f^{(a)}(\eta)\right\|_2 \max_{1\leq s \leq n} \left\|\nabla\left\{ f^{(b)}(\tau_s) - f^{(b)}(\eta) \right\}\right\|_2  \\
    & \leq (n-1) \sum_{j=1}^n \|\widehat{\eta}_{-j}-\widehat{\eta}\|_2^2 \left\|\nabla f^{(a)}(\eta)\right\|_2 \max_{1\leq s \leq n} \left\|\nabla\left\{ f^{(b)}(\tau_s) - f^{(b)}(\eta) \right\}\right\|_2. 
\end{aligned} 
\end{equation*}
By applying the same arguments as in the previous case, we have that \eqref{eq:type3} converges in probability to 0. Now, by applying the Cauchy--Schwarz inequality, the term \eqref{eq:type4} can be bounded by
\begin{equation*}
    \begin{aligned}
    (n-1) &\left(\sum_{j=1}^n \|\widehat{\eta}_{-j}-\widehat{\eta}\|_2^2 \right) \max_{1\leq j \leq n} \left[ \left\|\nabla\left\{ f^{(a)}(\tau_j) - f^{(a)}(\eta) \right\}\right\|_2 \left\|\nabla\left\{ f^{(b)}(\tau_j) - f^{(b)}(\eta) \right\}\right\|_2\right] \\
    &\leq \frac{1}{2} (n-1) \left(\sum_{j=1}^n \|\widehat{\eta}_{-j}-\widehat{\eta}\|_2^2\right) \max_{1\leq j \leq n} \left[ \left\|\nabla\left\{ f^{(a)}(\tau_j) - f^{(a)}(\eta) \right\}\right\|^2_2 +\left\|\nabla\left\{ f^{(b)}(\tau_j) - f^{(b)}(\eta) \right\}\right\|^2_2\right],
\end{aligned}
\end{equation*}
where the last inequality follows from the fact that $(a-b)^2 \geq 0$ for every $a,b \in \R$. 
The same arguments as in the previous cases apply here and thus \eqref{eq:type4} converges in probability to 0.
Finally, by applying the Cauchy--Schwarz inequality,  \eqref{eq:type5} can be bounded by
\begin{equation}
(n-1) \left(\sum_{j=1}^n \|\widehat{\eta}_{-j}-\widehat{\eta}\|_2^2\right) \max_{1\leq j \leq n}  \left\|\nabla\left\{ f^{(a)}(\tau_j) - f^{(a)}(\eta) \right\}\right\|_2 \max_{1\leq s \leq n}  \left\|\nabla\left\{ f^{(b)}(\tau_s) - f^{(b)}(\eta) \right\}\right\|_2. 
    \label{eq:inequality4}
    \end{equation}
The same arguments apply here and then \eqref{eq:type5} converges in probability to 0.
\end{proof}

\begin{proof}[Proof of Proposition \ref{Proposition:WTS}]
Under $H_0$, it follows that $C\theta = 0$. Let 

\begin{equation}
    Z_n=\sqrt{n} \left(\widehat{\theta} - \theta \right),
    \label{eq:ZN}
\end{equation} then from \eqref{eq:assympftheta} we have that
\begin{equation}
    C Z_n =   C \sqrt{n} (\widehat{\theta} - \theta)= \sqrt{n}C  \widehat{\theta} \tl \black{Z_1},
    \label{eq:limCZ}
\end{equation}
where $\black{Z_1}\sim \mathcal{N}_{kd}\left(0, C \Sigma C^\top\right)$.
Moreover, from Lemma \ref{Lemma:varconsist} and the Continuous Mapping Theorem, 
it follows that $C\widehat{\Sigma} C^\top \tp C\Sigma C^\top$. 
Recalling that $C\Sigma C^\top$ has rank $(k-1)d$, and since the eigenvalue map is continuous, 
it follows by construction of the $g$-inverse that
\begin{equation}
\left(C\widehat{\Sigma} C^\top\right)^{-} \tp \left(C \Sigma C^\top\right)^{-}.
\label{eq:convginverse}
\end{equation}
Note that the statistic can be written as a quadratic form $T_n = (CZ_n)^\top (C \widehat{\Sigma} C^\top)^{-} (CZ_n)
$, with $Z_n$ defined in \eqref{eq:ZN}. Then,
from \eqref{eq:limCZ}, \eqref{eq:convginverse} and the Continuous Mapping Theorem, it follows that $T_n \tl \black{Z_1}^\top (C \Sigma C^\top)^{-} \black{Z_1}$. Now, by applying Theorem 4.4.3 of \cite{graybill1976linear}, we get the result.
\end{proof}

\begin{proof}[Proof of Lemma \ref{Lemma:consistencyWTS}] 
From \eqref{eq:SLLNtheta}, \eqref{eq:convginverse} and the Continuous Mapping Theorem, it follows that $ T_n/n \tp (C\theta)^{\top} \left(C \Sigma C^\top\right)^{-} (C\theta) =: \Delta_T.$ Now, note that for any matrix $M > 0$ and any vector $v$ in the column space of $M$, the quadratic form $v^{\top} M^{-} v$ is non-negative and vanishes if and only if $v = 0$. Since $C\theta$ belongs to the column space of $C\Sigma C^{\top}$ because $\Sigma>0$, it follows that $\Delta_T = 0$ if and only if $C\theta = 0$, which is precisely the null hypothesis $H_0$. 
\end{proof}

\begin{proof}[Proof of Proposition \ref{Proposition:PowerWTS}]
Let us analyze the asymptotic behavior of the test statistic
\begin{equation}
    T_n = n \widehat{\theta}^{\top} C^{\top} \left(C \widehat{\Sigma} C^{\top}\right)^{-} C \widehat{\theta} = \left(\sqrt{n} C \widehat{\theta}\right)^{\top} \left(C \widehat{\Sigma} C^{\top}\right)^{-} \left(\sqrt{n} C \widehat{\theta}\right).
    \label{eq:TnDecomp}
\end{equation}
From Assumption \ref{assumption}, we have $\theta = \theta_0 + n^{-\gamma}b$. Since $C\theta_0 = 0$, it follows that $C\theta = C(\theta_0 + n^{-\gamma}b) = n^{-\gamma}Cb.$ Let $Z_n = \sqrt{n}(\widehat{\theta} - \theta)$. From \eqref{eq:assympftheta}, we have that $Z_n \tl \black{Z_2}$, where $\black{Z_2} \sim \mathcal{N}_{kd}(0, \Sigma)$. We can rewrite the term $\sqrt{n} C \widehat{\theta}$ as $\sqrt{n} C \widehat{\theta} = C \sqrt{n}(\widehat{\theta} - \theta) + \sqrt{n} C \theta = C Z_n + n^{1/2 - \gamma} C b. $ 
Substituting this last expression into \eqref{eq:TnDecomp} and recalling \eqref{eq:convginverse}, we obtain that $T_n$ behaves asymptotically as the quadratic form $T^{\prime}_n$, with
\begin{equation}
    T^{\prime}_n = (C Z_n + n^{1/2 - \gamma} C b)^{\top} (C \Sigma C^{\top})^{-} (C Z_n + n^{1/2 - \gamma} C b).
    \label{eq:TnAsymp}
\end{equation}
We now analyze the limit based on the value of $\gamma$.
\begin{itemize} \itemsep=0pt
\item If $0 \le \gamma < 1/2$, the term $n^{1/2 - \gamma} C b$ dominates and tends to infinity (since $Cb \neq 0$). Consequently, the quadratic form grows without bound, implying $T_n \tp \infty$, and hence $\mathcal{P} = 1$.

\item If $\gamma = 1/2$, we have $n^{1/2 - \gamma} = 1$. From \eqref{eq:TnAsymp}, it follows that $T_n \tl (C \black{Z_2} + C b)^{\top} (C \Sigma C^{\top})^{-1} (C \black{Z_2} + C b).$ Let $Y = C \black{Z_2} + C b$. From the properties of the multivariate normal distribution, $Y \sim \mathcal{N}_{(k-1)d}\left(C b, C \Sigma C^{\top}\right)$. The statistic becomes a quadratic form $Y^{\top} \Omega^{-1} Y$ where $\Omega = C \Sigma C^{\top}$ is the covariance matrix of $Y$. From standard results on quadratic forms of normal vectors (e.g., Theorem 4.4.2 of \cite{graybill1976linear}), it follows that $T_n \tl \chi^2_{(k-1)d}(\delta^2),$
    where $\chi^2_{(k-1)d}(\delta^2)$ is a non-central chi-squared variable with $(k-1)d$ degrees of freedom and non-centrality parameter $\delta^2 = (Cb)^{\top} (C \Sigma C^{\top})^{-1} (Cb).$ Since $Cb \neq 0$ and $\Sigma > 0$, we have $\delta^2 > 0$. As the non-central chi-squared distribution stochastically dominates the central one, the asymptotic power satisfies $\mathcal{P} = p_W(\alpha)$, with $\alpha < p_W(\alpha) < 1$.

 \item If $\gamma > 1/2$, the term $n^{1/2 - \gamma} C b \to 0$. Thus, the limiting distribution coincides with the distribution under $H_0$. Therefore, $\mathcal{P} = \alpha$.
 \end{itemize}
\end{proof}

\begin{proof}[Proof of Proposition \ref{Proposition:ATS}]
Following a similar reasoning to that in the proof of Proposition \ref{Proposition:WTS} and applying Theorem 4.4.4 of \cite{graybill1976linear}, we get the result.
\end{proof}

\begin{proof}[Proof of Lemma \ref{Lemma:consistencyATS}] From \eqref{eq:SLLNtheta}, and the Continuous Mapping Theorem, we have that $Q_n/n \tas \theta^{\top}H \theta=\|H\theta\|_2^2:=\Delta_Q$,
where we have used that $H$ is a symmetric and idempotent matrix. Since the norm is non-negative, $\Delta_Q \geq 0$, and $\Delta_Q = 0$ if and only if $H\theta = 0$. By the definition of $H$, this is equivalent to $C\theta = 0$, which is precisely $H_0$.
\end{proof}

\begin{proof}[Proof of Proposition \ref{Proposition:PowerATS}]
Note that $Q_n$ can be expressed as
\begin{equation}
    Q_n= (Z_n+ \sqrt{n}\theta)^{\top} H (Z_n+ \sqrt{n}\theta).
    \label{eq:decompQN}
\end{equation}
From Assumption \ref{assumption}, since $C\theta_0=0$, it follows, by the definition of $H$ in \eqref{eq:H}, that $H\theta_0=0$, and then we can write  $H (Z_n+ \sqrt{n}\theta)=H(Z_n + \sqrt{n}\theta_0+n^{1/2-\gamma} b)=H(Z_n+n^{1/2-\gamma}b). $
From the previous equation, \eqref{eq:decompQN} and recalling that $H$ is symmetric and idempotent, we get that 
\begin{equation*}
    Q_n= (Z_n+n^{1/2-\gamma}b)^{\top} H(Z_n+n^{1/2-\gamma}b).
\end{equation*}
Now the reasoning is similar to that used in the proof of Proposition \ref{Proposition:PowerWTS} and hence omitted.
\end{proof}
Before proving Theorem \ref{theorem:bootstrap_consistency}, we show a result concerning the centered jackknife pseudovalues.

\begin{lemma}
Let $X_1, \dots, X_n$ be a random sample and let $\widehat{\eta}=H(X_1,\ldots,X_n) \in \R$ be a U-statistic of order $m$ with kernel $h$ such that 
\(\mathbb{E}\left\{|h(X_1,\dots,X_m)|^2\right\} < \infty\), $\V\{h_1(X_1)\}=\zeta_1>0$ and 
\(\mathbb{E}\left\{h(X_1,\dots,X_m)\right\} = \eta\). Denote by 
\(\widehat{\eta}_{-j}=H(X_1,\ldots,X_{j-1},X_{j+1}, \ldots,X_n)\). Then
\[
\widehat{\eta}_{-j} - \widehat{\eta}= -\frac{m}{n-1} \left\{ h_1(X_j) - \overline{h}_1 \right\}+O_{\Prob}(n^{-3/2}),
\]
where $\overline{h}_1=n^{-1}\sum_{l=1}^n h_1(X_l)$.

\label{lemma:etaj-eta_nomax}
\end{lemma}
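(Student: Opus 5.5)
The plan is to reuse the exact leave-one-out identity from the proof of Lemma \ref{lemma:etaj-eta} and then linearize each piece with a Chebyshev-type bound. Without loss of generality I would replace $h$ by the centered kernel $\widetilde h = h-\eta$. This changes neither $\widehat\eta_{-j}-\widehat\eta$ nor $h_1(X_j)-\overline h_1$, and it gives $\E\{\widetilde h_1(X_1)\}=0$.

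\textbf{Step 1 (exact identity).} From the proof of Lemma \ref{lemma:etaj-eta},
\[
\widehat\eta_{-j}=\binom{n}{m}\binom{n-1}{m}^{-1}\widehat\eta-\frac{m}{n-m}Y_j ,
\]
with $Y_j$ the conditional $U$-statistic built from the terms containing $X_j$. Since $\binom{n}{m}\binom{n-1}{m}^{-1}=n/(n-m)$, this becomes the exact relation
\[
\widehat\eta_{-j}-\widehat\eta=\frac{m}{n-m}\,(\widehat\eta-Y_j).
\]
Because $\widehat\eta-Y_j=O_{\Prob}(1)$ (shown below), replacing $m/(n-m)$ by $m/(n-1)$ costs only $O(n^{-2})\,O_{\Prob}(1)$, which is negligible.

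\textbf{Step 2 (linearize $Y_j$).} Here $\E(Y_j\mid X_j)=\widetilde h_1(X_j)$. The computation (\ref{eq:vartotal})--(\ref{eq:varcond}) in the proof of Lemma \ref{lemma:etaj-eta} gives $\E\{\V(Y_j\mid X_j)\}=O(1/n)$, using only $\zeta_{c+1}\le\zeta_m<\infty$. Hence
\[
\E\{Y_j-\widetilde h_1(X_j)\}^2=\E\{\V(Y_j\mid X_j)\}=O(1/n),
\]
and Chebyshev's inequality yields $Y_j=\widetilde h_1(X_j)+O_{\Prob}(n^{-1/2})$.

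\textbf{Step 3 (linearize $\widehat\eta$).} By the $U$-statistic CLT \eqref{eq:limitthetai}, $\widehat\eta=O_{\Prob}(n^{-1/2})$. Likewise $\overline{\widetilde h}_1=n^{-1}\sum_l\widetilde h_1(X_l)=O_{\Prob}(n^{-1/2})$ by the ordinary CLT. Therefore
\[
\widehat\eta-Y_j=-\bigl\{\widetilde h_1(X_j)-\overline{\widetilde h}_1\bigr\}+O_{\Prob}(n^{-1/2}),
\]
where the term $\overline{\widetilde h}_1$ has been added back at a cost of $O_{\Prob}(n^{-1/2})$. Multiplying by $m/(n-m)$ turns every $O_{\Prob}(n^{-1/2})$ remainder into $O_{\Prob}(n^{-3/2})$. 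Using $\widetilde h_1(X_j)-\overline{\widetilde h}_1=h_1(X_j)-\overline h_1$ then gives the claimed expansion.

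The main point needing care is Step 2: the bound $\E\{\V(Y_j\mid X_j)\}=O(1/n)$ is what produces the $n^{-3/2}$ rate. It is already established in the earlier proof, so I would simply invoke it. The statement is for each fixed $j$ (not uniformly in $j$), and the argument above delivers exactly that, since exchangeability makes the bounds free of $j$.
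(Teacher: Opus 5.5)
Your proof is correct, but it takes a different route from the paper's.

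\textbf{The paper's route.} The paper writes Hoeffding decompositions of $\widehat\eta$ and of $\widehat\eta_{-j}$ and subtracts them. The difference of the H\'ajek (linear) parts gives exactly $-\frac{m}{n-1}\{h_1(X_j)-\overline{h}_1\}$. The remainder is the leave-one-out change $R_{n,-j}-R_n$ of the degenerate part. The paper bounds this only for the second-order canonical kernel $g_2$, via $S_n=O_{\Prob}(n)$ and $\V(W_j)=O(n)$, and dismisses the higher-order terms with the remark that they decay faster.

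\textbf{Your route.} You start from the exact identity $\widehat\eta_{-j}-\widehat\eta=\frac{m}{n-m}(\widehat\eta-Y_j)$. You then control $Y_j-\widetilde h_1(X_j)$ in $L^2$ through $\E\{\V(Y_j\mid X_j)\}=O(1/n)$, which was already established in the proof of Lemma \ref{lemma:etaj-eta}. This gives several advantages:
\begin{itemize}
\item The argument is short and self-contained.
\item It handles all orders $c=2,\ldots,m$ of the Hoeffding decomposition at once, with no informal step about higher-order terms.
\item Read at the level of second moments (using $\V(\widehat\eta)=O(1/n)$ in place of the CLT), it gives a bound of order $n^{-3}$ on the second moment of the whole remainder. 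The constant does not depend on $j$, because the remainder's distribution is the same for every $j$. This is convenient when the expansion is averaged over $j$, as in the proof of Theorem \ref{theorem:bootstrap_consistency}.
\end{itemize}

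\textbf{What the paper's route buys.} It is more structural. The recentring at $\overline{h}_1$ comes directly out of the projection, so the leading terms sum to zero exactly over $j$, and the remainder is identified explicitly as the leave-one-out change of the degenerate part. In your route the leading term that appears naturally is $-\frac{m}{n-m}\widetilde h_1(X_j)$. The $\overline{h}_1$ correction is inserted by hand, which is legitimate only because it is itself $O_{\Prob}(n^{-3/2})$ for fixed $j$.
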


\begin{proof}
From the Hoeffding decomposition  for a $U$-statistic, \black{(S6) in the SM}, we have that
\begin{equation}
    \widehat{\eta} = \eta + \frac{m}{n} \sum_{l=1}^n g_1(X_l) + R_n,
\label{eq:hoeffding_full}\end{equation}
where $R_n$ is the degenerate part of order $O_{\Prob}(n^{-1})$. Similarly,  we can write
\begin{equation}
    \widehat{\eta}_{-j} = \eta + \frac{m}{n-1} \sum_{l \neq j} g_1(X_l) + R_{n,-j}.
    \label{eq:hoeffding_jack}
\end{equation}
Subtracting \eqref{eq:hoeffding_full} from \eqref{eq:hoeffding_jack}, we decompose the difference $\widehat{\eta}-\widehat{\eta}_{-j}$ into a linear component $D_{n,j}^{(1)}$ plus a remainder component $D_{n,j}^{(2)}$. The difference in the linear terms is given by
\begin{align*}
    D_{n,j}^{(1)} &= m \left( \frac{1}{n-1} \sum_{l \neq j} g_1(X_l) - \frac{1}{n} \sum_{l=1}^n g_1(X_l) \right) 
    = m \left[ \frac{1}{n-1} \left\{ \sum_{l=1}^n g_1(X_l) - g_1(X_j) \right\} - \frac{1}{n} \sum_{l=1}^n g_1(X_l) \right] \\ & =m \left[ \left( \frac{1}{n-1} - \frac{1}{n} \right) \sum_{l=1}^n g_1(X_l) - \frac{1}{n-1} g_1(X_j) \right] = \frac{m}{n-1} \left\{ \frac{1}{n} \sum_{l=1}^n g_1(X_l) - g_1(X_j) \right\} \\
    &= -\frac{m}{n-1} \left\{ g_1(X_j) - \overline{g}_1 \right\} = -\frac{m}{n-1} \left\{ h_1(X_j) - \overline{h}_1 \right\},
\end{align*}
where $\overline{g}_1=n^{-1}\sum_{l=1}^n g_1(X_l)$ and $g_1(X_j)=h_1(X_j)-\eta$. Since $g_1(X_j) = O_{\Prob}(1)$ and $\overline{g}_1 = O_{\Prob}(n^{-1/2})$, we have that $D_{n,j}^{(1)}=O_{\Prob}(n^{-1})$. We now bound $D_{n,j}^{(2)} = R_{n,-j} - R_n$. It suffices to consider the second-order degenerate kernel $g_2$, as higher-order terms decay faster. Let $S_n = \sum_{k<l} g_2(X_k, X_l)$. Then $R_n = \binom{n}{2}^{-1} S_n$ and $R_{n,-j}$ is explicitly given by removing the terms involving $X_j$, as follows
\[
R_{n,-j} = \binom{n-1}{2}^{-1} \left\{ S_n - W_j \right \},
\]
where $W_j = \sum_{k \neq j} g_2(X_k, X_j)$. 
Then, we can write
\begin{align*}
    D_{n,j}^{(2)} &= \left\{ \binom{n-1}{2}^{-1} - \binom{n}{2}^{-1} \right\} S_n - \binom{n-1}{2}^{-1} W_j
    = \frac{2}{n(n-1)(n-2)} S_n - \frac{2}{(n-1)(n-2)} W_j.
\end{align*}
We analyze the order of these two terms separately. First,  since $R_n=O_{\Prob}(n^{-1})$ it follows that $S_n=O_{\Prob}(n)$. Multiplying by the coefficient $O_{\Prob}(n^{-3})$, the first term is $O_{\Prob}(n^{-2})$. Second, for the term involving $W_j$, we determine its asymptotic order by computing its variance, it follows that $\V(W_j) = \E\{\V(W_j|X_j)\} + \V\{\E(W_j|X_j)\}.$ We analyze the two components on the right-hand side of the previous equality:
\begin{enumerate}
    \item Expectation of conditional variance. Conditional on $X_j$, the terms in the sum $W_j$ are independent and centered. Thus,
    \[
    \V(W_j|X_j) = \sum_{k \neq j} \V\{g_2(X_k, X_j)|X_j\} = (n-1) \E\{g_2^2(X_k, X_j)|X_j\}.
    \]
    Taking the expectation over $X_j$, we obtain $\E\{\V(W_j|X_j)\} = (n-1) \delta_2 = O(n)$.
    
    \item Variance of conditional expectation. Due to the degeneracy property of the canonical kernel $g_2$, see  \black{(S5) in the SM}, the conditional expectation vanishes, leading to
    $\E(W_j|X_j) = \sum_{k \neq j} \E\{g_2(X_k, X_j)|X_j\} = 0.
    $
    Therefore, $\V\{\E(W_j|X_j)\} = \V(0) = 0$.
\end{enumerate}
Combining these results, the total unconditional variance is $\V(W_j) = O(n)$. From Markov's inequality, $W_j = O_{\Prob}(n^{1/2})$. Finally, substituting back into the expression for $D_{n,j}^{(2)}$, the contribution of the second term is
$O_{\Prob}(n^{-3/2})$.
\end{proof}

\begin{proof}[Proof of Theorem \ref{theorem:bootstrap_consistency}]
Let $Z_n^* = \sqrt{n}(\widehat{\theta}^* - \widehat{\theta})$. We aim to show that, conditional on the data $\mathcal{X}$, $Z_n^* \tl \mathcal{N}_{kd}(0, \Sigma)$ in probability. From the Cramér-Wold Theorem (see Section 1.5.2 in \cite{serfling1980approximation}), it suffices to show that for any arbitrary fixed vector $a \in \mathbb{R}^{kd}$, the linear combination $a^\top Z_n^*$ converges in distribution to $\mathcal{N}(0, \sigma_a^2)$, where $\sigma_a^2 = a^\top \Sigma a$. Note that $\sigma_a^2 > 0$ provided $a \neq 0$ and $\Sigma$ is positive definite.

Consider the scalar random variable derived from the projection, $S_n^* := a^\top Z_n^* = \sum_{i=1}^k a_i^\top Z_{n,i}^*,$
where $a = \left[a_1^\top, \dots, a_k^\top\right]^\top$ with $a_i \in \mathbb{R}^d$. Substituting the bootstrap estimator definition into $Z_{n,i}^*$, we can express $S_n^*$ as a sum of independent random variables conditional on $\mathcal{X}$ as
$
S_n^* = \sum_{i=1}^k \sum_{j=1}^{n_i} Y_{ij}^*,
$ 
where $Y_{ij}^* = c_{ij} W_{ij}$, and the scalar coefficients are defined as
\begin{equation}
    c_{ij} = \sqrt{\frac{n}{n_i}} \frac{1}{\sqrt{n_i}} a_i^\top (T_{i,j} - \overline{T}_i).
    \label{eq:cij}
\end{equation}
Conditional on $\mathcal{X}$, the variables $\{Y_{ij}^*\}$ are independent because the bootstrap weights $\{W_{ij}\}$ are i.i.d. with $\mathbb{E}(W_{ij})=0$ and $\V(W_{ij})=1$. Consequently, $\mathbb{E}_*(Y_{ij}^*) = 0$ and  $s_n^2 := \V_*(S_n^*) = \sum_{i=1}^k \sum_{j=1}^{n_i} c_{ij}^2 = a^\top \widehat{\Sigma} a.$
From Lemma \ref{Lemma:varconsist}, we know that $\widehat{\Sigma} \tp \Sigma$, and thus, $s_n^2 \tp \sigma_a^2$.

To establish the asymptotic normality we use the Lindeberg-Feller Central Limit Theorem. We must verify the Lindeberg condition
\[
\ell_n(\epsilon) = \frac{1}{s_n^2} \sum_{i=1}^k \sum_{j=1}^{n_i} \mathbb{E}_*\left[ c_{ij}^2 W_{ij}^2 \mathbb{I}\left\{ c_{ij}^2 W_{ij}^2 > \epsilon s_n^2 \right\} \right] \tp 0, \quad \forall \varepsilon>0,
\]
where \( \mathbb{I}(\cdot) \) denotes the indicator function.
\paragraph{Step 1: Asymptotic approximation of coefficients $c_{ij}$.}
We approximate the pseudo-values using a first-order Taylor expansion of the vector-valued function $f: \mathbb{R}^q \to \mathbb{R}^d$. By definition, $\widehat{\theta}_{i,(-j)} = f(\widehat{\eta}_{i,(-j)})$. Expanding around $\widehat{\eta}_i$, we obtain
$\widehat{\theta}_{i,(-j)} = f(\widehat{\eta}_i) + \nabla f(\widehat{\eta}_i)^\top \left( \widehat{\eta}_{i,(-j)} - \widehat{\eta}_i \right) + O_{\Prob}\left( \left\| \widehat{\eta}_{i,(-j)} - \widehat{\eta}_i \right\|^2 \right),$ where $\nabla f(\widehat{\eta}_i)$ denotes the $q \times d$ matrix of component-wise gradients, i.e., $\nabla f = [\nabla f_1, \dots, \nabla f_d]$, such that $\nabla f(\widehat{\eta}_i)^\top$ represents the $d \times q$ Jacobian matrix. Now from the last equation and Lemma \ref{lemma:etaj-eta_nomax}, we get 
\begin{align*}
    T_{i,j} &= n_i \widehat{\theta}_i - (n_i-1) \widehat{\theta}_{i,(-j)} = f(\widehat{\eta}_i) + m \nabla f(\widehat{\eta}_i)^\top \left\{ h_1(X_{ij}) - \overline{h}_{1,i} \right\} + O_{\Prob}(n_i^{-1/2}).
\end{align*}
Averaging over $j$, the linear term vanishes, so $\overline{T}_i = f(\widehat{\eta}_i) + O_{\Prob}(n_i^{-1/2})$. Therefore, the centered pseudo-values are thus dominated by the linear influence function $T_{i,j} - \overline{T}_i = m \nabla f(\widehat{\eta}_i)^\top \left\{ h_1(X_{ij}) - \overline{h}_{1,i} \right\} + o_{\Prob}(1).$

By substituting the previous expression into \eqref{eq:cij}, we obtain 
\begin{equation}
        c_{ij}^2 = \frac{n m^2}{n_i^2} \left[ a_i^\top \nabla f(\widehat{\eta}_i)^\top \left\{ h_1(X_{ij}) - \overline{h}_{1,i} \right\} \right]^2 + o_{\Prob}(n^{-1}).
        \label{eq:cij2}
\end{equation}

\paragraph{Step 2: Vanishing maximum coefficient.} We first establish that the projection $h_1$ inherits the finite second moment from the kernel $h$. Applying Jensen's inequality componentwise for the conditional expectations with the convex function $\phi(t) = t^2$, we obtain $\left|h_1^{(r)}(X_1)\right|^2 = | \mathbb{E}\left\{h^{(r)}(X_1, \dots, X_m) | X_1\right\} |^2 \le \mathbb{E}\left\{ |h^{(r)}(X_1, \dots, X_m)|^2 | X_1 \right\}$. Taking expectations yields $\mathbb{E}\left\{ |h_1^{(r)}(X_1)|^2 \right\} \le \mathbb{E}\{ |h^{(r)}(X_1, \dots, X_m)|^2 \} < \infty$, $1\leq r \leq q$.

A known probabilistic result states that for i.i.d. random vectors $U_1, \dots, U_n$ with $\mathbb{E}\|U\|^2 < \infty$, the maximum satisfies $\max_{1 \le i \le n} \|U_i\| = o_{\Prob}\left(n^{1/2}\right)$ (see, e.g., Lemma 11.2 in \cite{owen2001empirical}). Consequently, applying this to the centered projection \black{(S2) in the SM}, $\max_{i,j} \left\| \widetilde{h}_1(X_{ij}) \right\|^2 = o_{\Prob}(n)$. To handle the centered term, we use the inequality $\|a-b\|^2 \le 2\|a\|^2 + 2\|b\|^2$, which implies that 
\[\max_{i,j} \left\| h_1(X_{ij}) - \overline{h}_{1,i} \right\|^2 \le 2 \max_{i,j} \left\| \widetilde{h}_1(X_{ij}) \right\|^2 + 2 \max_{i} \left\| \overline{h}_{1,i}-\eta \right\|^2.\] Since $\left\|\overline{h}_{1,i}-\eta\right\| = O_{\Prob}(n^{-1/2})$ from the CLT applied to $h_{1,i}(X_{i1}),\ldots, h_{1,i}(X_{in_i})$, and assuming the number of groups $k$ is fixed, the second term is $O_{\Prob}(n^{-1})$. Thus, the maximum is dominated by the individual observations, yielding $\max_{i,j} \| h_1(X_{ij}) - \overline{h}_{1,i} \|^2 = o_{\Prob}(n)$. Substituting this back into \eqref{eq:cij2} and recalling \eqref{eq:compsamp}, we have that $\max_{i,j} c_{ij}^2 = O(n^{-1})  o_{\Prob}(n) = o_{\Prob}(1).$

\paragraph{Step 3: Verifying Lindeberg's Condition.}

We bound the indicator function by using the maximum coefficient. Specifically, if $c_{ij}^2 W_{ij}^2 > \epsilon s_n^2$, then it must hold that $(\max_{k,l} c_{kl}^2) W_{ij}^2 > \epsilon s_n^2$.
Defining the threshold $M_n = \epsilon s_n^2/(\max_{i,j} c_{ij}^2)$, we have the inequality
$ \mathbb{I}\left\{ c_{ij}^2 W_{ij}^2 > \epsilon s_n^2 \right\} \le \mathbb{I}\left\{ W_{ij}^2 > M_n \right\}.$ Substituting this into the expression of $\ell_n(\epsilon)$
\begin{align*}
    \ell_n(\epsilon) = \frac{1}{s_n^2} \sum_{i,j} c_{ij}^2 \mathbb{E}_*\left[ W_{ij}^2 \mathbb{I}\left\{ c_{ij}^2 W_{ij}^2 > \epsilon s_n^2 \right\} \right] 
    \le \frac{1}{s_n^2} \sum_{i,j} c_{ij}^2 \mathbb{E}_{*}\left[ W_{ij}^2 \mathbb{I}\left\{ W_{ij}^2 > M_n \right\} \right].
\end{align*}
Since the weights are i.i.d., the expectation term $\mathbb{E}_*[ W_{ij}^2 \mathbb{I}\{ W_{ij}^2 > M_n \} ]$ is common to all terms in the sum and can be factored out, yielding $\ell_n(\epsilon) \le \mathbb{E}_*\left\{ W^2_{11} \,\mathbb{I}\left( W_{11}^2 > M_n \right) \right\}.$ As established in Step 2, $\max_{i,j} c_{ij}^2 = o_{\Prob}(1)$ and $s_n^2 \xrightarrow{\Prob} \sigma_a^2 > 0$, which implies that  $M_n \xrightarrow{\Prob} \infty$.
From \eqref{eq:random_weights}, we have that $\mathbb{E}_*(W_{11}^2)=\mathbb{E}(W_{11}^2)=1$. The Dominated Convergence Theorem ensures that $\mathbb{E}_*\left\{ W_{11}^2 \mathbb{I}\left( W_{11}^2 > M_n \right) \right\} \tp  0.$ Thus, $\ell_n(\epsilon) \tp 0$ for any $\epsilon > 0$.

The Lindeberg-Feller CLT implies that, conditional on $\mathcal{X}$, the scalar projection satisfies $S_n^* = a^\top Z_n^* \tl \mathcal{N}(0, \sigma_a^2)$, in probability. Now, since $a \in \mathbb{R}^{kd}$ was arbitrary, the Cramér-Wold device allows us to conclude that  $Z_n^* \tl \mathcal{N}_{kd}(0, \Sigma)$, in probability.

Finally, the convergence of the quadratic form $Q_n^*$ follows from the Continuous Mapping Theorem. Since the limiting distribution is continuous, Polya's Theorem (see Section 1.5.3 in \cite{serfling1980approximation}) ensures that the result holds.
\end{proof}

In order to prove Theorem \ref{theorem:assymptoticdinfty}, we first state the following lemma.

\begin{lemma}\label{lemma:eigenconsistency}
Let $ H\in\mathbb{R}^{kd\times kd}$ be an orthogonal projection matrix, i.e.\ $ H= H^\top= H^2$, 
 let $\Sigma\in\mathbb{R}^{kd\times kd}$ be defined as in~\eqref{definition:sigma}, 
and let $\Sigma_n =  K_n\, \Sigma$, where
$K_n = \operatorname{diag}\left\{\kappa_1 (n/n_1) I_d, \ldots, \kappa_k (n/n_k) I_d\right\}$.
Assume that \eqref{eq:compsamp} and \eqref{eigencontrol} hold. Then,
\begin{enumerate}
    \item $\displaystyle 
    \max_{1\le j\le kd}\big|\lambda_j(H\Sigma_n H)-\lambda_j(H\Sigma H)\big|\ \longrightarrow\ 0.$
    
    \item $M_1\,d \;\le\; \tr\!\left\{\!\left( H  \Sigma_n  H\right)^{2}\!\right\}
    \;\le\; M_2\,d.$
\end{enumerate}
\end{lemma}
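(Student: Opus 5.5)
The plan is to treat $\Sigma_n$ as a small multiplicative perturbation of $\Sigma$. Then part 1 follows from Weyl's perturbation theorem, and part 2 follows from the eigenvalue control \eqref{eigencontrol} together with a rank count.

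For part 1, write $\Sigma_n = K_n\Sigma$. Since both $K_n$ and $\Sigma$ are block diagonal with scalar multiples of $I_d$ in each block of $K_n$, we have
\[
\Sigma_n=\operatorname{diag}\{(n/n_1)\Sigma_1,\ldots,(n/n_k)\Sigma_k\},
\]
which is symmetric. Hence $H\Sigma_nH$ and $H\Sigma H$ are both symmetric, and Weyl's theorem (Corollary III.2.6 in \cite{bhatia2013matrix}, already used in Subsection \ref{Imhof}) gives
\[
\max_j|\lambda_j(H\Sigma_nH)-\lambda_j(H\Sigma H)|\le \|H(\Sigma_n-\Sigma)H\|\le \|\Sigma_n-\Sigma\|.
\]
Next, $\Sigma_n-\Sigma=(K_n-I)\Sigma$, so $\|\Sigma_n-\Sigma\|\le \max_i|\kappa_i n/n_i-1|\,\|\Sigma\|$. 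The first factor tends to $0$ by \eqref{eq:compsamp}, since $k$ is fixed.

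The main obstacle is that $d$ grows, so a bound on $\|\Sigma\|$ uniform in $d$ is needed, whereas \eqref{eigencontrol} only controls $H\Sigma H$. I will avoid bounding $\|\Sigma\|$ by working with the relative perturbation instead. Write $H\Sigma_nH = H\Sigma^{1/2}(\Sigma^{1/2}K_n\Sigma^{1/2})\Sigma^{1/2}H$ after symmetrizing; since $K_n$ commutes blockwise with $\Sigma$, this equals $A(I+E_n)A^\top$ with $A=H\Sigma^{1/2}$ and $E_n=K_n-I$. Then
\[
\|H\Sigma_nH-H\Sigma H\|=\|AE_nA^\top\|\le \|E_n\|\,\|AA^\top\|=\|E_n\|\,\lambda_1(H\Sigma H)\le M_2\|E_n\|\to 0.
\]
Weyl's theorem then gives part 1.

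For part 2, the matrix $H\Sigma_nH$ is positive semidefinite. By the relative bound just obtained, $(1-\|E_n\|)H\Sigma H\preceq H\Sigma_nH\preceq(1+\|E_n\|)H\Sigma H$ in Loewner order. Therefore its nonzero eigenvalues lie in $[M_1/2,\,2M_2]$ for large $n$, and its rank equals $\operatorname{rank}(H\Sigma H)$. That rank is at most $\operatorname{rank}(H)=(k-1)d$. It is also at least $(k-1)d$: if it were smaller, some nonzero vector in the range of $H$ would satisfy $v^\top H\Sigma Hv=0$, which the paper implicitly excludes by taking $\Sigma>0$ (or reads into \eqref{eigencontrol} as a statement about the range of $H$). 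I will state this rank assumption explicitly. Consequently
\[
\tr\{(H\Sigma_nH)^2\}=\sum_j\lambda_j^2\in[(k-1)d\,M_1^2/4,\;(k-1)d\,4M_2^2],
\]
which is $M_1d\le\cdot\le M_2d$ with renamed constants, since $k$ is fixed.
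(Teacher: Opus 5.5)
Your argument is correct, granted the rank hypothesis you state explicitly. It takes a different and more careful route than the paper in both parts.

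For part 1, the paper uses an additive perturbation. It bounds $\|H(K_n-I_{kd})\Sigma H\|$ by $\|K_n-I_{kd}\|\,\lambda_1(\Sigma H)$ ``by submultiplicativity'' and then uses $\lambda_1(\Sigma H)=\lambda_1(H\Sigma H)$. Strictly, submultiplicativity only gives the operator norm $\|\Sigma H\|$ of a non-symmetric matrix, and for a general projection this is not controlled by $\lambda_1(H\Sigma H)$. Your congruence $H\Sigma_nH=A(I+E_n)A^\top$ with $A=H\Sigma^{1/2}$ exploits that $K_n$ is scalar on each block and so commutes with $\Sigma^{1/2}$. It gives exactly $\|E_n\|\,\lambda_1(H\Sigma H)$ for any orthogonal projection $H$, so it supplies the justification the paper's inequality needs. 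The paper's additive route is shorter and does work for the specific $H=P_k\otimes I_d$, so your ``main obstacle'' can be bypassed there. Evaluate the quadratic form of $H\Sigma H$ at $e_i\otimes x$ with $e_i\in\R^k$, $x\in\R^d$. This gives $\lambda_1(\Sigma_i/\kappa_i)\le\{k/(k-1)\}^2\lambda_1(H\Sigma H)$, hence $\|\Sigma\|\le 4M_2$, and your first, discarded bound already suffices.

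For part 2, the paper deduces $\lambda^+_{\min}(H\Sigma_nH)\ge\lambda^+_{\min}(H\Sigma H)+o(1)$ from uniform eigenvalue convergence, which tacitly requires the two ranks to agree. It then sums $kd$ copies of $(M_1+o(1))^2$, although $H\Sigma_nH$ has at least $d$ zero eigenvalues. Your Loewner sandwich preserves the rank automatically and makes the count of nonzero eigenvalues explicit. You also correctly see that the lower bound needs $\operatorname{rank}(H\Sigma H)$ of order $d$, which \eqref{eigencontrol} alone cannot provide: a rank-one $H\Sigma H$ satisfies it. The upper bound needs no such assumption.

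One correction to how you justify that hypothesis: it is not supplied by $\Sigma>0$ where the lemma is actually used. In Theorem~\ref{theorem:assymptoticdinfty}, $q$ is fixed and $\Sigma_i=D_iV_iD_i^\top$ has rank at most $q$. So there $\operatorname{rank}(H\Sigma H)\le kq$, and $\tr\{(H\Sigma_nH)^2\}\le 4kqM_2^2$ for large $n$. The rank condition you add is therefore a genuinely new assumption, not a consequence of anything in the paper. It is attainable only if $q$ grows at least proportionally to $d$.
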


\begin{proof}
\begin{enumerate}
\item
By construction, $\Sigma_n- \Sigma=( K_n- I_{kd})\, \Sigma$
and
  $H( \Sigma_n- \Sigma) H
= H( K_n- I_{kd})\, \Sigma\, H.$
Recall that for any conformable matrices $A$ and $B$, the spectral norm satisfies the submultiplicative property (see e.g. Result 4.65 in \cite{seber2008matrix})
\begin{equation}
    \|AB\| \le \|A\| \|B\|.
    \label{eq:submult}
\end{equation}
By using \eqref{eq:submult}, we obtain $\| H( \Sigma_n- \Sigma) H\|
\le \| K_n- I_{kd}\| \lambda_1(\Sigma H).$
Since $\Sigma H= \Sigma H H$,  $\lambda_1(\Sigma H H)= \lambda_1(H \Sigma H)$ is bounded from \eqref{eigencontrol}, and $\| K_n- I_{kd}\|=\max_i|\kappa_i n/n_i -1|\to0$ from~\eqref{eq:compsamp},
we obtain $\| H( \Sigma_n- \Sigma) H\|\to 0$.
Applying Weyl’s Perturbation Theorem (see e.g. Corollary~III.2.6 in~\cite{bhatia2013matrix}) yields the claim.

\item
Since $ H \Sigma_n H$ is symmetric and positive semidefinite,
its eigenvalues satisfy $\lambda_j\!\left\{\!\left( H \Sigma_n  H\right)^2\!\right\}
= \lambda_j^2\!\left( H \Sigma_n  H\right)$, $
 1\le j\le kd.$ From Lemma \ref{lemma:eigenconsistency} part 1, $\lambda_j( H \Sigma_n H)\to\lambda_j( H \Sigma H)$ uniformly in~$j$.
Hence, 
\[
\lambda_{\min}^{+}( H \Sigma_n H)
\ge \lambda_{\min}^{+}( H \Sigma H) + o(1),
\qquad
\lambda_1( H \Sigma_n  H)
\le \lambda_1( H \Sigma H) + o(1),
\]
and from \eqref{eigencontrol}, we obtain $kd\,\big(M_1 + o(1)\big)^2
\ \le\
\tr\!\left\{\!\left( H \Sigma_n  H\right)^2\!\right\}
=\sum_{j=1}^{kd}\lambda_j^2( H \Sigma_n H)
\ \le\
kd\,\big(M_2 + o(1)\big)^2,$
which proves the result.
\end{enumerate}
\end{proof}

The next lemma is the heterogeneously distributed version of Lemma 1 of \cite{PENG201822}.
 \begin{lemma}\label{lem:4thmoment-trace}
Let $\{X_s\}_{s=1}^{n}\subset\mathbb{R}^{kd}$ be independent, centered random vectors with
$\Sigma_s=\V(X_s)$ and $\E\|X_s\|_2^4<\infty$, $\forall s$. Define $S_k=\sum_{s=1}^{k}X_s$.
Then,
\begin{align}
\V\left(\|S_k\|_2^2\right) 
&=  \sum_{s=1}^k \V\left(\|X_s\|_2^2\right) + 4 \sum_{1\leq r < s \leq k} \operatorname{tr}\left(\Sigma_s \Sigma_r\right).
\label{eq:vars2}
\end{align}
In addition,
\begin{equation*}
\V\left(\sum_{k=1}^m \|S_k\|_2^2 \right) \leq  m^2 \sum_{s=1}^m \V\left(\|X_s\|_2^2\right) +4m^2  \sum_{1\leq r < s \leq m} \operatorname{tr}\left(\Sigma_s \Sigma_r\right).
\end{equation*}
\end{lemma}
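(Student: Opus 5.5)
Write $\|S_k\|_2^2=\sum_{s=1}^k\|X_s\|_2^2+2\sum_{1\le r<s\le k}X_r^\top X_s$. The identity \eqref{eq:vars2} then follows by computing the variance of this sum term by term, using only independence and centering. We will do this in three steps.

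\textbf{Step 1 (the diagonal block).} By independence, the $\|X_s\|_2^2$ are independent, so their sum has variance $\sum_s\V(\|X_s\|_2^2)$. These quantities are finite because $\E\|X_s\|_2^4<\infty$.

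\textbf{Step 2 (the cross terms).} For $r<s$, the term $X_r^\top X_s$ has mean zero. For two pairs $(r,s)$ and $(r',s')$, we have
\[
\E(X_r^\top X_s\,X_{r'}^\top X_{s'})=0
\]
unless $\{r,s\}=\{r',s'\}$. The reason is that some index then appears only once, and conditioning on the remaining vectors kills the expectation by centering. When the pairs coincide,
\[
\E\{(X_r^\top X_s)^2\}=\E\{X_s^\top X_rX_r^\top X_s\}=\E\{X_s^\top\Sigma_rX_s\}=\tr(\Sigma_r\Sigma_s).
\]
So the cross part $2\sum_{r<s}X_r^\top X_s$ has variance $4\sum_{r<s}\tr(\Sigma_r\Sigma_s)$.

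\textbf{Step 3 (mixed covariances).} We show that $\C(\|X_t\|_2^2,X_r^\top X_s)=0$ for $r\neq s$. At least one of $r,s$, say $s$, differs from $t$. Conditioning on all vectors other than $X_s$ and using $\E X_s=0$ gives $\E(\|X_t\|_2^2X_r^\top X_s)=0$, and the product of the means is also zero. Cauchy--Schwarz together with the fourth moments guarantees that all the expectations in Steps 2 and 3 are finite. Adding the three pieces gives \eqref{eq:vars2}.

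For the second inequality, we use the elementary bound $\V(\sum_{k=1}^m Y_k)\le m\sum_{k=1}^m\V(Y_k)$. This is Cauchy--Schwarz, or equivalently $(\sum_k a_k)^2\le m\sum_k a_k^2$ applied to the centered variables. We apply it with $Y_k=\|S_k\|_2^2$. We then insert \eqref{eq:vars2} for each $k\le m$. All summands are nonnegative, since $\tr(\Sigma_s\Sigma_r)=\tr(\Sigma_s^{1/2}\Sigma_r\Sigma_s^{1/2})\ge0$. Hence each partial sum up to $k$ is bounded by the full sum up to $m$. Summing over the $m$ values of $k$ contributes a further factor $m$, which yields the stated $m^2$ bound.

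The only delicate point is the bookkeeping of the fourth-order cross moments in Step 2, that is, making sure no other index patterns contribute. Everything else is routine.
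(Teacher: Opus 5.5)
Your proof is correct, and for the identity \eqref{eq:vars2} it is essentially the paper's argument. Both use independence and centering to kill every term in which some index occurs only once, and $\E\{(X_r^\top X_s)^2\}=\tr(\Sigma_r\Sigma_s)$ is the only surviving cross moment. The difference there is organizational. You split $\|S_k\|_2^2$ into a diagonal part and a cross part and show they are uncorrelated. The paper instead expands $\E\|S_k\|_2^4$ as a quadruple sum and subtracts $(\E\|S_k\|_2^2)^2$, so that the terms $2\sum_{r<s}\tr(\Sigma_r)\tr(\Sigma_s)$ cancel. Your version never produces those terms.

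The genuinely different step is the second inequality. The paper uses the independent-increment structure of the partial sums, $\C(\|S_r\|_2^2,\|S_s\|_2^2)=\V(\|S_{\min\{r,s\}}\|_2^2)$. This gives the exact formula $\V(\sum_{k=1}^m\|S_k\|_2^2)=\sum_{k=1}^m(2m-2k+1)\V(\|S_k\|_2^2)$. The paper then bounds each $\V(\|S_k\|_2^2)$ by the full sum up to $m$ and uses $\sum_{k=1}^m(2m-2k+1)=m^2$.

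You instead use the generic bound $\V(\sum_k Y_k)\le m\sum_k\V(Y_k)$. It needs no covariance computation and no structure among the $Y_k$, and it reaches the same constant $m^2$, so nothing is lost for the stated bound. The paper's route buys an exact expression, which is useful only if one wants to track how $\V(\|S_k\|_2^2)$ grows with $k$. Your explicit remark that $\tr(\Sigma_s\Sigma_r)\ge 0$ is what justifies replacing each partial sum by the full sum up to $m$, and the paper leaves this implicit.
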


\begin{proof}
The proof follows an analogous reasoning to that of the iid case. First, note that
\begin{equation}
\E\left(\|X_s\|^2_2\right) 
= \E\left(X_s^{\top} X_s \right)
= \E\left\{\tr\left(X_s \, X_s^{\top}\right)\right\} 
= \tr \left\{\E\left(X_s \, X_s^{\top}\right)\right\} 
= \tr\left(\Sigma_s\right)
\label{eq:EX2}
\end{equation}
and 
\begin{align}
\E\left\{(X_s^\top X_r)^2\right\}
&= \E\left\{\operatorname{tr}\left(X_s^\top X_r X_r^\top X_s\right)\right\} 
= \E\left\{\operatorname{tr}\left(X_s X_s^\top X_r X_r^\top\right)\right\} \nonumber\\
&= \operatorname{tr}\left\{\E\left(X_s X_s^\top X_r X_r^\top\right)\right\} 
= \operatorname{tr}\left(\Sigma_s \Sigma_r\right),
\label{eq:EXrXs}
\end{align}
where we have applied the cyclic property of the trace. Then,
\begin{equation}
\E\left(\|S_k\|^2_2\right)=\sum_{r=1}^k\sum_{s=1}^k \E(X_r^{\top} X_s)=\sum_{s=1}^k \E\|X_s\|^2_2=\sum_{s=1}^k \tr\left(\Sigma_s\right), 
    \label{eq:ES2}
\end{equation}
where we have applied \eqref{eq:EX2}. Similarly,
\begin{align}
\E\left(\|S_k\|^4_2\right)
&= \sum_{r=1}^k \sum_{s=1}^k \sum_{l=1}^k \sum_{t=1}^k 
    \E(X_r^{\top} X_s \, X_l^{\top} X_t) \nonumber\\
&= \sum_{s=1}^k \E\left(\|X_s\|_2^4\right)
   + 2 \sum_{1 \le r < s \le k} \E(\|X_r\|_2^2)\, \E(\|X_s\|_2^2) + 4 \sum_{1 \le r < s \le k} \E \{(X_s^\top X_r)^2\} \nonumber\\
&= \sum_{s=1}^k \E\left(\|X_s\|_2^4\right)
   + 2 \sum_{1 \le r < s \le k} \operatorname{tr}(\Sigma_r)\operatorname{tr}(\Sigma_s) + 4 \sum_{1 \le r < s \le k} \operatorname{tr}(\Sigma_s \Sigma_r),
\label{eq:ES4}
\end{align}
where we have applied \eqref{eq:EX2} and \eqref{eq:EXrXs}. With this, from \eqref{eq:ES2} and \eqref{eq:ES4}, we get that
\begin{align*}
\V\!\left(\|S_k\|_2^2\right)
&= \E\left(\|S_k\|_2^4\right)
   - \left\{\E\left(\|S_k\|_2^2\right)\right\}^2 
= \sum_{s=1}^k \E\left(\|X_s\|_2^4\right)
   + 4 \sum_{1 \le r < s \le k} \operatorname{tr}(\Sigma_s \Sigma_r)
   - \sum_{s=1}^k \left\{\tr(\Sigma_s)\right\}^2 \\
&= \sum_{s=1}^k \E\left(\|X_s\|_2^4\right)
   - \sum_{s=1}^k \left\{\E\left(\|X_s\|_2^2\right)\right\}^2
   + 4 \sum_{1 \le r < s \le k} \operatorname{tr}(\Sigma_s \Sigma_r),
\end{align*}
which proves \eqref{eq:vars2}. Finally, note that the covariance of $\|S_r\|^2_2$ and $\|S_s\|^2_2$ is the variance of $\|S_{\min\{r,s\}}\|^2_2$. Therefore,
\[\V\left(\sum_{k=1}^m \|S_k\|_2^2 \right)=\sum_{k=1}^m \V\left(\|S_k\|_2^2 \right)+2 \sum_{1 \le r < s \le m}\C\left(\|S_r\|^2_2,\|S_s\|^2_2\right)=\sum_{k=1}^m \V\left(\|S_k\|_2^2 \right) \left\{1+2(m-k)\right\}.\]
Applying \eqref{eq:vars2}, one gets the desired result.
\end{proof}

\begin{proof}[Proof of Theorem \ref{theorem:assymptoticdinfty}]
We divide the proof into several steps.

\paragraph{Step 1: Quadratic form decomposition}
Following a similar reasoning to that in the proof of Theorem 1 in Section 3.2.1 of \cite{Lee1990ustatistics}, we can express
\[\sqrt{n}(\widehat{\eta_i}-\eta_i)=\sqrt{\frac{n}{n_i}}\frac{1}{\sqrt{n_i}}M_h\sum_{j=1}^{n_i}\widetilde{h}_1 (X_{ij}) +  R^H_{i}:= L^{H}_i + R^{H}_i, \quad 1\leq i \leq k,\]
where $M_h$ is defined in \eqref{eq:Mh}. From \eqref{ass:finitenessh2}, \eqref{ass:bounzeta}, \eqref{eq:compsamp} we have that  $L^{H}_i=O_{\Prob}(1)$ and $R^{H}_i=O_{\Prob}(n^{-1/2})$. In addition, $\E(L^{H}_{i})=0$ and $\V(L^H_{i})=(n/n_i) V_i$, where $V_i$ is defined in \eqref{eq:definitionvi}.
With this, applying a Taylor expansion of $f$ at $\eta_i$,
\[
\sqrt{n}\,(\widehat{\theta}_i-\theta_i)
= D_i\,(L^H_i+R^H_i) + R^T_i= L_i +R_i, \; \text{ with } L_i=D_i L_i^H \, \text{ and } \, R_i=D_i R_i^H + R_i^T, \quad 1\le i\le k.
\]
Notice that each component of $R^H_i$ is $O_{\mathbb{P}}(n^{-1/2})$. The fact that $(\widehat{\eta_i}-\eta_i)=O_{\Prob}(n^{-1/2})$ componentwise and that the gradient of $f$ at $\eta_i$ exists, gives that $R_i^T=O_{\Prob}(n^{-1/2})$.
Stacking the blocks $\sqrt{n}(\widehat{\theta}_1-\theta_1), \ldots, \sqrt{n}(\widehat{\theta}_k-\theta_k)$, we can write
$ Z_n=\sqrt{n}(\widehat{\theta}-\theta)=L+R,
    $
where $L=\left[L_1^{\top},\ldots,L_{k}^{\top}\right]^{\top} \in \R^{kd}$, with
$\E(L)=0$ and 
$\V(L)=\operatorname{diag}\left(\frac{n}{n_1}\Sigma_1, \ldots, \frac{n}{n_k}\Sigma_k\right)= K_n  \Sigma=\Sigma_n,$ where $\Sigma$ is defined in \eqref{definition:sigma}, $ K_n$  is as defined in Lemma \ref{lemma:eigenconsistency} and $R=\left[R_1^{\top},\ldots,R_k^{\top}\right]^{\top}\in \R^{kd}$, with 
\begin{equation}
    \|R\|^2_2=O_{\Prob}(d/n).
    \label{eq:orderR}
\end{equation}
Then, under $H_0$ we have that $Q_n=n \widehat{\theta}^{\top} H \widehat{\theta}= Z_n^{\top} H Z_n =  \|H L\|_2^2 + \|H R\|_2^2 + 2 (HL)^{\top} (H R).$

\paragraph{Step 2: Vanishing terms} 
We can express the statistic as
\begin{equation}
    \frac{Q_n-\mu_n}{\sigma_n}=\frac{\|H L\|_2^2-\mu_n}{\sigma_n}+\frac{\|H R\|_2^2}{\sigma_n} +2 \frac{(HL)^{\top} (H R)} {\sigma_n}.
    \label{eq:statinfdimQn}
\end{equation}
Note that we can write  
\begin{equation}
    \mu_n = \operatorname{tr}\!\big(H \Sigma_n\big)
      = \operatorname{tr}\!\big( H \Sigma_n H\big)
      = \E\!\left(\| H L\|_2^2\right),
      \label{eq:mun}
\end{equation} 
and
\begin{equation}
    \sigma_n^2 = 2\,\operatorname{tr}\!\big\{( \Sigma_n H)^2\big\}
      = 2\,\operatorname{tr}\!\big\{(H\Sigma_n H)^2\big\}
      = 2\,\tr\!\left\{\V^2\!\left(HL\right)\right\}.
\label{eq:sigman}
\end{equation}
Now, by applying \black{Lemma \ref{lemma:eigenconsistency}} we get
\begin{equation}
    M_1 \, d \leq \sigma^2_n \leq M_2 \, d.
    \label{eq:ordersigma}
\end{equation}
Recall that for any conformable matrix $A$ and vector $x$
(see e.g. Result 4.73 (b) in \cite{seber2008matrix}),
\begin{equation}
    \|Ax\|_2 \le \|A\| \|x\|_2.
    \label{eq:normcompat}
\end{equation}
By using \eqref{eq:orderR}, \eqref{eq:ordersigma} and \eqref{eq:normcompat}, we can write
\begin{equation}
    \frac{\|H R\|_2^2}{\sigma_n}\leq M_1\frac{\|H\|^2 \|R\|_2^2}{\sqrt{d}}=M_1\|R\|_2^2/\sqrt{d}=O_{\Prob}\left(\sqrt{d}/n\right) \to 0,
    \label{eq:hr2to0}
\end{equation}
where we have used that $\|H\|=1$ because $H$ is idempotent and  $d/n \to 0$. For the cross-term, using the Cauchy--Schwarz inequality, from \eqref{eq:orderR} and \eqref{eq:ordersigma} we have that
\begin{equation}
    \frac{\big|(HL)^{\!\top} (HR)\big|}{\sigma_n}
    \;\le\; M_1 \frac{\|HL\|_2\,\|R\|_2}{\sqrt{d}}
    \;=\;
    O_{\mathbb{P}}\!\left\{\sqrt{\E\!\left(\|H L\|_2^2\right)}\right\}
    O_{\mathbb{P}}\!\left(\sqrt{d/n}\right) O\!\left(1/\sqrt{d}\right)
    = O_{\mathbb{P}}\!\left(\sqrt{d/n}\right) \to 0,
    \label{eq:hlhrto0}
\end{equation}
where we used \eqref{eq:mun}, 
together with Markov's inequality and that $d/n \to 0$.  
The order  $\sqrt{\operatorname{tr}(H \Sigma_n H)}=O\left(\sqrt{d}\right)$ follows from \eqref{eigencontrol} and the definition of trace.  Finally, from \eqref{eq:statinfdimQn}, \eqref{eq:hr2to0} and \eqref{eq:hlhrto0}, 
\[ \frac{Q_n-\mu_n}{\sigma_n}=\frac{\|H L\|_2^2-\mu_n}{\sigma_n}+o_{\Prob}(1).\]

\paragraph{Step 3: Structure of the linear part.}

Considering the previous equation, it remains to prove that 
\((\| H L\|_2^2-\mu_n)/\sigma_n \tl \mathcal{N}(0,1)\).
For each population \(1\le i\le k\) and each observation \(1\le j\le n_i\), we define
\[
\xi_{n,(i,j)}
=
\Bigl[
0,\ldots,0,\;
\underbrace{\bigl\{\tfrac{n}{\,n_i\,}\,D_i\,M_h\,\widetilde{h}_1(X_{ij})\bigr\}^{\!\top}}_{\in\mathbb{R}^d\ \text{(block $i$)}},\;
0,\ldots,0
\Bigr]^{\!\top}
\in\mathbb{R}^{kd},
\qquad
\xi^H_{n,(i,j)} = H\,\xi_{n,(i,j)}.
\]
 Then
\begin{equation}
     L = \frac{1}{\sqrt n}\sum_{i=1}^k\sum_{j=1}^{n_i}\xi_{n,(i,j)},
\quad \text{ and} \quad
H  L = \frac{1}{\sqrt n}\sum_{i=1}^k\sum_{j=1}^{n_i}\xi^H_{n,(i,j)}.
\label{eq:LHLxi}
\end{equation}
The random vectors \(\{\xi_{n,(i,j)}:1\le i\le k,\ 1\le j\le n_i\}\) are independent; they are not identically distributed in general (only within the same group \(i\)). Let  \(\Xi_{(i,j)} = \V\left(\xi_{n,(i,j)}\right)\), then recalling the definition of $\Sigma_i$ in \eqref{eq:definitionsigmai}, we can write 
\[
 \Xi_{(i,j)}
= \left(\frac{n}{n_i}\right)^{\!2}
\operatorname{diag}\left(
0_{d}, \dots, 0_{d}, 
\underbrace{\Sigma_i}_{\text{$i$-th position}}, 
0_{d}, \dots, 0_{d}
\right),
\qquad 1\le i\le k,\;\; 1\le j\le n_i,
\] 
where $0_{d} \in \R^{d\times d}$ denotes the matrix of zeros.
Since $\Xi_{(i,j)}$ is block--diagonal, its eigenvalues are the union of the eigenvalues of its diagonal blocks, consisting of $(k-1)d$ zeros and the eigenvalues of $\Sigma_i$ scaled by $(n/n_i)^2$.  
Moreover, since the matrix $\Sigma$ defined in \eqref{definition:sigma} is also block--diagonal, an analogous argument shows that, under the upper bound in \eqref{eigencontrol} and
\eqref{eq:compsamp}, we have $\lambda_{1}(\Xi_{(i,j)}) \le M$. Likewise, let $\Xi^H_{(i,j)} = \V\left(\xi^H_{n,(i,j)}\right)= H \Xi_{(i,j)} H$. Using the submultiplicative property \eqref{eq:submult} and the fact that 
\begin{equation}
    \|A\|=\lambda_1(A)
    \label{eq:operatorPSD}
\end{equation}
 for every symmetric matrix $A\geq 0$ (see e.g. Result 4.66(b) in \cite{seber2008matrix}), it follows that $\lambda_{1}\left(\Xi^{H}_{(i,j)}\right) \le M$, for $1\le i\le k$, $1\le j\le n_i$.
Equivalently, we can reindex \((i,j)\mapsto l\in\{1,\dots,n\}\) and write \(\xi^H_{n,l}\) and \(\Xi^H_l=\V\left(\xi^H_{n,l}\right)\), $1\leq l \leq n$, and then 
\begin{equation}
    \lambda_{1}\left(\Xi^{H}_{l}\right) \le M, \qquad 1\le l\le n.
    \label{eq:eigencontrolXi}
\end{equation}
With this notation we have that $     H \Sigma_n  H=\V( H L)=n^{-1}\sum_{l=1}^n \Xi^H_l,$
and then, the quantities $\mu_n$ and $\sigma_n$ in \eqref{eq:mun} and \eqref{eq:sigman} respectively, can be rewritten as  
\begin{equation}
    \mu_n=\operatorname{tr}\!\left(\frac{1}{n}\sum_{l=1}^n \Xi^H_l\right), \quad \text{ and } \quad \sigma_n^2=2\,\operatorname{tr}\!\left\{\left(\frac{1}{n}\sum_{l=1}^n \Xi^H_l\right)^2\right\}.
    \label{eq:munsigmanxi}
\end{equation}
Finally, from \eqref{eq:LHLxi} we can write $(\|H L\|_2^2-\mu_n)/{\sigma_n}= Q_{n,1} + Q_{n,2},$
where
\[
Q_{n,1} = \frac{1}{n}\frac{\sum_{l=1}^n\left\{ \|\xi^H_{n,l}\|_2^2-\tr(\Xi^H_l)\right\}}{\sigma_n},
\qquad
Q_{n,2} = \frac{2}{n}\sum_{1\le s<t\le n}\frac{\left(\xi^H_{n,s}\right)^\top\xi^H_{n,t}}{\sigma_n}.
\]

\paragraph{Step 4: Showing that $Q_{n,1} \tp 0$ and applying martingale CLT to $Q_{n,2}$}
Note that from the independence of $\xi_{n,1},\ldots,\xi_{n,n}$, and \eqref{eq:ordersigma}, we have that
\begin{equation}
    \E\left(Q^2_{n,1}\right)= \frac{1}{n^2 \sigma_n^2} \sum_{j=1}^n \V(\|\xi^H_{n,j}\|_2^2) \leq \frac{M}{n^2 d} \sum_{j=1}^n \E(\|\xi^H_{n,j}\|_2^4).
    \label{eq:Qn2}
\end{equation}
Now, recalling \eqref{eq:normcompat} and that $\| H\|=1$, we have that
$\E(\|\xi^H_{n,j}\|_2^4)\leq \E(\|\xi_{n,j}\|_2^4)  \leq d \, \sum_{r=1}^d   \E \left|\xi_{n,j}^{(r)}\right|^4$, $ 1\leq j \leq n$, 
where the last inequality follows from the Cauchy--Schwarz inequality for vectors in $\mathbb{R}^d$. At this point, note that \eqref{eq:compsamp} and \eqref{eq:boundh4} imply that
\begin{equation*}
    \max_{1\leq r \leq kd} \E \left|\xi^{(r)}_{n,j}\right|^4 < M, \quad  1\leq j \leq n,
\end{equation*}
and thus,
\begin{equation}
    \E(\|\xi^H_{n,j}\|_2^4)\leq Md^2, \quad 1\leq j \leq n.
    \label{eq:orderxi4}
\end{equation}
From \eqref{eq:orderxi4} and \eqref{eq:Qn2} it follows that $\mathbb{E}\bigl(Q_{n,1}^2\bigr)=O\!\left(d/n\right)\to 0$ and
hence, by Markov’s inequality, \(Q_{n,1}\tp0\).

To finish the proof it suffices to establish the asymptotic normality of \(Q_{n,2}\). Defining
\begin{equation*}
    W_0=0,\qquad
W_l=\frac{2}{n\sigma_n}\sum_{j=1}^l\xi^H_{n,j},\quad 1\le l\le n,
\end{equation*}
we can write $Q_{n,2}=\sum_{l=1}^n W_{l-1}^\top\,\xi^H_{n,l}.$ Since we have expressed $Q_{n,2}$ as a martingale, we apply the martingale central limit theorem (see part (a) of Theorem 2.5 of \cite{Helland1982}). Let $\mathcal F_{l}=\sigma(\xi_{n,1},\ldots,\xi_{n,l})$, we have to verify
\begin{equation}
    \E\left(W^{\top}_{l-1}\xi^H_{n,l}\,\Big|\,\mathcal{F}_{l-1}\right)=0, \quad l\geq 1, \, n\geq 1,
    \label{eq:mart1}
\end{equation}

\begin{equation}
    \sum_{l=1}^n\E\left\{\left(W^{\top}_{l-1}\xi^H_{n,l}\right)^2\,\Big|\,\mathcal{F}_{l-1}\right\}\tp 1,
    \label{eq:mart2}
\end{equation}
and
\begin{equation}
L_n(\varepsilon)=\sum_{l=1}^n\E\left\{\left(W^{\top}_{l-1}\xi^H_{n,l}\right)^2 \mathbb{I}\left(\left|W^{\top}_{l-1}\xi^H_{n,l}\right|>\varepsilon\right)\,\Big|\,\mathcal{F}_{l-1}\right\}\tp 0, \quad \forall \epsilon>0.
    \label{eq:mart3}
\end{equation}

First, equation \eqref{eq:mart1} is readily satisfied since $\xi_{n,1},\ldots,\xi_{n,n}$ are independent and centered.

Now, let $S_n=\sum_{l=1}^n Y_{n,l},$ where $Y_{n,l}
=\mathbb{E}\left\{
\left(W_{l-1}^{\top}\xi^H_{n,l}\right)^2
\,\middle|\,\mathcal{F}_{l-1}
\right\}$,
$1\leq l\leq n$. Since $\xi_{n,l}$ is independent of $\xi_{n,1},\ldots,\xi_{n,l-1}$ and is centered, we have
$ 
Y_{n,l}
=
\mathbb{E}\left\{
\left(W_{l-1}^{\top}\xi^H_{n,l}\right)^2
\,\middle|\,\mathcal{F}_{l-1}
\right\}  
=
W_{l-1}^{\top}
\mathbb{E}\left(\xi^H_{n,l}(\xi^H_{n,l})^{\top}\right)
W_{l-1}
=
W_{l-1}^{\top}\Xi_l^H W_{l-1}.
$ 
Therefore,
\begin{align}
S_n
=
\sum_{l=1}^n W_{l-1}^{\top}\Xi_l^H W_{l-1} 
=
\frac{4}{n^2\sigma_n^2}
\sum_{l=1}^n
\left(
\sum_{s<l}\xi^H_{n,s}
\right)^{\top}
\Xi_l^H
\left(
\sum_{s<l}\xi^H_{n,s}
\right) 
=
\frac{4}{n^2\sigma_n^2}
\sum_{l=1}^n
\left\| \sum_{s=1}^{l-1}
(\Xi_l^H)^{1/2}\xi^H_{n,s}
\right\|_2^2.
\label{eq:Sn-expression}
\end{align}
Taking expectations, and noting that the vector
$(\Xi_l^H)^{1/2}\xi^H_{n,s}$ is centered and has covariance matrix $(\Xi_l^H)^{1/2}\Xi_s^H(\Xi_l^H)^{1/2},$ we get that
\begin{align*}
\mathbb{E}(S_n)
&=
\frac{4}{n^2\sigma_n^2}
\sum_{l=1}^n
\sum_{s<l}
\operatorname{tr}
\left\{
(\Xi_l^H)^{1/2}\Xi_s^H(\Xi_l^H)^{1/2}
\right\} \notag \\
&=
\frac{4}{n^2\sigma_n^2}
\sum_{l=1}^n
\sum_{s<l}
\operatorname{tr}\left(\Xi_l^H\Xi_s^H\right) =
\frac{2}{n^2\sigma_n^2}
\left[
\operatorname{tr}
\left\{
\left(\sum_{t=1}^n \Xi_t^H\right)^2
\right\}
-
\sum_{t=1}^n
\operatorname{tr}
\left\{
(\Xi_t^H)^2
\right\}
\right],
\end{align*}
where we have used the cyclic property of the trace. Recalling expression \eqref{eq:munsigmanxi} for $\sigma_n^2$, we obtain
\begin{equation}
\mathbb{E}(S_n)
=
1-
\frac{2}{n^2\sigma_n^2}
\sum_{t=1}^n
\operatorname{tr}
\left\{
(\Xi_t^H)^2
\right\}.
\label{eq:ESn}
\end{equation}
Now, note that for two symmetric matrices $A,B \geq 0$, from Von Neumann Trace's Theorem (see e.g. Theorem 7.4.1.1 in \cite{Horn_Johnson_1985}), it follows that
\begin{equation}
    \tr(AB)\ \le\ \|A\|\,\tr(B)
= \lambda_{1}(A)\,\tr(B).
\label{eq:ltrace-op}
\end{equation}
Then, since each $\Xi_t^H$ is symmetric and positive semidefinite, from \eqref{eq:ltrace-op} it follows that
$
\operatorname{tr}\left\{(\Xi_t^H)^2\right\}
\leq
\|\Xi_t^H\|\operatorname{tr}(\Xi_t^H)
=
\lambda_1(\Xi_t^H)\operatorname{tr}(\Xi_t^H)
\leq
\lambda_1^2(\Xi_t^H)d
\leq Md,
$
where the last inequality follows from \eqref{eq:eigencontrolXi}. Hence, $0\leq
2/(n^2\sigma_n^2)
\sum_{t=1}^n
\operatorname{tr}
\left\{
(\Xi_t^H)^2
\right\}
\leq
M/n
\to 0,$ and thus, from \eqref{eq:ESn}, it follows that $\mathbb{E}(S_n)\to 1.$ With this, to prove \eqref{eq:mart2}, we show that $\V(S_n)\to 0$, and applying Markov's inequality we get the result. Recall expression \eqref{eq:Sn-expression} for $S_n$ and that the random vector
$(\Xi_l^H)^{1/2}\xi^H_{n,s}$ is centered and has covariance matrix$(\Xi_l^H)^{1/2}\Xi_s^H(\Xi_l^H)^{1/2}.$ By applying Lemma~\ref{lem:4thmoment-trace} to these random vectors, we obtain
\begin{align}
\V(S_n)
&\leq
\frac{16}{n^4\sigma_n^4}
\bigg[
4
\sum_{1\leq r<s\leq n}
\operatorname{tr}
\left\{
(\Xi_l^H)^{1/2}
\Xi_s^H
\Xi_l^H
\Xi_r^H
(\Xi_l^H)^{1/2}
\right\}+
\sum_{s=1}^n
\V
\left(
\left\|
(\Xi_l^H)^{1/2}\xi^H_{n,s}
\right\|_2^2
\right)
\bigg].
\label{eq:var-Sn-bound}
\end{align}
From the cyclic property of the trace, \eqref{eq:eigencontrolXi} and \eqref{eq:ltrace-op}, we have that
\begin{align}
\operatorname{tr}
\left\{
(\Xi_l^H)^{1/2}
\Xi_s^H
\Xi_l^H
\Xi_r^H
(\Xi_l^H)^{1/2}
\right\}
=
\operatorname{tr}
\left\{
\Xi_l^H
\Xi_s^H
\Xi_l^H
\Xi_r^H
\right\} \leq
\|\Xi_l^H\|^2
\|\Xi_r^H\|
\operatorname{tr}(\Xi_s^H)
\leq Md.
\label{eq:trace-four-bound}
\end{align}
On the other hand, from \eqref{eq:normcompat}, \eqref{eq:eigencontrolXi} and \eqref{eq:orderxi4}, it follows that
\begin{align}
\V
\left(
\left\|
(\Xi_l^H)^{1/2}\xi^H_{n,s}
\right\|_2^2
\right)
\leq
\mathbb{E}
\left(
\left\|
(\Xi_l^H)^{1/2}\xi^H_{n,s}
\right\|_2^4
\right) 
\leq
\|\Xi_l^H\|^2
\mathbb{E}\left(\|\xi^H_{n,s}\|_2^4\right)
\leq Md^2.
\label{eq:variance-quadratic-bound}
\end{align}
Introducing \eqref{eq:trace-four-bound} and \eqref{eq:variance-quadratic-bound} into
\eqref{eq:var-Sn-bound}, and recalling \eqref{eq:ordersigma}, we obtain  that $\V(S_n)
\leq
M\left(1/d+1/n\right)
\to0,$ and since $\mathbb{E}(S_n)\to1$, we conclude that
\eqref{eq:mart2} holds.

Finally, we verify \eqref{eq:mart3}. 
From the fact that $x^2\mathbb{I}\{|x|>\varepsilon\}\leq x^4/\varepsilon^2$ for $x>0$, it follows that
$\E\{L_n(\varepsilon)\}\leq \varepsilon^{-2}T_n$, where
\begin{equation}
    T_n=
\sum_{l=1}^n\E\{(W_{l-1}^{\top}\xi^H_{n,l})^4\}
=
\frac{16}{n^4\sigma_n^4}
\sum_{l=2}^n
\E\left\{
\left(
\sum_{j=1}^{l-1}
(\xi^H_{n,j})^{\top}\xi^H_{n,l}
\right)^4
\right\}.
\label{eq:Tn}
\end{equation}
Thus, it suffices to prove $T_n\to0$.
Let $a_j=(\xi^H_{n,j})^{\top}\xi^H_{n,l},$ for fixed $l=1,\ldots, n$ and $ j=1,\ldots,l-1.$
Since the variables $\{a_{lj}\}_{j<l}$ are centered,
$\mathbb{E}
\left\{
\left(\sum_{j=1}^{l-1}a_j\right)^4
\right\}
=
\sum_{j=1}^{l-1}\mathbb{E}(a_j^4)
+
6\sum_{1\leq j<r\leq l-1}
\mathbb{E}(a_j^2a_r^2).
$ 
Consequently, from \eqref{eq:Tn} it follows that
\begin{equation}
\mathbb{E}(T_n)
=
\frac{16}{n^4\sigma_n^4}
\sum_{l=2}^n
\left\{
\sum_{j=1}^{l-1}\mathbb{E}(a_j^4)
+
6\sum_{1\leq j<r\leq l-1}
\mathbb{E}(a_j^2a_r^2)
\right\}.
\label{eq:ETn}
\end{equation}
Now, using independence, for $j\neq r$ we have that
\begin{align}
\mathbb{E}(a_j^2a_r^2)
&=
\mathbb{E}
\left\{
\mathbb{E}
\left(
a_j^2
\,\middle|\,
\xi^H_{n,l}
\right)
\mathbb{E}
\left(
a_r^2
\,\middle|\,
\xi^H_{n,l}
\right)
\right\}.
\label{eq:ajaj}
\end{align}
For $s=1,\ldots,l-1$, 
$\mathbb{E}
\left(
a_s^2
\,\middle|\,
\xi^H_{n,l}
\right)
=
\mathbb{E}
\left[
\left\{
(\xi^H_{n,s})^{\top}\xi^H_{n,l}
\right\}^2
\,\middle|\,
\xi^H_{n,l}
\right] =
(\xi^H_{n,l})^{\top}
\Xi_s^H
\xi^H_{n,l}
=
\left\|
(\Xi_s^H)^{1/2}\xi^H_{n,l}
\right\|_2^2.$
Substituting this into \eqref{eq:ajaj}, and recalling \eqref{eq:normcompat}, \eqref{eq:eigencontrolXi} and \eqref{eq:orderxi4}
\begin{align}
\mathbb{E}(a_j^2a_r^2)
=
\mathbb{E}
\left(
\left\|
(\Xi_j^H)^{1/2}\xi^H_{n,l}
\right\|_2^2
\left\|
(\Xi_r^H)^{1/2}\xi^H_{n,l}
\right\|_2^2
\right) \leq
\|\Xi_j^H\|
\|\Xi_r^H\|
\mathbb{E}\left(\|\xi^H_{n,l}\|_2^4\right)
\leq Md^2.
\label{eq:ajj-bound}
\end{align}
For the fourth moments, from the Cauchy--Schwarz inequality, independence and \eqref{eq:orderxi4}, we can write
\begin{align}
\mathbb{E}(a_j^4)=
\mathbb{E}
\left[
\left\{
(\xi^H_{n,j})^{\top}\xi^H_{n,l}
\right\}^4
\right] \leq
\mathbb{E}
\left(
\|\xi^H_{n,j}\|_2^4
\|\xi^H_{n,l}\|_2^4
\right)
=
\mathbb{E}\left(\|\xi^H_{n,j}\|_2^4\right)
\mathbb{E}\left(\|\xi^H_{n,l}\|_2^4\right)
\leq Md^4.
\label{eq:aj-four-bound}
\end{align}
Introducing \eqref{eq:ajj-bound} and \eqref{eq:aj-four-bound} into \eqref{eq:ETn}, and recalling \eqref{eq:ordersigma}, we obtain
\begin{align*}
\mathbb{E}(T_n)\leq
\frac{M}{n^4d^2}
\sum_{l=2}^n
\left\{
(l-1)d^4
+
3(l-1)(l-2)d^2
\right\}  \leq
\frac{M}{n^4d^2}
\left(
n^2d^4+n^3d^2
\right)
=
M\left(
\frac{d^2}{n^2}
+
\frac{1}{n}
\right)
\to0,
\end{align*}
which proves \eqref{eq:mart3} and completes the proof.
\end{proof}

Next  we give  a result on the convergence ratio of the terms involved in Lemma \ref{Lemma:varconsist}.
\begin{lemma}
\label{lemma:widehatsigma-sigma}
Let $\Sigma_i=\left(\sigma_{i, ab}\right)$ and $\widehat{\Sigma}_i=\Big(\widehat{\sigma}_{i, ab}\Big)$, $1 \leq a,b \leq d$, where $\Sigma_i$ and $\widehat{\Sigma}_i$ are defined in \eqref{eq:definitionsigmai} and \eqref{eq:WidehatSigma}, respectively. Suppose that  \eqref{eq:finitinessh4} and the conditions in Lemma \ref{Lemma:varconsist} hold, and that $f$ has continuous second partial derivatives in a neighborhood of $\eta_i$. Then $\left|\widehat{\sigma}_{i,ab}-\sigma_{i,ab}\right|=O_{\Prob}\left(n^{-1/2}\right),$ $1\leq a,b \leq d$, $ 1 \leq i \leq k.$
\end{lemma}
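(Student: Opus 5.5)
We reuse the decomposition from the proof of Lemma \ref{Lemma:varconsist}. Dropping the index $i$, the difference $\widehat{\sigma}_{ab}$ splits into the principal term \eqref{eq:type1} and the remainder terms \eqref{eq:type2}--\eqref{eq:type5}. The plan is to improve each ``$o_{\Prob}(1)$'' statement in that proof to an $O_{\Prob}(n^{-1/2})$ rate. This uses two extra ingredients: the fourth-moment condition \eqref{eq:finitinessh4}, which gives a CLT rate for the auxiliary $U$-statistics, and the continuous second derivatives of $f$, which give a Lipschitz bound on $\nabla f$ near $\eta$.

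\textbf{Principal term.} Write \eqref{eq:type1} as $\sum_{l,t} Y_{l,t} f^{(a)}_l(\eta) f^{(b)}_t(\eta)$, with $Y_{l,t}$ given by \eqref{eq:type1dev}. Since $\sigma_{ab}=\sum_{l,t} f^{(a)}_l(\eta)f^{(b)}_t(\eta)m_lm_t\E\{\widetilde h^{(l)}_1\widetilde h^{(t)}_1\}$, it suffices to show $Y_{l,t}-m_lm_t\E(K_1^{(l,t)})=O_{\Prob}(n^{-1/2})$. This has three parts:
\begin{enumerate}
\item The coefficients of $U_c^{(l,t)}$ for $c\ge 2$ in \eqref{eq:type1dev} are $O(n^{-1})$. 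The coefficient in \eqref{eq:coefficientmlmt} equals $m_lm_t+O(n^{-1})$; this follows from expanding the binomial ratios, and the $c=0$ term is likewise $O(n^{-1})$. So all these pieces contribute $O(n^{-1})U_c^{(l,t)}=O_{\Prob}(n^{-1})$.
\item The kernel $K_1^{(l,t)}$ is a symmetrized product $\widetilde h^{(l)}\widetilde h^{(t)}$. By Cauchy--Schwarz and \eqref{eq:finitinessh4}, $\E\{(K_1^{(l,t)})^2\}\le \E^{1/2}|\widetilde h^{(l)}|^4\,\E^{1/2}|\widetilde h^{(t)}|^4<\infty$. The variance formula for $U$-statistics (S3 in the SM) then gives $\V(U_1^{(l,t)})=O(n^{-1})$. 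Chebyshev's inequality yields $U_1^{(l,t)}-\E(K_1^{(l,t)})=O_{\Prob}(n^{-1/2})$.
\item Since $q$ is fixed, summing over $l,t$ keeps the same rate.
\end{enumerate}

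\textbf{Remainder terms.} Each of \eqref{eq:type2}--\eqref{eq:type5} was bounded in the proof of Lemma \ref{Lemma:varconsist} (e.g.\ \eqref{eq:inequality1}, \eqref{eq:inequality4}) by
\[
(n-1)\sum_j\|\widehat\eta_{-j}-\widehat\eta\|_2^2 \times (\text{a product of } \|\nabla f^{(a)}(\eta)\|_2 \text{ and factors } \max_j\|\nabla f^{(b)}(\tau_j)-\nabla f^{(b)}(\eta)\|_2).
\]
The first factor is $O_{\Prob}(1)$ by the principal-term analysis. Because $f$ has continuous second partial derivatives on a neighbourhood of $\eta$, the gradient $\nabla f^{(b)}$ is Lipschitz there with some constant $M$. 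By \eqref{eq:maxtau-eta}, all $\tau_j$ lie in that neighbourhood with probability tending to one. Therefore $\max_j\|\nabla f^{(b)}(\tau_j)-\nabla f^{(b)}(\eta)\|_2\le M\max_j\|\tau_j-\eta\|_2=O_{\Prob}(n^{-1/2})$, again by \eqref{eq:maxtau-eta}, which rests on Lemma \ref{lemma:etaj-eta}. Terms \eqref{eq:type2}--\eqref{eq:type3} carry one such factor and are $O_{\Prob}(n^{-1/2})$. Terms \eqref{eq:type4}--\eqref{eq:type5} carry two and are $O_{\Prob}(n^{-1})$.

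Combining the two parts gives $|\widehat\sigma_{ab}-\sigma_{ab}|=O_{\Prob}(n^{-1/2})$ for each $i$, $a$, $b$. The main obstacle is the principal term: one must check carefully that the binomial coefficient in \eqref{eq:coefficientmlmt} converges to $m_lm_t$ at rate $O(n^{-1})$, not merely converges, and that the fourth-moment assumption really controls the variance of $U_1^{(l,t)}$. If the variance route is awkward, I would instead apply the $U$-statistic CLT to $U_1^{(l,t)}$, which requires exactly a finite second moment of $K_1^{(l,t)}$.
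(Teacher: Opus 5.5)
Your route is the paper's: the same split of $\widehat\sigma_{ab}$ into \eqref{eq:type1}--\eqref{eq:type5}, and the same Lipschitz bound on $\nabla f^{(r)}$ from the second derivatives. That bound gives $O_{\Prob}(n^{-1/2})$ for \eqref{eq:type2}--\eqref{eq:type3} and $O_{\Prob}(n^{-1})$ for \eqref{eq:type4}--\eqref{eq:type5}. You also use the fourth-moment condition to get $U^{(l,t)}_1-\E(K^{(l,t)}_1)=O_{\Prob}(n^{-1/2})$, as the paper does. You are right to insist that the coefficient in \eqref{eq:coefficientmlmt} be $m_lm_t+O(n^{-1})$, which the paper uses only implicitly.

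One step, however, is false as written: the $c=0$ term of \eqref{eq:type1dev} does \emph{not} have an $O(n^{-1})$ coefficient. Its coefficient is $-(n-1)n^{-1}\binom{n-1}{m_l}^{-1}\binom{n-1}{m_t}^{-1}m_lm_t\binom{n}{m_l}\binom{n-m_l}{m_t}\to -m_lm_t$. For $m_l=m_t=1$ it equals $-1$ exactly, and \eqref{eq:type1dev} reduces to $Y_{l,t}=U^{(l,t)}_1-U^{(l,t)}_0$. For $h(x)=x$ this is the usual identity $s^2=n^{-1}\sum_j(X_j-\mu)^2-\{n(n-1)\}^{-1}\sum_{i\neq j}(X_i-\mu)(X_j-\mu)$. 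Hence ``$O(n^{-1})U^{(l,t)}_0=O_{\Prob}(n^{-1})$'' is wrong. Knowing only $U^{(l,t)}_0=O_{\Prob}(1)$ would leave an $O_{\Prob}(1)$ error, which destroys the rate.

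What is needed is a property of $U^{(l,t)}_0$ itself. For $c=0$, the two factors in \eqref{eq:Kc} are the centered kernels $\widetilde h^{(l)}$ and $\widetilde h^{(t)}$ evaluated on disjoint sets of observations. (This is harmless: $Y_{l,t}$ is unchanged if the kernels are shifted by constants.) By independence, $\E(K^{(l,t)}_0)=0$. Moreover, conditioning on any single argument yields $\widetilde h^{(l)}_1(x)\cdot 0$ or $0\cdot \widetilde h^{(t)}_1(x)$, so $K^{(l,t)}_0$ is degenerate. Since $\E\{(K^{(l,t)}_0)^2\}\le \E\{|\widetilde h^{(l)}|^2\}\,\E\{|\widetilde h^{(t)}|^2\}<\infty$, the $U$-statistic variance formula gives $\V(U^{(l,t)}_0)=O(n^{-2})$, i.e.\ $U^{(l,t)}_0=O_{\Prob}(n^{-1})$. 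Even the cruder $O_{\Prob}(n^{-1/2})$, which follows from $\E(K^{(l,t)}_0)=0$ alone, would suffice for this lemma.

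With this repair the $c=0$ contribution is $O_{\Prob}(n^{-1})$, your treatment of $c=1$ and $c\ge 2$ stands, and the rest of your argument goes through. The paper's own ``$+O_{\Prob}(n^{-1})$'' in the expression for \eqref{eq:type1} is correct precisely for this reason, although it does not say so.
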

\begin{proof}
First, for simplicity we drop the subscript $i$ so that $\Sigma$, $n$, $\theta$ and $\eta$ stand for a generic $\Sigma_i$, $n_i$, $\theta_i$ and $\eta_i$, respectively. Then, we will show that $\widehat{\sigma}_{ab}=\sigma_{ab}+O_{\Prob}\left(n^{-1/2}\right)$.
In order to prove this lemma, we make use of the calculations in the proof of Lemma \ref{Lemma:varconsist}. In that proof we have shown that $\widehat{\sigma}_{ab}$
can be written as a sum of terms of types \eqref{eq:type1}–\eqref{eq:type5}. First, note that since $f^{(a)}$ and $f^{(b)}$ have second derivatives in a neighborhood of $\eta$, it follows that $\left\|\nabla\left\{ f^{(r)}(\tau_j) - f^{(r)}(\eta) \right\}\right\|_2 \leq M \|\tau_j-\eta\|_2,$ $r=a,b$.
Using the same arguments as in the proof of Lemma \ref{Lemma:varconsist}, and recalling \eqref{eq:maxtau-eta} together with inequalities \eqref{eq:inequality1}–\eqref{eq:inequality4}, we obtain that the terms in \eqref{eq:type2} and \eqref{eq:type3} are $O_{\Prob}(n^{-1/2})$, and those in \eqref{eq:type4} and \eqref{eq:type5} are $O_{\Prob}\left(n^{-1}\right)$.

To complete the proof we must see that the term in \eqref{eq:type1} equals $\sigma_{ab}+O_{\Prob}(n^{-1/2})$. To show this, from \eqref{eq:type1sums} and \eqref{eq:YltSLNN}, we can express \eqref{eq:type1} as 
\[\begin{aligned}
\sum_{l=1}^q \sum_{t=1}^q \left\{(n-1)  \binom{n-1}{m_l}^{-1} \binom{n-1}{m_t}^{-1}  \binom{n}{m_l} \binom{m_l}{1} \binom{n-m_l}{m_t-1} U^{(l,t)}_1 \, f^{(a)}_l(\eta) f^{(b)}_t(\eta) \right\} +O_{\Prob}\left(n^{-1}\right),
\end{aligned}\]
where, as shown in \eqref{eq:Kc}, the $U$-statistic $U^{(l,t)}_1$ has symmetric kernel $K^{(l,t)}_1$, defined in \eqref{eq:Kc}.
with $\E\left(K^{(l,t)}_1\right)=\E\left\{\widetilde{h}_1^{(l)}(X_1)\widetilde{h}_1^{(t)}(X_1)\right\}$. 
With this, recalling \eqref{eq:coefficientmlmt}, it suffices to check that $U_1^{(l,t)}$ satisfies $U$-statistics CLT \eqref{eq:assymptheta}, which holds if $\V\left(K_1^{(l,t)}\right)<\infty$.
To do so, note that if $X_1, \ldots, X_n$ are real-valued random variables with finite variances, and  
$S = \sum_{i=1}^n X_i$, the variance of $S$ satisfies the upper bound
$\V(S) \le \left\{\sum_{i=1}^n \sqrt{\V(X_i)}\right\}^{\!2}.$
Applying this inequality to $\V\left(K_1^{(l,t)}\right)$, it remains to show that \[\V\left[\left\{\widetilde{h}^{(l)}(X_{\alpha_1},X_{\beta_1},\ldots,X_{\beta_{m_l-1}}) \widetilde{h}^{(t)}(X_{\alpha_1},X_{\gamma_1},\ldots,X_{\gamma_{m_t-1}})\right\}\right]<\infty,\] for all fixed indices $\{l,t,\alpha_1,\beta_1,\ldots,\beta_{m_l-1},\gamma_1,\ldots,\gamma_{m_t-1}\}$. This follows from \eqref{eq:finitinessh4} since
\[\begin{aligned}
    \V&\left[\left\{\widetilde{h}^{(l)}(X_{\alpha_1},X_{\beta_1},\ldots,X_{\beta_{m_l-1}}) \,\widetilde{h}^{(t)}(X_{\alpha_1},X_{\gamma_1},\ldots,X_{\gamma_{m_t-1}})\right\}\right]\\
    &\leq \E\left\{\left|\widetilde{h}^{(l)}(X_{\alpha_1},X_{\beta_1},\ldots,X_{\beta_{m_l-1}}) \,\widetilde{h}^{(t)}(X_{\alpha_1},X_{\gamma_1},\ldots,X_{\gamma_{m_t-1}})\right|^2\right\} \\
    &\leq\E^{1/2}\left\{\left|\widetilde{h}^{(l)}(X_{\alpha_1},X_{\beta_1},\ldots,X_{\beta_{m_l-1}})\right|^4\right\} \E^{1/2}\left\{\left|\widetilde{h}^{(t)}(X_{\alpha_1},X_{\gamma_1},\ldots,X_{\gamma_{m_t-1}})\right|^4\right\},
\end{aligned}\]
where the last inequality follows from the Cauchy--Schwarz inequality.
\end{proof}

\begin{proof}[Proof of Proposition \ref{prop:RatioConsistentHD}]
First, note that 
$(Q_n-\widehat{\mu}_n)/\widehat{\sigma}_n=({\sigma_n}/{\widehat{\sigma}_n}) \left\{({Q_n-\mu_n})/{\sigma_n}+ ({\mu_n-\widehat{\mu}_n})/{\sigma_n}\right\}.$
Therefore, from Theorem \ref{theorem:assymptoticdinfty} and Slutsky's Theorem, it suffices to prove that 
\begin{equation}
    (\widehat{\mu}_n-\mu_n)/\sigma_n \tp 0
    \label{eq:ratiomun-mu}
\end{equation}  and \begin{equation}
    \widehat{\sigma}^2_n/\sigma^2_n \tp 1.
    \label{eq:ratiosigman}
\end{equation}  
First, we prove \eqref{eq:ratiomun-mu}. Since $\| H\|=1$, from Lemma \ref{lemma:widehatsigma-sigma}, \eqref{eq:compsamp}, \eqref{eq:ordersigma} and the definition of trace, it follows that
$$(\widehat{\mu}_n-\mu_n)/{\sigma_n}={\tr\{ (\widehat{\Sigma}-\Sigma_n) H\}}/ {\sigma_n} \leq {\tr (\widehat{\Sigma}-\Sigma_n)} /{\sigma_n}= O_{\Prob}(\sqrt{d/n})=o_{\Prob}(1),$$
and then \eqref{eq:ratiomun-mu} holds.
Finally, we prove \eqref{eq:ratiosigman}. Let $\Delta_n=\widehat{\Sigma}-\Sigma_n$, from Lemma \ref{lemma:widehatsigma-sigma} and \eqref{eq:compsamp}, we have that 
\begin{equation}
    \delta_{n,ij}:=(\Delta_n)_{ij}=O_{\Prob}(n^{-1/2}).
    \label{eq:orderdelta}
\end{equation} With this, recalling that $\| H\|=1$, \eqref{eq:ltrace-op} and the cyclic property of the trace
we obtain
\begin{equation*}
{\widehat{\sigma}^2_n}/{\sigma^2_n}={\tr\{(\widehat{\Sigma} H)^2\}}/{\sigma^2_n}  \leq 1 + {2 \tr(\Sigma_n \Delta_n)}/{\sigma_n^2} + {\tr(\Delta_n^2)}/{\sigma_n^2}.
\end{equation*}
On the one hand, recalling   \eqref{eigencontrol}, \eqref{eq:ltrace-op} and \eqref{eq:operatorPSD}, we have that
\begin{equation*}
   {2 \tr(\Sigma_n \Delta_n)}/{\sigma_n^2} \leq \|\Sigma_n\| {2\tr(\Delta_n)}/{\sigma_n^2} \leq M {\tr(\Delta_n)}/{\sigma_n^2}=O_{\Prob}(n^{-1/2}),
\end{equation*}
where the last equality follows from \eqref{eq:ordersigma}, \eqref{eq:orderdelta} and the definition of trace. On the other hand, by noting that $\tr\left(\Delta^2_n\right)=\sum_{i=1}^{kd} \sum_{j=1}^{kd} \delta_{n,ij} \delta_{n,ji},$ from \eqref{eq:ordersigma} and \eqref{eq:orderdelta}, we get that $\tr(\Delta_n^2)/\sigma_n^2=O_{\Prob}(d/n).$ 
Therefore, 
we obtain \eqref{eq:ratiosigman} and the proof is complete. 
\end{proof}

\begin{remark}
\textcolor{black}{The asymptotic results established in Theorem
\ref{theorem:assymptoticdinfty} and Proposition
\ref{prop:RatioConsistentHD} are derived under the regime $d\to\infty$ and
$d/n\to0$. These assumptions are used at several points of their proofs:}
\begin{itemize} \itemsep=0pt
    \item \textcolor{black}{Asymptotic Normality (Theorem \ref{theorem:assymptoticdinfty}): The $d/n \to 0$ regime is required at three critical stages of the proof. First, in Step 2, to ensure that the higher-order error terms ($R$) from the Hoeffding decomposition and Taylor expansion are asymptotically negligible (see \eqref{eq:hr2to0} and \eqref{eq:hlhrto0}). Second, in Step 4, to guarantee that the diagonal term $Q_{n,1}$ vanishes in probability, since $\E(Q_{n,1}^2) = O(d/n)$. 
    Third, after writing $Q_{n,2}$ as a martingale array, $d\to\infty$ is used to verify the conditional variance condition \eqref{eq:mart2}, while 
    $d/n\to0$ is required to verify the conditional Lindeberg condition \eqref{eq:mart3}.}
    
    \item \textcolor{black}{Covariance Estimation (Proposition \ref{prop:RatioConsistentHD}):  The estimation of squared trace terms introduces an accumulation of noise over $d$ dimensions, bounded by $\tr(\Delta_n^2)/\sigma_n^2 = O_{\Prob}(d/n)$. 
    If $d/n \not\to 0$, this noise does not vanish, artificially inflating the denominator of the feasible ATS-ID.}
\end{itemize}
\end{remark}

\section*{Acknowledgements}

The authors thank two anonymous referees for their constructive comments and suggestions which helped to improve the presentation. This research has been financed by research project PID2022-137818OB-I00 (Ministerio de Ciencia, Innovación y Universidades, Spain). The authors thank IMUS-Maria de Maeztu grant CEX2024-001517-M - Apoyo a Unidades de Excelencia María de Maeztu for supporting this research, funded by MICIU/AEI/10.13039/501100011033.

\bibliographystyle{apalike}
\bibliography{bibliografia.bib}

\end{document}